\newcommand{\arxiv}[1]{\href{http://arxiv.org/abs/#1}{arXiv:#1}}
\newcommand*{\mailto}[1]{\href{mailto:#1}{\nolinkurl{#1}}}
\newtheorem{theorem}{Theorem}[section]
\newtheorem{lemma}[theorem]{Lemma}
\newtheorem{corollary}[theorem]{Corollary}
\newtheorem{remark}[theorem]{Remark}
\newtheorem{hypothesis}[theorem]{Hypothesis}
\newcommand{\R}{{\mathbb R}}
\newcommand{\N}{{\mathbb N}}
\newcommand{\Z}{{\mathbb Z}}
\newcommand{\C}{{\mathbb C}}
\newcommand{\T}{{\mathbb T}}
\newcommand{\nn}{\nonumber}
\newcommand{\be}{\begin{equation}}
\newcommand{\ee}{\end{equation}}
\newcommand{\ol}{\overline}
\newcommand{\ti}{\tilde}
\newcommand{\wti}{\widetilde}
\newcommand{\abs}[1]{\lvert#1 \rvert}
\newcommand{\id}{\mathbb{I}}
\newcommand{\I}{\mathrm{i}}
\newcommand{\E}{\mathrm{e}}
\newcommand{\tr}{\operatorname{tr}}
\newcommand{\ind}{\operatorname{ind}}
\newcommand{\re}{\operatorname{Re}}
\newcommand{\im}{\operatorname{Im}}
\DeclareMathOperator{\res}{Res}
\newcommand{\lz}{\ell^2(\Z)}
\def\Xint#1{\mathchoice
   {\XXint\displaystyle\textstyle{#1}}%
   {\XXint\textstyle\scriptstyle{#1}}%
   {\XXint\scriptstyle\scriptscriptstyle{#1}}%
   {\XXint\scriptscriptstyle\scriptscriptstyle{#1}}%
   \!\int}
\def\XXint#1#2#3{{\setbox0=\hbox{$#1{#2#3}{\int}$}
     \vcenter{\hbox{$#2#3$}}\kern-.5\wd0}}
\def\dashint{\Xint-}
\newcommand{\eps}{\varepsilon}
\newcommand{\sig}{\sigma}
\newcommand{\lam}{\lambda}
\newcommand{\gam}{\gamma}
\numberwithin{equation}{section}
\newcommand{\sigI}{\begin{pmatrix} 0 & 1 \\ 1 & 0 \end{pmatrix}}
\newcommand{\ssigI}{\big(\begin{smallmatrix} 0 & 1 \\ 1 & 0\end{smallmatrix}\big)}
\newcommand{\rI}{\begin{pmatrix}  1 & 1 \end{pmatrix}}
\begin{document}

\title[Long-Time Asymptotics of the Toda Lattice]{Long-Time Asymptotics of the Toda Lattice for Decaying Initial Data Revisited}

\author[H. Kr\"uger]{Helge Kr\"uger}
\address{Department of Mathematics\\ Rice University\\ Houston\\ TX 77005\\ USA}
\email{\mailto{helge.krueger@rice.edu}}
\urladdr{\url{http://math.rice.edu/~hk7/}}

\author[G. Teschl]{Gerald Teschl}
\address{Faculty of Mathematics\\
Nordbergstrasse 15\\ 1090 Wien\\ Austria\\ and International Erwin Schr\"odinger
Institute for Mathematical Physics, Boltzmanngasse 9\\ 1090 Wien\\ Austria}
\email{\mailto{Gerald.Teschl@univie.ac.at}}
\urladdr{\url{http://www.mat.univie.ac.at/~gerald/}}

\thanks{Research supported by the Austrian Science Fund (FWF) under Grant No.\ Y330.}
\thanks{Rev. Math. Phys. {\bf 21:1}, 61--109 (2009)}

\keywords{Riemann--Hilbert problem, Toda lattice, solitons}
\subjclass[2000]{Primary 37K40, 37K45; Secondary 35Q15, 37K10}

\begin{abstract}
The purpose of this article is to give a streamlined and self-contained
treatment of the long-time asymptotics of the
Toda lattice for decaying initial data in the soliton and in the similarity region
via the method of nonlinear steepest descent.
\end{abstract}

\maketitle

\section{Introduction}

The simplest model of a solid is a chain of particles with nearest neighbor interaction.
The Hamiltonian of such a system is given by
\begin{equation}
\mathcal{H}(p,q) = \sum_{n\in\Z} \Big(\frac{p(n,t)^2}{2} + V(q(n+1,t) - q(n,t)) \Big),
\end{equation}
where $q(n,t)$ is the displacement of the $n$-th particle from its equilibrium position,
$p(n,t)$ is its momentum (mass $m=1$), and $V(r)$ is the interaction potential.

Restricting the attention to finitely many particles (e.g., by imposing periodic boundary
conditions) and to the harmonic interaction $V(r)=\frac{r^2}{2}$, the equations of motion form a
linear system of differential equations with constant coefficients. The solution
is then given by a superposition of the associated {\em normal modes}. Around 1950
it was generally believed that a generic nonlinear perturbation would yield to {\em thermalization}.
That is, for any initial condition the energy should eventually be equally distributed
over all normal modes.  In 1955 Enrico Fermi, John Pasta, and Stanislaw Ulam carried out a seemingly
innocent computer experiment at Los Alamos, \cite{fpu}, to investigate the rate of approach to the
equipartition of energy.  However, much to everybody's surprise,
the experiment indicated, instead of the expected thermalization, a quasi-periodic motion
of the system! Many attempts were made to explain this result but it was not until ten years later
that Martin Kruskal and Norman Zabusky, \cite{zakr}, revealed the connections with solitons
(see \cite{dpr} for further historical information and a pedagogical discussion).

This had a big impact on soliton mathematics and led to an explosive growth in the last
decades. In particular, it led to the search for a potential $V(r)$ for which the above
system has soliton solutions. By considering addition formulas for elliptic functions,
Morikazu Toda came up with the choice $V(r) = {\rm e}^{- r} + r -1$.
The corresponding system is now known as the Toda equation, \cite{ta}.

The equation of motion in this case reads explicitly
\begin{align}\nn
\frac{d}{dt} p(n,t) &= -\frac{\partial\mathcal{H}(p,q)}{\partial q(n,t)}
= {\rm e}^{-(q(n,t) - q(n-1,t))} - {\rm e}^{-(q(n+1,t) - q(n,t))},\\
\frac{d}{dt} q(n,t) &= \frac{\partial\mathcal{H}(p,q)}{\partial p(n,t)} = p(n,t).
\end{align}

The important property of the Toda equation is the existence of so called soliton solutions,
that is, pulslike waves  which spread in time without changing their size or shape and interact with
each other in a particle-like way.
This is a surprising phenomenon, since for a generic linear equation one would expect spreading
of waves (dispersion) and for a generic nonlinear force one
would expect that solutions only exist for a finite time (breaking of waves). Obviously our
particular force is such that both phenomena cancel each other giving rise to a stable
wave existing for all time!

In fact, in the simplest case of one soliton, you can easily verify that this solution is given by
\begin{equation}
q_1(n,t) = q_+ + \log\left(\frac{1 + \frac{\gam}{1-\E^{-2\kappa}} \exp(-2\kappa n + 2\sig\sinh(\kappa)t)}{1
+ \frac{\gam}{1-\E^{-2\kappa}} \exp(-2\kappa (n+1) + 2\sig\sinh(\kappa)t)}\right),
\end{equation}
with $\kappa,\gam>0$ and $\sig\in\{\pm1\}$.
\begin{figure}[th]
\centering
\includegraphics[width=6cm]{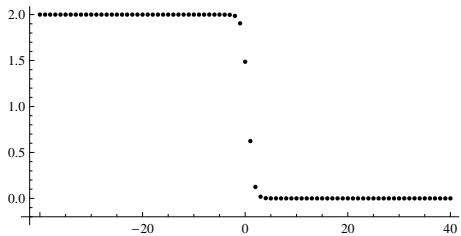}
\caption{One soliton $q_1(n,0)$ with $\kappa=1$, $\gam=1$, and $q_0=0$.}
\end{figure}
It describes a single bump traveling through the crystal with speed
$\sig\sinh(\kappa)/ \kappa$ and width proportional to $1/ \kappa$.
In other words, the smaller the soliton the faster it propagates. It results in
a total displacement $2 \kappa$ of the crystal.

However, this is just the tip of the iceberg and can be generalized to
the $N$-soliton solution
\begin{equation}
q_N(n,t) = q_+ + \log\left(\frac{\det(\id + C_N(n,t))}{\det(\id + C_N(n+1,t))}\right),
\end{equation}
where
\begin{equation}
C_N(n,t) = \left(\frac{\sqrt{\gam_i(n,t) \gam_j(n,t)}}{1-
\mathrm{e}^{-(\kappa_i+\kappa_j)}} \right)_{1\le i,j\le N}, \quad
\gam_j(n,t)= \gam_j \mathrm{e}^{ -2\kappa_j n - 2\sig_j\sinh(\kappa_j) t},
\end{equation}
with $\kappa_j,\gam_j>0$ and $\sig_j\in\{\pm1\}$. The case $N=1$ coincides with
the one soliton solution from above and asymptotically, as $t\to\infty$, the $N$-soliton
solution can be written as a sum of one-soliton solutions.

Historically such {\em solitary waves} were first observed by the naval architect John
Scott Russel \cite{rus}, who followed the bow wave of a barge which moved along a channel
maintaining its speed and size (see the review article \cite{pal} for further
information).

The importance of these solitary waves is that they constitute the stable part of the solutions arising from
arbitrary short range initial conditions and can be used to explain the quasi-periodic
behaviour found by Fermi, Pasta, and Ulam. In fact, the classical result
discovered by Zabusky and Kruskal \cite{zakr} states that every "short range" initial condition
eventually splits into a number of stable solitons and a decaying background radiation component.
\begin{figure}
\includegraphics[width=8cm]{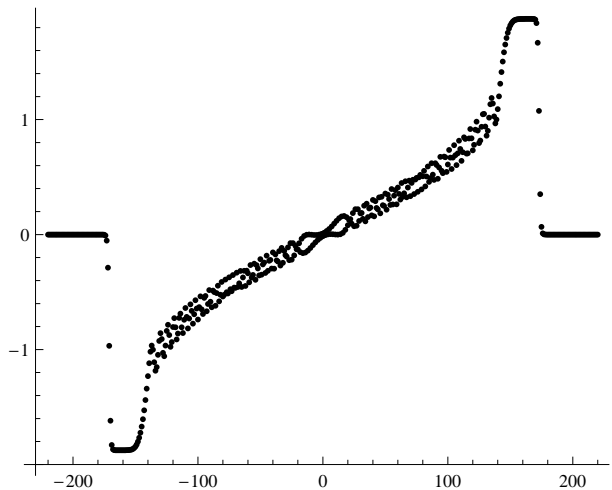}
\caption{Numerically computed solution $q(n,150)$ of the Toda lattice, with initial
condition all particles at rest in their equilibrium positions except for the one in the middle which is
displaced by $1$.} \label{fig1}
\end{figure}
This is illustrated in Figure~\ref{fig1} which shows the numerically computed
solution $q(n,t)$ corresponding to the initial condition $q(n,0)= \delta_{0,n}$, $p(n,0)=
0$ at some large time $t=130$. You can see the soliton region $|\frac{n}{t}|>1$
with two single soliton on the very left respectively right and the similarity region $|\frac{n}{t}|<1$ where
there is a continuous displacement plus some small oscillations which decay like $t^{-1/2}$ and
are asymptotically given by
\begin{equation}
q(n,t) \asymp 2\log(T_0(z_0)) + \left(\frac{2 \nu(z_0)}{-\sin(\theta_0) t}\right)^{1/2} \!
\cos\big(t \Phi_0(z_0) + \nu(z_0)\log(t) - \delta(z_0)\big),
\end{equation}
where $z_0 =\E^{\I\theta_0}$ is a {\em slow variable} depending only on $\frac{n}{t}$ and the functions $T_0(z_0)$,
$\nu(z_0)$, $\Phi_0(z_0)$, and $\delta(z_0)$ are explicitly given in terms of the scattering data associated
with the initial data. Our main goal will be to mathematically justify this formula for the solution
in  the similarity region $|\frac{n}{t}|<1$ (Theorem~\ref{thm:asym2}) and to show that the solution splits into a number of
solitons in the soliton region $|\frac{n}{t}|>1$ (Theorem~\ref{thm:asym}).

Existence of soliton solutions is usually connected to complete integrability of
the system, and this is also true for the Toda equation.
To see that the Toda equation is indeed integrable we introduce Flaschka's variables \cite{fl1}
\begin{equation}
a(n,t) = \frac{1}{2} {\rm e}^{-(q(n+1,t) - q(n,t))/2}, \qquad
b(n,t) = -\frac{1}{2} p(n,t)
\end{equation}
and obtain the form most convenient for us
\begin{align} \nn
\frac{d}{dt} a(t) &= a(t) \Big(b^+(t)-b(t)\Big), \\ \label{todeqfl}
\frac{d}{dt} b(t) &= 2 \Big(a(t)^2-a^-(t)^2\Big).
\end{align}
Here we have used the abbreviation
\begin{equation}
f^\pm(n)= f(n\pm1).
\end{equation}

Note that if $q(n,t)\to q_\pm$ sufficiently fast as $n\to\pm\infty$, the converse map is given by
\begin{equation}
q(n,t) = q_+ + 2\log\left( \prod_{j=n}^\infty (2 a(j,t))\right), \qquad
p(n,t) = -2 b(n,t).
\end{equation}
Moreover, $q(n,t)\to q_\pm$, $p(n,t)\to 0$ as $|n|\to\infty$ corresponds to $a(n,t)\to \frac{1}{2}$, $b(n,t)\to 0$.

To show complete integrability it suffices to find a so-called Lax pair \cite{lax}, that
is, two operators $H(t)$, $P(t)$ in $\lz$ such that the Lax equation
\begin{equation} \label{laxeq}
\frac{d}{dt} H(t) = P(t) H(t) - H(t) P(t)
\end{equation}
is equivalent to \eqref{todeqfl}. One can easily convince oneself that the right
choice is
\begin{align} \nn
H(t) &= a(t) S^+ + a^-(t) S^- + b(t),\\
P(t) &= a(t) S^+ - a^-(t) S^-,
\end{align}
where $(S^\pm f)(n) = f^\pm(n)= f(n\pm1)$ are the shift operators.
Now the Lax equation \eqref{laxeq} implies that the operators $H(t)$ for
different $t\in\R$ are unitarily equivalent (cf.\ \cite[Thm.~12.4]{tjac}):

\begin{theorem}\label{thmunitary}
Let $P(t)$ be a family of bounded skew-adjoint operators, such that $t\mapsto
P(t)$ is differentiable. Then there exists a family of unitary propagators $U(t,s)$
for $P(t)$, that is,
\begin{equation}
\frac{d}{dt} U(t,s) = P(t) U(t,s), \qquad U(s,s)=\id.
\end{equation}
Moreover, the Lax equation \eqref{laxeq} implies
\begin{equation}
H(t)= U(t,s) H(s) U(t,s)^{-1}.
\end{equation}
\end{theorem}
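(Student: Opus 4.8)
The plan is to construct $U(t,s)$ as the solution of the associated linear integral equation, and then to read off both unitarity and the intertwining relation by differentiation, using skew-adjointness and a uniqueness argument at each stage. First I would recast the defining ODE as the Volterra equation
\begin{equation}\nn
U(t,s) = \id + \int_s^t P(\tau) U(\tau,s)\, d\tau
\end{equation}
in the Banach algebra of bounded operators on $\lz$, and solve it by Picard iteration. Since $t\mapsto P(t)$ is differentiable, hence continuous, $M:=\sup_{\tau\in[s,t]}\norm{P(\tau)}$ is finite on every compact interval, and the successive approximations assemble into the Dyson series whose $n$-th term is bounded in norm by $M^n\abs{t-s}^n/n!$. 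Thus the series converges absolutely and uniformly on compact $t$-intervals to a bounded operator $U(t,s)$; differentiating the integral equation (whose integrand is continuous) gives $\frac{d}{dt}U(t,s)=P(t)U(t,s)$ with $U(s,s)=\id$. The same exponential estimate yields uniqueness: the difference of two solutions satisfies a homogeneous Volterra equation and vanishes by Gronwall's inequality.

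Next I would establish unitarity using $P(t)^*=-P(t)$. Taking adjoints in the ODE gives $\frac{d}{dt}U(t,s)^* = U(t,s)^*P(t)^* = -U(t,s)^*P(t)$. Hence
\begin{equation}\nn
\frac{d}{dt}\big(U(t,s)^* U(t,s)\big) = -U^*P\,U + U^*P\,U = 0,
\end{equation}
so $U(t,s)^*U(t,s)=\id$ for all $t$ by the initial condition at $t=s$; this already shows $U(t,s)$ is an isometry. To obtain surjectivity I would \emph{not} construct an explicit inverse but instead observe that $A(t):=U(t,s)U(t,s)^*$ solves $\frac{d}{dt}A = P(t)A - A\,P(t)$ with $A(s)=\id$. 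Since $A\equiv\id$ is a solution of this linear equation with the same initial value, uniqueness forces $U(t,s)U(t,s)^*=\id$ as well, so $U(t,s)$ is unitary with $U(t,s)^{-1}=U(t,s)^*$.

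Finally, for the Lax part I would set $\tilde H(t):=U(t,s)H(s)U(t,s)^{-1}$ and differentiate, using $\frac{d}{dt}U=P\,U$ and $\frac{d}{dt}U^{-1}=\frac{d}{dt}U^*=-U^{-1}P$:
\begin{equation}\nn
\frac{d}{dt}\tilde H(t) = P\,U H(s) U^{-1} - U H(s) U^{-1} P = P(t)\tilde H(t) - \tilde H(t) P(t).
\end{equation}
Thus $\tilde H$ solves the Lax equation \eqref{laxeq} with $\tilde H(s)=H(s)$. Since $H(t)$ solves the very same equation with the same initial data, and the difference $D:=H-\tilde H$ satisfies $\frac{d}{dt}D=P(t)D-D\,P(t)$ with $D(s)=0$, Gronwall gives $D\equiv 0$, i.e.\ $H(t)=U(t,s)H(s)U(t,s)^{-1}$. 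The only genuinely delicate point is the existence/convergence step together with the surjectivity of $U$; I expect the latter to be the main conceptual obstacle, since an isometry need not be unitary, and it is resolved cleanly by the uniqueness argument for $U U^*$ rather than by exhibiting an inverse. Everything else is bookkeeping with the fundamental theorem of calculus and Gronwall's inequality.
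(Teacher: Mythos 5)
The paper itself offers no proof of this theorem; it is quoted from the literature (cf.\ \cite[Thm.~12.4]{tjac}), so there is nothing in the text to compare against line by line. Your argument is correct and is essentially the standard one: solve the Volterra equation by the Picard/Dyson series in the Banach algebra of bounded operators (norm-continuity of $P$, which differentiability provides, is all that is needed), obtain $U^*U=\id$ from $P^*=-P$, and obtain $UU^*=\id$ by noting that $A=UU^*$ and $A\equiv\id$ both solve $\dot A=P(t)A-AP(t)$ with $A(s)=\id$ --- this is indeed the right device for surjectivity, and you correctly identify it as the one point where an isometry argument alone would not suffice. The only place you deviate from the most economical route is the Lax step: instead of differentiating $U(t,s)^{-1}H(t)U(t,s)$ directly, which by \eqref{laxeq} gives
\[
\frac{d}{dt}\Big(U(t,s)^{-1}H(t)U(t,s)\Big)
= U(t,s)^{-1}\big(\dot H(t)-P(t)H(t)+H(t)P(t)\big)U(t,s)=0,
\]
so that $U^{-1}HU\equiv H(s)$ with no further input, you run the argument forward from $\tilde H=UH(s)U^{-1}$ and then need a second uniqueness/Gronwall step to identify $\tilde H$ with $H$. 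That is perfectly valid --- the equation $\dot X=P(t)X-XP(t)$ is linear in $X$ with bounded coefficients, so uniqueness does hold --- it is just one step longer than the direct computation. No gaps.
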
 

This result has several important consequences. First of all it implies global existence
of solutions of the Toda lattice. In fact, considering the Banach space of
all bounded real-valued coefficients $(a(n),b(n))$ (with the sup norm), local existence
follows from standard results for differential equations in Banach spaces. Moreover,
Theorem~\ref{thmunitary} implies that the norm $\|H(t)\|$ is constant, which in turn
provides a uniform bound on the coefficients of $H(t)$,
\begin{equation}
\|a(t)\|_\infty + \|b(t)\|_\infty \le 2\|H(t)\| = 2\|H(0)\|.
\end{equation}
Hence solutions of the Toda lattice cannot blow up and are global in time (see \cite[Sect.~12.2]{tjac}
for details).

Second, it provides an infinite sequence of conservation laws expected from a completely
integrable system. Indeed, if the Lax equation \eqref{laxeq} holds for $H(t)$, it automatically also holds
for $H(t)^j$. Taking traces shows that
\begin{equation}
\tr\big( H(t)^j - H_0^j\big), \quad j\in\N,
\end{equation}
is an infinite sequence of conserved quantities, where $H_0$ is the operator corresponding to
the constant solution $a_0(n,t) = \frac{1}{2}$, $b_0(n,t)=0$ (it is
needed to make the trace converge). Introducing a suitable symplectic structure, they can be shown
to be in involution as well (\cite[Sect.~1.7]{ghmt}). For example,
\begin{align} \nn
\tr\big( H(t) - H_0\big) &= \sum_{n\in\Z} b(n,t) = -\frac{1}{2} \sum_{n\in\Z}
p(n,t) \mbox{ and}\\
\tr\big( H(t)^2 - H_0^2\big) &= \sum_{n\in\Z} b(n,t)^2 + 2(a(n,t)^2 - \frac{1}{4}) =
\frac{1}{2} \mathcal{H}(p,q)
\end{align}
correspond to conservation of the total momentum and the total energy,
respectively.

These observations pave the way for a solution of the Toda equation via
the inverse scattering transform originally invented by Gardner, Green, Kruskal, and Miura  \cite{ggkm}
for the Korteweg--De Vries equation (see \cite[Sect.~13.4]{tjac} for the case of the Toda lattice).
In particular, Theorem~\ref{thmunitary} implies that the operators $H(t)$, $t\in\R$,
are unitarily equivalent and that the spectrum $\sig(H(t))$ is independent of $t$. Now the general idea is to find
suitable spectral data $S(H(t))$ for $H(t)$ which uniquely determine $H(t)$. Then equation
\eqref{laxeq} can be used to derive linear evolution equations for $S(H(t))$ which are easy to solve.
In our case these data will be the so called scattering data and the formal procedure (which can be
thought of as a nonlinear Fourier transform) is summarized below:

\hspace*{3mm}\begin{picture}(12,4.5)
\put(1.25,0){$(a(0),b(0))$}
\put(7.5,0){$(a(t),b(t))$}
\put(2.1,0.5){\vector(0,1){2}}
\put(0.2,1.5){direct}
\put(0.2,1.1){scattering}
\put(8.3,2.5){\vector(0,-1){2}}
\put(8.5,1.5){inverse}
\put(8.5,1.1){scattering}
\put(1.3,3){$S(H(0))$}
\put(7.5,3){$S(H(t))$}
\put(3.5,3.1){\vector(1,0){3.4}}
\put(4,3.3){time evolution}
\end{picture}\\[2mm]
The inverse scattering step will be done by reformulating the problem as a Riemann--Hilbert
factorization problem. This Riemann--Hilbert problem will then be analyzed using the method
of nonlinear steepest descent by Deift and Zhou \cite{dz} (which is the nonlinear analog of the
steepest descent for Fourier type integrals). In fact, one of our goals is to give a complete and
expository introduction to this method.  We are trying to present a streamlined and
simplified approach with complete proofs. In particular, we have added two appendices
which show how to solve the localized Riemann--Hilbert problem on a  small cross via
parabolic cylinder functions and how to rewrite Riemann--Hilbert problems as singular
integral equations. Only some basic knowledge on Riemann--Hilbert problems, which
can be found for example in the beautiful lecture notes by Deift \cite{deiftbook}, is required.

For further information on the history of the steepest descent method, which was inspired by
earlier work of Manakov \cite{ma} and Its \cite{its}, and the problem of finding the long-time
asymptotics for integrable nonlinear wave equations, we refer to the survey by Deift, Its, and
Zhou \cite{diz}.

More information on the Toda lattice can be found in the monographs by Faddeev and Takhtajan \cite{fad},
Gesztesy, Holden, Michor, and Teschl \cite{ghmt}, Teschl \cite{tjac}, or Toda \cite{ta}. Here we partly
followed the review article \cite{taet}. A much more comprehensive guide to the literature can be found in
Section~1.8 of \cite{ghmt}.

First results on the long-time asymptotics of the doubly infinite Toda lattice were given by
Novokshenov and Habibullin \cite{nh} and Kamvissis \cite{km}. Long-time asymptotics for
the finite and semi infinite Toda lattice can be found in Moser \cite{mo} and Deift, Li, and Tomei \cite{dlt}, respectively.
The long-time behaviour of Toda shock problem was investigated by Kamvissis \cite{km2} and Venakides, Deift, and Oba
\cite{vdo} and of the Toda rarefaction problem by Deift, Kamvissis, Kriecherbauer, and Zhou \cite{dkkz}. For the case of
a periodic driving force see Deift, Kriecherbauer, and Venakides \cite{dkv}.

Finally, we also want to mention that one could replace the constant background solution
by a periodic one. However, this case exhibits a much different behaviour, as was pointed
out by Kamvissis and Teschl in \cite{kt} (see also \cite{emtist}, \cite{emtsr}, \cite{kt2}, \cite{kt3},
and \cite{krt2} for a rigorous mathematical treatment).

\section{Main results}

As stated in the introduction, we want to compute the long-time asymptotics for
the doubly infinite Toda lattice which reads in Flaschka's
variables
\be \label{tl}
\aligned
\dot b(n,t) &= 2(a(n,t)^2 -a(n-1,t)^2),\\
\dot a(n,t) &= a(n,t) (b(n+1,t) -b(n,t)),
\endaligned
\ee
$(n,t) \in \Z \times \R$. Here the dot denotes differentiation with
respect to time. We will consider solutions $(a,b)$ satisfying
\be \label{decay}
\sum_n (1+|n|)^{l+1} (|a(n,t) - \frac{1}{2}| + |b(n,t)|) < \infty
\ee
for some $l\in\N$ for one (and hence for all, see \cite{tjac}) $t\in\R$. It is
well-known that the corresponding initial value problem has unique global solutions
which can be computed via the inverse scattering transform \cite{tjac}.

The long-time asymptotics were first derived by Novokshenov and Habibullin \cite{nh} and
were later made rigorous by Kamvissis \cite{km} under the additional assumption that no
solitons are present. The case of solitons was recently investigated by us in \cite{krt}.

As one of our main simplifications in contradistinction to \cite{km} we will work with the vector
Riemann--Hilbert problem which arises naturally from the inverse scattering theory, thus avoiding the detour
over the associated matrix Riemann--Hilbert problem. This also avoids the singularities appearing
in the matrix Riemann--Hilbert problem in case the reflection coefficient is $-1$ at the
band edges.

To state the main results, we begin by recalling that the sequences $a(n,t)$, $b(n,t)$, $n\in\Z$,
for fixed $t\in\R$, are uniquely determined by its scattering data, that is, by its right reflection
coefficient $R_+(z,t)$, $|z|=1$, and its eigenvalues $\lam_j\in(-\infty,-1)\cup(1,\infty)$, $j=1,\dots, N$,
together with the corresponding right norming constants $\gam_{+,j}(t)>0$, $j=1,\dots, N$.
It is well-known that under the assumption \eqref{decay} the reflection coefficients are $C^{l+1}(\T)$.
Rather than in the complex plane, we will work on the unit disc using the usual
Joukowski transformation
\be\label{defzlam}
\lam = \frac{1}{2} \left(z + \frac{1}{z}\right),\quad z= \lam - \sqrt{\lam^2 -1}, \qquad
\lam\in\C, \: |z|\leq 1.
\ee
In these new coordinates the eigenvalues $\lam_j\in(-\infty,-1)\cup(1,\infty)$ will be denoted by
$\zeta_j\in(-1,0)\cup(0,1)$. The continuous spectrum $[-1,1]$ is mapped to the unit circle
$\T$. Moreover, the phase of the associated Riemann--Hilbert problem is given by
\begin{equation} \label{eq:Phi}
\Phi(z)=z-z^{-1}+2 \frac{n}{t} \log(z)
\end{equation}
and the stationary phase points, $\Phi'(z)=0$, are denoted by
\be
z_0=  -\frac{n}{t} - \sqrt{(\frac{n}{t})^2 -1}, \quad z_0^{-1}= -\frac{n}{t} + \sqrt{(\frac{n}{t})^2 -1}
\ee
and correspond to
\be
\lam_0=-\frac{n}{t}.
\ee
Here the branch of the square root is chosen such that $\im(\sqrt{z})\ge 0$.
For $\frac{n}{t}<-1$ we have $z_0\in(0,1)$, for $-1\le \frac{n}{t} \le1$
we have $z_0\in\T$ (and hence $z_0^{-1}=\ol{z_0}$), and for $\frac{n}{t}>1$
we have $z_0\in(-1,0)$. For $|\frac{n}{t}|>1$ we will also need the value
$\zeta_0\in(-1,0)\cup(0,1)$ defined via $\re(\Phi(\zeta_0))=0$, that is,
\be
\frac{n}{t} = -\frac{\zeta_0 - \zeta_0^{-1}}{2\log(|\zeta_0|)}.
\ee
We will set $\zeta_0=-1$ if $|\frac{n}{t}|\le 1$ for notational convenience.
A simple analysis shows that for $\frac{n}{t}<-1$ we have $0<\zeta_0 < z_0 <1$ and
for $\frac{n}{t}>1$ we have $-1<z_0 < \zeta_0 <0$.

Furthermore, recall that the transmission coefficient $T(z)$, $|z|\le 1$, is time independent
and can be reconstructed using the Poisson--Jensen formula. In particular, we define
the partial transmission coefficient with respect to $z_0$ by
\begin{align}\nn
T(z,z_0) &=\\ \label{def:Tzz0}
& \begin{cases}
\prod\limits_{\zeta_k\in(\zeta_0,0)} |\zeta_k| \frac{z-\zeta_k^{-1}}{z-\zeta_k}, & z_0 \in (-1,0), \\
\left(\prod\limits_{\zeta_k\in(-1,0)} |\zeta_k| \frac{z-\zeta_k^{-1}}{z-\zeta_k} \right)
\exp\left(\frac{1}{2\pi\I}\int\limits_{\ol{z_0}}^{z_0}\log(|T(s)|) \frac{s+z}{s-z} \frac{ds}{s}\right), & |z_0| = 1, \\
\left(\prod\limits_{\zeta_k\in(-1,0)\cup(\zeta_0,1)}\!\!\! |\zeta_k| \frac{z-\zeta_k^{-1}}{z-\zeta_k} \right)
\exp\left(\frac{1}{2\pi\I}\int\limits_{\T}\log(|T(s)|) \frac{s+z}{s-z} \frac{ds}{s}\right), & z_0 \in (0,1).
\end{cases}
\end{align}
Here, in the case $z_0\in\T$,  the integral is to be taken along the arc $\Sigma(z_0)= \{z \in\T | \re(z)<\re(z_0)\}$
oriented counterclockwise. For $z_0\in(-1,0)$ we set $\Sigma(z_0)=\emptyset$ and for $z_0\in(0,1)$ we
set $\Sigma(z_0)=\T$. Then $T(z,z_0)$ is meromorphic for $z\in\C\backslash\Sigma(z_0)$.
Observe that $T(z,z_0)=T(z)$ once $z_0\in(0,1)$ and $(0,\zeta_0)$ contains no eigenvalues. Moreover,
$T(z,z_0)$ can be computed in terms of the scattering data since $|T(z)|^2= 1- |R_+(z,t)|^2= 1- |R_+(z,0)|^2$.

Moreover, we set
\begin{align}\nn
T_0(z_0) &= T(0,z_0)\\
&= \begin{cases}
\prod\limits_{\zeta_k\in(\zeta_0,0)} |\zeta_k|^{-1}, & z_0 \in (-1,0),\\
\left(\prod\limits_{\zeta_k\in(-1,0)} |\zeta_k|^{-1} \right)
\exp\left(\frac{1}{2\pi\I}\int\limits_{\ol{z_0}}^{z_0}\log(|T(s)|) \frac{ds}{s}\right), & |z_0| = 1, \\
\left(\prod\limits_{\zeta_k\in(-1,0)\cup(\zeta_0,1)} |\zeta_k|^{-1} \right)
\exp\left(\frac{1}{2\pi \I}\int\limits_{\T}\log(|T(s)|) \frac{ds}{s}\right), & z_0 \in (0,1),
\end{cases}
\end{align}
and
\begin{align} \nn
T_1(z_0) &= \frac{\partial}{\partial z} \log T(z,z_0) \Big|_{z=0}\\
&= \begin{cases}
\sum\limits_{\zeta_k\in(\zeta_0,0)} (\zeta_k^{-1} -\zeta_k), & z_0 \in (-1,0),\\
\sum\limits_{\zeta_k\in(-1,0)} (\zeta_k^{-1} -\zeta_k) +
\frac{1}{\pi\I}\int\limits_{\ol{z_0}}^{z_0}\log(|T(s)|) \frac{ds}{s^2}, & |z_0| = 1, \\
\sum\limits_{\zeta_k\in(-1,0)\cup(\zeta_0,1)} (\zeta_k^{-1} -\zeta_k) +
\frac{1}{\pi \I}\int\limits_{\T}\log(|T(s)|) \frac{ds}{s^2}, & z_0 \in (0,1).
\end{cases}
\end{align}
In other words, $T(z,z_0)=T_0(z_0) ( 1 + T_1(z_0) z + O(z^2))$.

\begin{theorem}[Soliton region]\label{thm:asym}
Assume \eqref{decay} for some $l\in\N$ and abbreviate by $c_k= -\frac{\zeta_k - \zeta_k^{-1}}{2\log(|\zeta_k|)}$
the velocity of the $k$'th soliton determined by $\re(\Phi(\zeta_k))=0$.
Then the asymptotics in the soliton region, $|n/t| \geq 1 + C/t \log(t)^2$ for any
$C>0$, are as follows.

Let $\eps > 0$ sufficiently small such that the intervals
$[c_k-\eps,c_k+\eps]$, $1\le k \le N$, are disjoint and lie inside $(-\infty,-1)\cup(1,\infty)$.

If $|\frac{n}{t} - c_k|<\eps$ for some $k$, the solution is asymptotically given by a single soliton
\begin{align}\nn
\prod_{j=n}^\infty (2 a(j,t)) &= T_0(z_0) \left(
\sqrt{\frac{1-\zeta_k^2 + \gam_k(n,t)}{1-\zeta_k^2 + \gam_k(n+1,t)}} + O(t^{-l}) \right),\\ \label{eqsola1}
\sum_{j=n+1}^\infty b(j,t) &= \frac{1}{2} T_1(z_0) +
\frac{\gam_k(n,t)  \zeta_k (1-\zeta_k^2)}{2((\gam_k(n,t) -1) \zeta_k^2+1)} + O(t^{-l}),
\end{align}
where
\be
\gam_k(n,t) = \gam_k T(\zeta_k,-c_k - \sqrt{c_k^2 -1})^{-2} \E^{t (\zeta_k - \zeta_k^{-1})} \zeta_k^{2n}.
\ee

If $|\frac{n}{t} -c_k| \geq \eps$, for all $k$, one has
\begin{align}\nn
\prod_{j=n}^\infty (2 a(j,t)) &= T_0(z_0) \left(1 + O(t^{-l}) \right),\\ \label{eqsola2}
\sum_{j=n+1}^\infty b(j,t) &= \frac{1}{2} T_1(z_0) + O(t^{-l}).
\end{align}
\end{theorem}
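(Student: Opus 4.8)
The plan is to run the Deift--Zhou nonlinear steepest descent analysis on the vector Riemann--Hilbert problem that encodes the scattering data. First I would recall that the sequences $(a(n,t),b(n,t))$ are reconstructed from the behaviour near $z=0$ of the solution $m(z)=m(z,n,t)$ of the associated meromorphic vector Riemann--Hilbert problem: $m$ is holomorphic off $\T$, satisfies a jump $m_+=m_- v$ on $\T$ with $v$ built from $R_+$ and the exponential $\E^{t\Phi}$ (with $\Phi$ as in \eqref{eq:Phi}), carries residue conditions at the eigenvalues $\zeta_j$ and their reflections $\zeta_j^{-1}$ governed by the norming constants $\gam_{+,j}$ and $\E^{t\Phi(\zeta_j)}$, and obeys the symmetry $m(1/z)=m(z)\ssigI$ together with a normalization at $z=0$. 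The two quantities in \eqref{eqsola1}--\eqref{eqsola2} are precisely the value and the first Taylor coefficient of $m$ at $z=0$, so it suffices to determine these asymptotically.

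The first genuine step is the conjugation prepared by the partial transmission coefficient \eqref{def:Tzz0}. I would set $\tilde m=m\,D$ with $D$ the diagonal matrix assembled from $T(z,z_0)$ and its reflected partner, chosen so that (i) the jump on the arc of $\T$ where $\re(t\Phi)$ has the unfavourable sign is normalized to one that can be opened into lenses, and (ii) the residue at each eigenvalue lying on the ``wrong'' side of the current velocity is turned, via the Blaschke factor $|\zeta_k|\frac{z-\zeta_k^{-1}}{z-\zeta_k}$, into a residue at the reciprocal point $\zeta_k^{-1}$ where the exponential decays. The expansion $T(z,z_0)=T_0(z_0)(1+T_1(z_0)z+O(z^2))$ shows that this conjugation contributes exactly the factor $T_0(z_0)$ and the additive term $\frac{1}{2}T_1(z_0)$ to the reconstructed quantities, which is the origin of these terms in \eqref{eqsola1}--\eqref{eqsola2}.

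Next comes the deformation and sign analysis, which is where the soliton region is special. On $\T$ one has $\re(\Phi)\equiv 0$, and because $|n/t|>1$ forces the stationary phase points $z_0,z_0^{-1}$ to be real and off the unit circle, there is \emph{no} stationary phase point on the contour. To exploit this I would split $R_+$ into an analytic approximation plus a remainder of size $O(t^{-l})$, using the $C^{l+1}(\T)$ bound coming from \eqref{decay}; the analytic part can then be deformed off $\T$ into the region $\re(t\Phi)<0$, so that after the factorization and lens opening the jump on the deformed contour is $\id+O(\E^{-\eps t})$ uniformly, while the remainder contributes $O(t^{-l})$ directly. Simultaneously, the residue at $\zeta_k$ carries $\E^{t\re(\Phi(\zeta_k))}$, and since $\re(\Phi(\zeta_k))=(\zeta_k-\zeta_k^{-1})+2\frac{n}{t}\log|\zeta_k|$ vanishes exactly when $n/t=c_k$, every eigenvalue with $|n/t-c_k|\ge\eps$ is exponentially suppressed after the conjugation. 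Thus in the regime $|n/t-c_k|\ge\eps$ for all $k$ no pole survives, leaving a Riemann--Hilbert problem with jump exponentially close to the identity; its solution is $\id+O(t^{-l})$ and the reconstruction yields \eqref{eqsola2}.

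When instead $|n/t-c_k|<\eps$ for one $k$, the same estimates leave a single surviving residue at $\zeta_k$ and its reflection at $\zeta_k^{-1}$, everything else being $O(t^{-l})$. I would then solve this one-pole meromorphic vector problem explicitly, using the symmetry $m(1/z)=m(z)\ssigI$ to reduce it to a scalar rational ansatz; matching the residue condition, in which the effective norming constant is precisely $\gam_k(n,t)=\gam_k T(\zeta_k,-c_k-\sqrt{c_k^2-1})^{-2}\E^{t(\zeta_k-\zeta_k^{-1})}\zeta_k^{2n}=\gam_k T(\zeta_k,\cdot)^{-2}\E^{t\Phi(\zeta_k)}$, reproduces the one-soliton expression in \eqref{eqsola1}. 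Expanding this explicit solution at $z=0$ and folding in the conjugation factors $T_0(z_0)$, $T_1(z_0)$ gives the stated formulas, with the discarded pieces accounting for the $O(t^{-l})$ error. The main obstacle is the conjugation bookkeeping: one must verify that $D$ both normalizes the jump on the correct arc and flips exactly the eigenvalues with unfavourable sign of $\re(\Phi)$ while preserving the symmetry and normalization, and then track how $T_0(z_0)$ and $T_1(z_0)$ enter the reconstruction; matching the single remaining residue to obtain the precise constant $\gam_k(n,t)$ is the other delicate point.
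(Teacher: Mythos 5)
Your proposal is correct and follows essentially the same route as the paper: conjugation by the diagonal matrix built from the partial transmission coefficient $T(z,z_0)$ (which flips the residues with unfavourable sign of $\re\Phi$ and contributes the factors $T_0(z_0)$, $\tfrac12 T_1(z_0)$), contour deformation off $\T$ where no stationary phase point lies, analytic approximation of $R$ with an $O(t^{-l})$ remainder, and comparison with the explicitly solvable one-pole (one-soliton) model problem carrying the effective norming constant $\gam_k T(\zeta_k,\cdot)^{-2}\E^{t\Phi(\zeta_k)}$. The only step you leave implicit is the small-norm argument (the paper's singular-integral-equation machinery with the symmetric Cauchy kernel) justifying that a jump within $O(t^{-l})$ of the model's yields a solution within $O(t^{-l})$ of the one-soliton vector, but this is the standard mechanism and does not change the approach.
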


Note that one can choose $|\frac{n}{t} - c_k|<\eps_1$ for the regions where \eqref{eqsola1} is valid, respectively
$|\frac{n}{t} -c_k| \geq \eps_2$ for the regions where \eqref{eqsola2} is valid, such that the regions overlap if $\eps_1>\eps_2$.
Due to the exponential decay of the one-soliton solution, both formulas of course produce the same result on the overlap.

In particular, we recover the well-known fact that the solution splits into a sum of independent solitons
where the presence of the other solitons and the radiation part corresponding to the continuous spectrum
manifests itself in phase shifts given by $T(\zeta_k,-c_k - \sqrt{c_k^2 -1})^{-2}$. Indeed, notice that
for $\zeta_k\in (-1,0)$ this term just contains the product over the Blaschke factors corresponding to solitons
$\zeta_j$ with $\zeta_k<\zeta_j$. For $\zeta_k\in (0,1)$ we have the product over the Blaschke factors
corresponding to solitons $\zeta_j\in(-1,0)$, the integral over the full unit circle, plus the product over the
Blaschke factors corresponding to solitons $\zeta_j$ with $\zeta_k>\zeta_j$.

Furthermore, this result shows that in the region $\frac{n}{t}>1$ the solution is asymptotically given by a $N_-$-soliton solution,
where $N_-$ is the number of $\zeta_j\in(-1,0)$, formed from the data $\zeta_j$, $\gam_j$ for
all $\zeta_k\in(-1,0)$. Similarly, in the region $\frac{n}{t}<-1$ the solution is asymptotically given by a $N_+$-soliton solution,
where $N_+$ is the number of $\zeta_j\in(0,1)$, formed from the data $\zeta_j$, $\ti{\gam}_j$ for
all $\zeta_j\in(0,1)$, where
\be
\ti{\gam}_j = \gam_j \left(\prod\limits_{\zeta_k\in(-1,0)} |\zeta_k| \frac{\zeta_j-\zeta_k^{-1}}{\zeta_j-\zeta_k} \right)
\exp\left(\frac{1}{2\pi\I}\int\limits_{\T}\log(|T(s)|) \frac{s+\zeta_j}{s-\zeta_j} \frac{ds}{s}\right).
\ee

In the remaining region, we will show

\begin{theorem}[Similarity region]\label{thm:asym2}
Assume \eqref{decay} with $l\ge 5$, then, away from the soliton region, $|n/t| \leq 1 - C$ for any
$C>0$, the asymptotics are given by
\begin{align}\nn
\prod_{j=n}^\infty (2 a(j,t)) = & T_0(z_0) \Bigg(1 + \left(\frac{\nu(z_0)}{- 2\sin(\theta_0) t}\right)^{1/2}
\cos\big(t \Phi_0(z_0) + \nu(z_0) \log(t) - \delta(z_0)\big)\\
& + O(t^{-\alpha}) \Bigg),\\ \nn
\sum_{j=n+1}^\infty b(j,t) = &  \frac{1}{2} T_1(z_0) + \left(\frac{\nu(z_0)}{-2\sin(\theta_0) t}\right)^{1/2}
\cos\big(t \Phi_0(z_0) + \nu(z_0) \log(t) - \delta(z_0) + \theta_0\big)\\
& +O(t^{-\alpha}), \qquad z_0 = \E^{\I\theta_0},
\end{align}
for any $\alpha<1$. Here
\begin{align}\nn
\nu(z_0) =& -\frac{1}{\pi} \log(|T(z_0)|),\\ \nn
\Phi_0(z_0) =& 2 (\sin(\theta_0) - \theta_0\cos(\theta_0)), \\\label{eq:defvar}
\delta(z_0) =& \pi/4 - 3 \nu(z_0) \log |2\sin(\theta_0)| + 2 \arg(\ti{T}(z_0)) - \arg(R_+(z_0,0))\\ \nn
& +\arg(\Gamma(\I\nu(z_0))), \\ \nn
\ti{T}(z_0) = &\prod\limits_{\zeta_k\in(-1,0)} |\zeta_k| \frac{z-\zeta_k^{-1}}{z-\zeta_k} \cdot
\exp\left(\frac{1}{2\pi\I}\int\limits_{\ol{z_0}}^{z_0}\log\Big(\frac{|T(s)|}{|T(z_0)|}\Big) \frac{s+z_0}{s-z_0} \frac{ds}{s}\right),
\end{align}
and $\Gamma(z)$ is the gamma function.
\end{theorem}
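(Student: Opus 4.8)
The plan is to analyze the vector Riemann--Hilbert problem coming from the inverse scattering transform by nonlinear steepest descent, following the same scheme as for Theorem~\ref{thm:asym} but with the stationary phase points $z_0,\ol{z_0}$ now sitting \emph{on} the unit circle. Recall that the solution $(a,b)$ is encoded in a row vector $m(z)=m(z,n,t)$ which is meromorphic off $\T$, jumps across $\T$ according to $R_+(z,t)$ and the oscillatory factor $\E^{t\Phi(z)}$, satisfies residue conditions at the eigenvalues $\zeta_j$, and is normalized at $z=0$; it also obeys the symmetry $m(1/z)=m(z)\ssigI$. The quantities to be computed are read off from the first two Taylor coefficients of $m$ at $z=0$, so $\prod_{j=n}^\infty(2a(j,t))$ comes from the leading value and $\sum_{j=n+1}^\infty b(j,t)$ from the next order. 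Thus everything reduces to determining $m$ near $z=0$ up to an explicit $t^{-1/2}$ correction. The first step is to conjugate with the partial transmission coefficient, setting $\ti m=m\,\mathrm{diag}(T(z,z_0),T(z,z_0)^{-1})$: this turns the residues at those $\zeta_k$ with $\re(\Phi(\zeta_k))>0$ into exponentially small contributions and simultaneously strips the diagonal factor from the standard factorization of the jump on $\T$. Since $|z_0|=1$ here (so $\zeta_0=-1$), all eigenvalues are dealt with up to $O(t^{-l})$, and evaluating $T(z,z_0)$ and its $z$-derivative at $z=0$ produces exactly the prefactors $T_0(z_0)$ and $\tfrac12 T_1(z_0)$ in the statement, leaving a pole-free problem with an oscillatory jump on $\T$.

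Next I would deform contours. On $\T$ one factors the jump into lower/upper triangular pieces and opens lenses, pushing the triangular factors off the circle into the regions where $\re(t\Phi)<0$; along these steepest descent paths the jumps are exponentially close to the identity except in shrinking neighborhoods of the two stationary phase points $z_0$ and $\ol{z_0}=z_0^{-1}$ given by $\Phi'(z_0)=0$. Because $R_+$ is only $C^{l+1}(\T)$ and not analytic, the deformation cannot be done directly: one splits $R_+$ into an analytic approximant that may be deformed plus a small remainder whose contribution is controlled by smoothness. This is where the hypothesis $l\ge 5$ enters, ensuring that the non-analytic remainder contributes only $O(t^{-\alpha})$ for every $\alpha<1$, matching the claimed error.

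The analysis then localizes to the two crosses centered at $z_0$ and $\ol{z_0}$. Near each stationary point the phase is quadratic, but the conjugation by $T(z,z_0)$ endows the off-diagonal entry with a power-law factor $\sim(z-z_0)^{\pm\I\nu(z_0)}$, where $\nu(z_0)=-\pi^{-1}\log|T(z_0)|$, coming from the logarithmic singularity of the Cauchy integral at the endpoint $z_0$. Hence the localized problem is not Gaussian and must be solved explicitly in terms of parabolic cylinder functions, as carried out in the appendix; its solution contributes a term of size $t^{-1/2}$ whose phase carries the rapidly oscillating factor $\E^{\I t\Phi_0(z_0)}=\E^{t\Phi(z_0)}$, the logarithmic correction $\nu(z_0)\log(t)$ produced by rescaling the local variable by $\sqrt t$, and the constant $\delta(z_0)$ assembled from $\arg\Gamma(\I\nu(z_0))$, the factor $\ti T(z_0)$, and $R_+(z_0,0)$. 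By the symmetry $m(1/z)=m(z)\ssigI$ together with Schwarz reflection, the contribution at $\ol{z_0}$ is the complex conjugate of that at $z_0$, so the two combine into the real cosine appearing in the theorem, the extra shift $\theta_0$ in the $b$-formula arising from the $z$-derivative at $z=0$.

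Finally I would assemble a global parametrix from the two local models and the trivial outer solution, show that the error solves a small-norm singular integral equation (using the formulation recalled in the second appendix) whose solution is $O(t^{-\alpha})$, and expand the resulting $m$ at $z=0$. Translating the $z_0$-plus-$\ol{z_0}$ contribution through the reconstruction formulas then yields the stated asymptotics. I expect the main obstacle to be the construction and rigorous matching of the parabolic-cylinder parametrix: one must verify that it agrees with the outer solution on the boundary of each cross to sufficient order, and that the resulting mismatch together with the non-analytic remainder feeds into a genuine small-norm estimate giving $O(t^{-\alpha})$ for all $\alpha<1$. Tracking the precise constant $\delta(z_0)$ through every conjugation and through the connection formulas for the parabolic cylinder functions is the most delicate piece of bookkeeping.
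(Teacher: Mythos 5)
Your proposal follows essentially the same route as the paper: conjugation by the partial transmission coefficient $T(z,z_0)$, lens opening with the analytic/remainder splitting of $R$ (where $l\ge 5$ enters), localization to the two crosses at $z_0$ and $\ol{z_0}$ solved via parabolic cylinder functions, recombination of the two local contributions into a real cosine via the symmetry $m(1/z)=m(z)\ssigI$, and a final small-norm singular-integral estimate with explicit extraction of the $t^{-1/2}$ term from the loop jumps before reading off $A$ and $B$ at $z=0$. This matches the paper's Sections 4--6 and the two appendices, so the plan is correct as an outline of the paper's own argument.
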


For $a(n,t)$ respectively $b(n,t)$ we obtain as a simple consequence:

\begin{corollary}
Under the same assumptions as in Theorem~\ref{thm:asym2} we have
\begin{align}\nn
a(n,t) = & \frac{1}{2}  + \left(\frac{-\sin(\theta_0) \nu(z_0)}{2 t}\right)^{1/2}
\cos\big(t \Phi_0(z_0) + \nu(z_0) \log(t) - \delta(z_0)-\theta_0\big)\\
& +O(t^{-\alpha}),\\ \nn
b(n,t) = & \left(\frac{-2\sin(\theta_0) \nu(z_0)}{t}\right)^{1/2} 
\sin\big(t \Phi_0(z_0) + \nu(z_0) \log(t) - \delta(z_0) + 2\theta_0\big)\\
& +O(t^{-\alpha}).
\end{align}
\end{corollary}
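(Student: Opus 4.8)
The plan is to extract the individual Flaschka variables from the two reconstruction formulas of Theorem~\ref{thm:asym2} by the elementary identities
\[
2a(n,t) = \frac{\prod_{j=n}^\infty(2a(j,t))}{\prod_{j=n+1}^\infty(2a(j,t))}, \qquad
b(n,t) = \sum_{j=n}^\infty b(j,t) - \sum_{j=n+1}^\infty b(j,t),
\]
so that $2a(n,t)$ is a ratio of the product evaluated at $n$ and at $n+1$, while $b(n,t)$ is a difference of the tail sum evaluated at $n-1$ and at $n$. Into each factor I would substitute the right-hand sides of Theorem~\ref{thm:asym2}, keeping in mind that every quantity there is a function of the slow variable $z_0=z_0(n/t)$ apart from the explicit powers of $t$.

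First I would separate slow from fast dependence on $n$. All of $T_0, T_1, \nu, \delta, \theta_0$ and the amplitude $(\nu/(-2\sin\theta_0\,t))^{1/2}$ depend on $n$ only through $n/t$ and therefore change by $O(1/t)$ under $n\mapsto n\pm1$; the contribution $\nu(z_0)\log t$ to the phase changes by $O(t^{-1}\log t)$. The only term that changes by an $O(1)$ amount under a unit shift is the rapid phase $t\Phi_0(z_0)$, and computing this increment exactly is the heart of the argument.

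The key computation is that $\frac{\partial}{\partial n}\,t\Phi_0(z_0)=2\theta_0$. Here the stationary-phase structure does the work: writing $t\Phi_0(z_0)=-\I\,t\Phi(z_0)$ with $\Phi$ as in \eqref{eq:Phi}, the chain-rule term $t\Phi'(z_0)\,\partial_n z_0$ drops out because $z_0$ is a stationary point, $\Phi'(z_0)=0$, leaving only the explicit $n$-dependence $\partial_n(2n\log z_0)=2\log z_0=2\I\theta_0$, so that $\partial_n\,t\Phi_0(z_0)=2\theta_0$. Hence the phase advances by $2\theta_0+O(t^{-1}\log t)$ per unit step in $n$. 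Substituting the frozen slow coefficients and the shifted phases, and expanding the ratio and the difference to first order in the $O(t^{-1/2})$ oscillatory amplitude (the quadratic terms being $O(1/t)$, hence absorbed in $O(t^{-\alpha})$ for $\alpha<1$), reduces each expression to a difference of two cosines whose arguments differ by a multiple of $2\theta_0$. The sum-to-product identities then merge them into the single sinusoids of the corollary, the relative phase $2\theta_0$ producing the extra $-\theta_0$ shift (and the passage to the required cosine form) for $a(n,t)$ and the $+2\theta_0$ shift for $b(n,t)$, while the factor $\sin\theta_0$ generated by the combination, together with $\sin\theta_0<0$ in the similarity region, turns the amplitude into the stated real square roots $(-\sin\theta_0\,\nu/(2t))^{1/2}$ and $(-2\sin\theta_0\,\nu/t)^{1/2}$.

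The main obstacle is precisely the exact evaluation of the phase increment: since $2\theta_0$ is an $O(1)$ quantity sitting inside the argument of a cosine, any error in it corrupts the whole oscillatory part, and the clean value hinges on $z_0$ being a stationary point of $\Phi$. A secondary point requiring care is verifying that the increment of $\nu(z_0)\log t$ is genuinely only $O(t^{-1}\log t)$, so that it stays below the $O(t^{-\alpha})$ threshold, and that the sign of $\sin\theta_0$ and the branch of $z_0=\E^{\I\theta_0}$ are handled consistently so that the amplitudes and phases come out with the signs recorded in the corollary.
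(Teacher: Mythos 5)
Your proposal follows exactly the paper's own argument: the same identities $2a(n,t)=\prod_{j=n}^\infty(2a(j,t))/\prod_{j=n+1}^\infty(2a(j,t))$ and $b(n,t)=\sum_{j=n}^\infty b(j,t)-\sum_{j=n+1}^\infty b(j,t)$, the same observation that all slowly varying quantities ($T_0$, $T_1$, $\nu$, $\delta$, $\theta_0$, the amplitude) change only by $O(1/t)$ under $n\mapsto n\pm1$, and the same identification of the $2\theta_0$ increment of $t\Phi_0(z_0)$ as the only $O(1)$ change in the phase. Your derivation of that increment via the stationarity $\Phi'(z_0)=0$ is in fact more explicit than the paper's one-line Taylor-expansion remark, and correctly isolates the one step where an error would corrupt the result.
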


\begin{proof}
To get the first formula for we use $a(n,t) = \frac{1}{2} \prod_{j=n}^\infty (2a(j,t)) / \prod_{j=n+1}^\infty (2a(j,t))$. Now set
$x=\frac{n}{t}$ and observe $\theta_0(\frac{n+1}{t}) = \theta_0(x+\frac{1}{t})=  \theta_0(x) \pm \theta'(x) \frac{1}{t} + O(t^{-2})$
uniformly in $|x| \le 1 - C$. Similarly for for the other terms and hence on checks that the only difference up to $O(t^{-\alpha})$
errors in the above formulas for $n$ and $n\pm 1$ is a $\mp 2\theta_0$ in the argument of the cosine (stemming from the $t\Phi_0(z_0)$ term).
The second formula follows in the same manner from $b(n,t)= \sum_{j=n}^\infty b(j,t) -\sum_{j=n+1}^\infty b(j,t)$.
\end{proof}

\begin{figure}
\includegraphics[width=8cm]{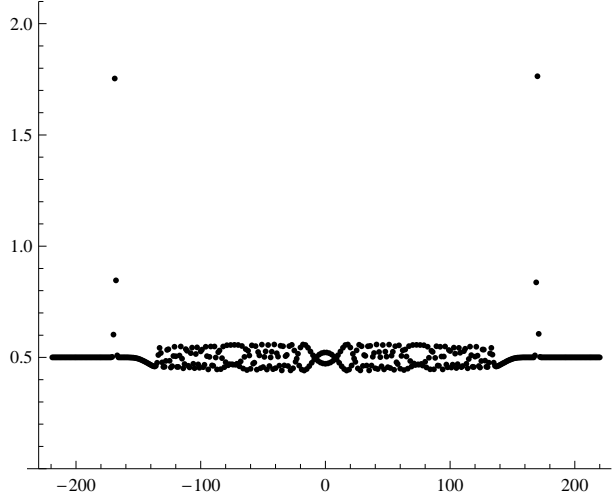}
\caption{Numerically computed solution $a(n,150)$ of the Toda lattice in Flaschka's variables.} \label{fig2}
\end{figure}
This is illustrated in Figure~\ref{fig2}, which shows the same solution as in Figure~\ref{fig1} but in Flaschka's
variables. It is also interesting to look at the relation between the energy $\lam$ of the underlying
Lax operator $H$ and the propagation speed at which the corresponding parts of the Toda
lattice travel, that is, the analog of the classical dispersion relation. By the above theorems,
the nonlinear dispersion relation is given by (see Figure~\ref{fig3})
\begin{figure}
\includegraphics[width=8cm]{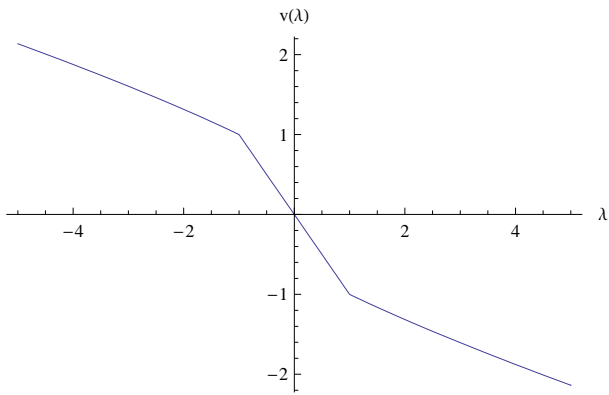}
\caption{Nonlinear dispersion relation for the Toda lattice.} \label{fig3}
\end{figure}
\be
v(\lam)=\frac{n}{t},
\ee
where
\be\label{def:v}
v(\lam) = \begin{cases}
-\lam, & \lam\in[-1,1],\\
\frac{\sqrt{\lam^2-1}}{\log(|\lam-\sqrt{\lam^2-1}|)}, & \lam\in(-\infty,-1]\cup[1,\infty).
\end{cases}
\ee
We will not address the asymptotics in the missing region around $|n| \approx t$.
In the case $|R_+(z,0)|<1$ the solution can be given in terms of Painlev\'e II transcendents.
If $|R_+(z,0)|=1$ (which is the generic case), an additional region, the collisionless shock region,
will appear where the solution can be described in terms of elliptic functions. For the 
Painlev\'e region we refer to \cite{dz},  \cite{km}. For the collisionless shock region
an outline using the $g$-function method was given in \cite{dvz}
(for the case of the Korteweg--de Vires equation). The case of the Toda lattice will be dealt with in
\cite{mnt}.

We also remark that the present methods can also be used to obtain
further terms in the asymptotic expansion \cite{dz2}.

Finally, note that one can obtain the asymptotics for $n \geq 0$ from the
ones for $n\leq 0$ by virtue of a simple reflection. Similarly for $t\geq 0$
versus $t\le 0$.

\begin{lemma}
Suppose $a(n,t)$, $b(n,t)$ satisfy the Toda equation \eqref{tl}, then so do
\[
\ti{a}(n,t) =a(-n-1,t),\quad \ti{b}(n,t) = - b(-n,t)
\]
respectively
\[
\ti{a}(n,t) =a(n,-t),\quad \ti{b}(n,t) = - b(n,-t).
\]
\end{lemma}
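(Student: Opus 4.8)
The plan is to verify both assertions by direct substitution into the Toda system \eqref{tl}, treating the spatial reflection and the time reversal separately; neither requires anything beyond the chain rule together with careful bookkeeping of the index shifts and sign changes.

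For the spatial reflection $\ti a(n,t)=a(-n-1,t)$, $\ti b(n,t)=-b(-n,t)$, the crucial observation is how the shifts transform. Since the reflection of $a$ is centered at a half-integer while that of $b$ is centered at an integer, one finds $\ti a(n-1,t)=a(-n,t)$ and $\ti b(n+1,t)=-b(-n-1,t)$. First I would substitute these into the right-hand side of the first equation in \eqref{tl}: the combination $\ti a(n,t)^2-\ti a(n-1,t)^2=a(-n-1,t)^2-a(-n,t)^2$ is to be matched against $\dot{\ti b}(n,t)=-\dot b(-n,t)$, and applying \eqref{tl} for $b$ at the site $-n$ shows the two agree. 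Next I would check the $a$-equation: by \eqref{tl}, $\dot{\ti a}(n,t)=\dot a(-n-1,t)=a(-n-1,t)\big(b(-n,t)-b(-n-1,t)\big)$, while the prescribed right-hand side $\ti a(n,t)\big(\ti b(n+1,t)-\ti b(n,t)\big)$ equals $a(-n-1,t)\big(-b(-n-1,t)+b(-n,t)\big)$, so the two coincide.

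For the time reversal $\ti a(n,t)=a(n,-t)$, $\ti b(n,t)=-b(n,-t)$, the only new ingredient is the minus sign produced by differentiating in $-t$. Thus $\dot{\ti b}(n,t)=+\dot b(n,-t)$ (the two sign flips cancel) while $\dot{\ti a}(n,t)=-\dot a(n,-t)$. Substituting \eqref{tl} evaluated at time $-t$, the first equation reproduces $2(\ti a(n,t)^2-\ti a(n-1,t)^2)$ directly, and in the $a$-equation the minus sign from $\dot{\ti a}$ exactly compensates the sign introduced by $\ti b=-b$, yielding $\ti a(n,t)(\ti b(n+1,t)-\ti b(n,t))$. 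In both cases the verification is entirely routine; the only point demanding care---and the closest thing to an obstacle---is keeping the half-integer shift of $a$ consistent with the integer shift of $b$ in the reflection, which is precisely the asymmetry encoded in the choice $\ti a(n,t)=a(-n-1,t)$ versus $\ti b(n,t)=-b(-n,t)$.
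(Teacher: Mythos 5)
Your verification is correct: both index-shift computations ($\ti a(n-1,t)=a(-n,t)$, $\ti b(n+1,t)=-b(-n-1,t)$) and the sign bookkeeping in the time-reversal case check out against \eqref{tl}. The paper states this lemma without proof, treating it as exactly the routine substitution you carried out, so your argument is the intended one.
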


\section{The Inverse scattering transform and the Riemann--Hilbert problem}
\label{sec:istrhp}

In this section we want to derive the Riemann--Hilbert problem from scattering theory.
The special case without eigenvalues was first given in Kamvissis \cite{km}. How eigenvalues
can be added was first shown in Deift, Kamvissis, Kriecherbauer, and Zhou \cite{dkkz}.
We essentially follow \cite{krt} in this section.

For the necessary results from scattering theory respectively the inverse
scattering transform for the Toda lattice we refer to \cite{tist}, \cite{tivp}, \cite{tjac}.

Associated with $a(t), b(t)$ is a self-adjoint Jacobi operator
\begin{equation} \label{defjac}
H(t) = a(t) S^+  + a^-(t) S^-  + b(t)
\end{equation}
in $\lz$, where $S^\pm f(n) = f^\pm(n)= f(n\pm1)$ are the usual shift operators and
$\lz$ denotes the Hilbert space of square summable (complex-valued) sequences
over $\Z$. By our assumption \eqref{decay} the spectrum of $H$ consists of an absolutely
continuous part $[-1,1]$ plus a finite number of eigenvalues $\lam_k\in\R\backslash[-1,1]$,
$1\le k \le N$. In addition, there exist two Jost functions $\psi_\pm(z,n,t)$
which solve the recurrence equation
\be
H(t) \psi_\pm(z,n,t) = \frac{z+z^{-1}}{2} \psi_\pm(z,n,t), \qquad |z|\le 1,
\ee
and asymptotically look like the free solutions
\be
\lim_{n \to \pm \infty} z^{\mp  n} \psi_{\pm}(z,n,t) =1.
\ee
Both $\psi_\pm(z,n,t)$ are analytic for $0<|z|<1$ with smooth boundary values
for $|z|=1$.
The asymptotics of the two Jost function are
\be\label{eq:psiasym}
\psi_\pm(z,n,t) =  \frac{z^{\pm n}}{A_\pm(n,t)} \Big(1 + 2 B_\pm(n,t) z + O(z^2) \Big),
\ee
as $z \to 0$, where
\be \label{defAB}
\aligned
A_+(n,t) &= \prod_{j=n}^{\infty} 2 a(j,t), \quad
B_+(n,t)= -\sum_{j=n+1}^\infty b(j,t), \\
A_-(n,t) &= \!\!\prod_{j=- \infty}^{n-1}\! 2 a(j,t), \quad
B_-(n,t) = -\sum_{j=-\infty}^{n-1} b(j,t).
\endaligned
\ee

One has the scattering relations
\be \label{relscat}
T(z) \psi_\mp(z,n,t) =  \ol{\psi_\pm(z,n,t)} +
R_\pm(z,t) \psi_\pm(z,n,t),  \qquad |z|=1,
\ee
where $T(z)$, $R_\pm(z,t)$ are the transmission respectively reflection coefficients.
The transmission and reflection coefficients have the following well-known properties (\cite[Sect.~10.2]{tjac}):

\begin{lemma}
The transmission coefficient $T(z)$ has a meromorphic extension to the
interior of the unit circle with simple poles at the images of the eigenvalues $\zeta_j$.
The residues of $T(z)$ are given by
\be\label{eq:resT}
\res_{\zeta_k} T(z) = - \zeta_k \frac{\gam_{+,k}(t)}{\mu_k(t)} = - \zeta_k \gam_{-,k}(t) \mu_k(t),
\ee
where
\be
\gam_{\pm,k}(t)^{-1} = \sum_{n\in\Z} |\psi_\pm(\zeta_k,n,t)|^2
\ee
and $\psi_- (\zeta_k,n,t) = \mu_k(t) \psi_+(\zeta_k,n,t)$.

Moreover,
\be \label{reltrpm}
T(z) \ol{R_+(z,t)} + \ol{T(z)} R_-(z,t)=0, \qquad |T(z)|^2 + |R_\pm(z,t)|^2=1.
\ee
\end{lemma}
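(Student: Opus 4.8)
The plan is to reduce everything to the Wronskian $W_n(f,g)=a(n)\big(f(n)g(n+1)-f(n+1)g(n)\big)$, which is independent of $n$ as soon as $f$ and $g$ both solve $Hw=\lam(z)w$ with $\lam(z)=\frac{z+z^{-1}}{2}$. First I would record the Wronskian representation of the transmission coefficient. On $|z|=1$ the scattering relation \eqref{relscat} gives $\psi_-(z,n)\sim T(z)^{-1}\big(z^{-n}+R_+(z,t)z^n\big)$ as $n\to+\infty$, while $\psi_+(z,n)\sim z^n$, so a direct computation of $W_n(\psi_-,\psi_+)$ in this limit yields $W(\psi_-,\psi_+)=\frac{z-z^{-1}}{2}T(z)^{-1}$. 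Since the Jost solutions are analytic in $0<|z|<1$, the right-hand side $T(z)=\frac{z-z^{-1}}{2W(\psi_-,\psi_+)}$ furnishes the meromorphic continuation into the disc, with poles exactly at the zeros of $W(\psi_-,\psi_+)$. Such a zero means $\psi_-(z,\cdot)$ and $\psi_+(z,\cdot)$ are linearly dependent; as $\psi_+$ decays at $+\infty$ and $\psi_-$ at $-\infty$, this produces a genuine $\ell^2$ eigenfunction, so the zeros sit precisely at the images $\zeta_k$ of the eigenvalues and there $\psi_-(\zeta_k,\cdot)=\mu_k(t)\psi_+(\zeta_k,\cdot)$.

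For the residues the key is the identity $\frac{d}{dz}W(\psi_-(z),\psi_+(z))\big|_{z=\zeta_k}=-\lam'(\zeta_k)\sum_n\psi_-(\zeta_k,n)\psi_+(\zeta_k,n)$, which I would obtain from the discrete Green identity $\sum_{n=N_1}^{N_2}\big((Hu)(n)v(n)-u(n)(Hv)(n)\big)=W_{N_1-1}(u,v)-W_{N_2}(u,v)$. Taking $u=\partial_z\psi_-(\zeta_k)$ and $v=\psi_+(\zeta_k)$ and using $H\partial_z\psi_-=\lam'\psi_-+\lam\,\partial_z\psi_-$ turns the left-hand side into $\lam'(\zeta_k)\sum_n\psi_-\psi_+$; the boundary term at $-\infty$ vanishes by decay, and writing $W'(\zeta_k)=W_n(\partial_z\psi_-,\psi_+)+W_n(\psi_-,\partial_z\psi_+)$ with the second summand tending to $0$ as $n\to+\infty$ identifies the surviving boundary term with $W'(\zeta_k)$. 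Simplicity of the poles is then equivalent to $W'(\zeta_k)\ne0$, which holds because $\lam'(\zeta_k)\ne0$ for $\zeta_k\ne\pm1$ and the eigenfunction is nontrivial. Finally, combining $\res_{\zeta_k}T=\frac{\zeta_k-\zeta_k^{-1}}{2W'(\zeta_k)}$ with the elementary relation $\frac{\zeta_k-\zeta_k^{-1}}{2\zeta_k}=\lam'(\zeta_k)$ and the reality of the Jost solutions at real $\zeta_k$ (so that $\sum_n|\psi_+|^2=\sum_n\psi_+^2=\gam_{+,k}(t)^{-1}$) gives $\res_{\zeta_k}T=-\zeta_k\gam_{+,k}(t)/\mu_k(t)$; the second form follows since $\psi_-=\mu_k\psi_+$ forces $\gam_{-,k}(t)=\mu_k(t)^{-2}\gam_{+,k}(t)$.

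The relations on $|z|=1$ I would read off from linear algebra in the solution space. There $\overline z=z^{-1}$, so $\overline{\psi_\pm(z,\cdot)}$ again solve $Hw=\lam w$, and $\{\psi_+,\overline{\psi_+}\}$ is a basis since $W(\psi_+,\overline{\psi_+})=\frac{z^{-1}-z}{2}\ne0$ for $z\ne\pm1$. Writing \eqref{relscat} for both signs together with the complex conjugate of the $R_+$-relation, then expressing $\psi_-$ and $\overline{\psi_-}$ through this basis and matching the coefficients of $\psi_+$ and $\overline{\psi_+}$, produces $T\overline{R_+}+\overline T R_-=0$ and $|T|^2+|R_+|^2=1$ at once; the relation $|R_-|=|R_+|$, and hence $|T|^2+|R_-|^2=1$, then drops out of the first identity.

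I expect the genuine obstacle to be the boundary bookkeeping in the derivative-of-Wronskian identity: unlike the Jost functions themselves, $\partial_z\psi_-$ does not decay at $+\infty$, so one cannot simply discard all boundary Wronskians and must correctly match the one surviving contribution to $W'(\zeta_k)$. Once the Wronskian representation of $T$ and this identity are in place, the remaining steps are routine.
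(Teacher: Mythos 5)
The paper offers no proof of this lemma at all — it simply quotes the result from \cite[Sect.~10.2]{tjac} — and your argument is correct and is essentially the standard derivation found in that reference: the Wronskian formula $T(z)=\frac{z-z^{-1}}{2W(\psi_-,\psi_+)}$ for the meromorphic continuation, the derivative-of-the-Wronskian identity $W'(\zeta_k)=-\lam'(\zeta_k)\sum_n\psi_-\psi_+$ for the residues, and expansion of $\psi_-,\ol{\psi_-}$ in the basis $\{\psi_+,\ol{\psi_+}\}$ for the relations \eqref{reltrpm}. You also correctly isolate the one delicate point (that $\partial_z\psi_-$ does not decay at $+\infty$, so only the companion term $W_n(\psi_-,\partial_z\psi_+)$ may be discarded), and your sign bookkeeping checks out against $\zeta_k-\zeta_k^{-1}=2\zeta_k\lam'(\zeta_k)$.
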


In particular one reflection coefficient, say $R(z,t)=R_+(z,t)$, and one set of
norming constants, say $\gam_k(t)= \gam_{+,k}(t)$, suffices. Moreover,
the time dependence is given by (\cite[Thm.~13.4]{tjac}):

\begin{lemma}
The time evolutions of the quantities $R_+(z,t)$, $\gam_{+,k}(t)$ are given by
\begin{align}
R(z,t) &= R(z) \E^{t (z - z^{-1})}\\
\gam_k(t) &= \gam_k \E^{t (\zeta_k - \zeta_k^{-1})},
\end{align}
where $R(z)=R(z,0)$ and $\gam_k=\gam_k(0)$.
\end{lemma}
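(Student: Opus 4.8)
The plan is to first determine how the Jost solutions $\psi_\pm(z,n,t)$ evolve in time, and then read off the evolutions of $R_+$ and $\gam_{+,k}$ by differentiating the scattering relation \eqref{relscat} and the defining formula for the norming constants. The underlying mechanism is the Lax equation \eqref{laxeq} together with the fact (Theorem~\ref{thmunitary}) that the spectrum, and hence each $\lam=\frac{z+z^{-1}}{2}$, is preserved in $t$.

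First I would differentiate the eigenvalue equation $H(t)\psi_\pm = \lam\psi_\pm$ in $t$. Using $\dot H = PH - HP$, a one-line computation shows that $\dot\psi_\pm - P\psi_\pm$ again solves $H\phi = \lam\phi$. To identify this solution I would match asymptotics: as $n\to\pm\infty$ one has $a\to\frac12$, $b\to0$, so $P$ tends to $\frac12(S^+-S^-)$, which acts on $z^{\pm n}$ by multiplication by $\pm\frac12(z-z^{-1})$. Since the normalization $z^{\mp n}\psi_\pm\to 1$ is $t$-independent, $\dot\psi_\pm$ carries no growing component at the relevant infinity, and comparison of leading terms forces
\[
\dot\psi_\pm = P\psi_\pm \mp \tfrac12(z - z^{-1})\psi_\pm .
\]

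With this in hand I would differentiate the scattering relation $T(z)\psi_- = \ol{\psi_+} + R_+\psi_+$ on $|z|=1$, using that $T(z)$ is $t$-independent. On the circle $\ol z = z^{-1}$, so $\ol{z-z^{-1}}=-(z-z^{-1})$, and since $P$ has real coefficients one gets $\dot{\ol{\psi_+}} = P\ol{\psi_+} + \frac12(z-z^{-1})\ol{\psi_+}$. Applying $P$ to the scattering relation itself shows that all $P$-terms cancel; substituting $T\psi_- = \ol{\psi_+}+R_+\psi_+$ then cancels the $\ol{\psi_+}$-terms as well, leaving $\dot R_+\,\psi_+ = (z-z^{-1})R_+\,\psi_+$. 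Hence $\dot R_+ = (z-z^{-1})R_+$, i.e. $R_+(z,t)=R(z)\E^{t(z-z^{-1})}$. For the norming constants I would differentiate $\gam_{+,k}(t)^{-1} = \sum_n|\psi_+(\zeta_k,n,t)|^2$. At the eigenvalue $\zeta_k\in(-1,0)\cup(0,1)$ the function $\psi_+(\zeta_k,\cdot,t)$ is a real, exponentially decaying $\ell^2$-eigenfunction, so the evolution above yields $\frac{d}{dt}\gam_{+,k}^{-1} = 2\spr{\psi_+}{P\psi_+} - (\zeta_k-\zeta_k^{-1})\gam_{+,k}^{-1}$; since $P$ is skew-adjoint and $\psi_+$ real, $\spr{\psi_+}{P\psi_+}=0$, giving $\gam_k(t)=\gam_k\E^{t(\zeta_k-\zeta_k^{-1})}$.

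The main obstacle I expect is the rigorous derivation of the Jost evolution: one must justify interchanging the $t$-derivative with the limit $n\to\pm\infty$ (which is controlled by the decay \eqref{decay}) and argue carefully that $\dot\psi_\pm - P\psi_\pm$, being a $\lam$-eigenfunction with the correct decay at the appropriate infinity, is precisely the stated scalar multiple of $\psi_\pm$ rather than some other linear combination of the two asymptotic solutions. Once that identity is established, the remaining steps are purely algebraic manipulations of the scattering relation and the one-line skew-adjointness argument for $P$.
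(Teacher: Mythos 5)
Your proposal is correct: the paper gives no proof of this lemma at all (it simply cites Theorem~13.4 of \cite{tjac}), and your derivation --- obtaining $\dot\psi_\pm = P\psi_\pm \mp \tfrac12(z-z^{-1})\psi_\pm$ from the Lax equation \eqref{laxeq} by matching asymptotics, then differentiating the scattering relation \eqref{relscat} (with $T$ time-independent, as stated earlier in the paper) and the norm $\gam_{+,k}^{-1}=\|\psi_+(\zeta_k,\cdot,t)\|^2$ using skew-adjointness of $P$ --- is exactly the standard argument of that reference, with the signs and the final formulas checking out. The one technical point you correctly flag, identifying $\dot\psi_\pm - P\psi_\pm$ as a scalar multiple of $\psi_\pm$ with no admixture of the second solution, is handled for $|z|=1$ by noting that a $z^{-2n}$ term would prevent convergence of $z^{\mp n}(\dot\psi_\pm - P\psi_\pm)$ as $n\to\pm\infty$, so no essential gap remains.
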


Now we define the sectionally meromorphic vector
\be\label{defm}
m(z,n,t)= \left\{\begin{array}{c@{\quad}l}
\begin{pmatrix} T(z) \psi_-(z,n,t) z^n  & \psi_+(z,n,t) z^{-n} \end{pmatrix},
& |z|<1,\\
\begin{pmatrix} \psi_+(z^{-1},n,t) z^n & T(z^{-1}) \psi_-(z^{-1},n,t) z^{-n} \end{pmatrix},
& |z|>1.
\end{array}\right.
\ee
We are interested in the jump condition of $m(z,n,t)$ on the unit circle $\T$ (oriented
counterclockwise). To formulate our jump condition we use the following convention:
When representing functions on $\T$, the lower subscript denotes
the non-tangential limit from different sides,
\be
m_\pm(z) = \lim_{ \zeta\to z,\; |\zeta|^{\pm 1}<1} m(\zeta), \qquad |z|=1.
\ee
In general, for an oriented contour $\Sigma$, $m_+(z)$ (resp.\ $m_-(z)$) will denote the limit
of $m(\zeta)$ as $\zeta\to z$ from the positive (resp.\ negative) side of $\Sigma$. Here
the positive (resp.\ negative) side is the one which lies to the left (resp.\ right) as one traverses the contour in the
direction of the orientation. Using the notation above implicitly assumes that these limits exist in the sense that
$m(z)$ extends to a continuous function on the boundary.

\begin{theorem}[Vector Riemann--Hilbert problem]\label{thm:vecrhp}
Let $\mathcal{S}_+(H(0))=\{ R(z),\; |z|=1; \: (\zeta_k, \gam_k), \: 1\le k \le N \}$
the left scattering data of the operator $H(0)$. Then $m(z)=m(z,n,t)$ defined in \eqref{defm}
is a solution of the following vector Riemann--Hilbert problem.

Find a function $m(z)$ which is meromorphic away from the unit circle with simple poles at
$\zeta_k$, $\zeta_k^{-1}$ and satisfies:
\begin{enumerate}
\item The jump condition
\be \label{eq:jumpcond}
m_+(z)=m_-(z) v(z), \qquad
v(z)=\begin{pmatrix}
1-|R(z)|^2 & - \ol{R(z)} \E^{-t\Phi(z)} \\
R(z) \E^{t\Phi(z)} & 1
\end{pmatrix},
\ee
for $z \in\T$,
\item
the pole conditions
\be\label{eq:polecond}
\aligned
\res_{\zeta_k} m(z) &= \lim_{z\to\zeta_k} m(z)
\begin{pmatrix} 0 & 0\\ - \zeta_k \gam_k \E^{t\Phi(\zeta_k)}  & 0 \end{pmatrix},\\
\res_{\zeta_k^{-1}} m(z) &= \lim_{z\to\zeta_k^{-1}} m(z)
\begin{pmatrix} 0 & \zeta_k^{-1} \gam_k \E^{t\Phi(\zeta_k)} \\ 0 & 0 \end{pmatrix},
\endaligned
\ee
\item
the symmetry condition
\be \label{eq:symcond}
m(z^{-1}) = m(z) \sigI
\ee
\item
and the normalization
\be\label{eq:normcond}
m(0) = (m_1\quad m_2),\quad m_1 \cdot m_2 = 1\quad m_1 > 0.
\ee
\end{enumerate}
Here the phase is given by
\begin{equation}
\Phi(z)=z-z^{-1}+2 \frac{n}{t} \log \, z.
\end{equation}
\end{theorem}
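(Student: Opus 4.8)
The plan is to verify the four defining properties (i)--(iv) one at a time, directly from the definition \eqref{defm}, using only the scattering relations \eqref{relscat}, the unitarity/symmetry relations \eqref{reltrpm}, the residue formula \eqref{eq:resT}, the small-$z$ asymptotics \eqref{eq:psiasym}, and the time evolution of the scattering data. Before any of this I would check that \eqref{defm} really defines a function meromorphic off $\T$ whose only poles are simple poles at $\zeta_k$ (inside) and $\zeta_k^{-1}$ (outside): the $\psi_\pm$ are analytic in $0<|z|<1$, the explicit factors $z^{\pm n}$ cancel the $z^{\mp n}$ in \eqref{eq:psiasym} so there is no singularity at the origin, and the only poles inside come from those of $T$. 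Throughout, the one structural fact I would keep invoking is that on $|z|=1$ we have $z^{-1}=\ol z$, which combined with reality of the coefficients forces $\psi_\pm(z^{-1},n,t)=\ol{\psi_\pm(z,n,t)}$ and $T(z^{-1})=\ol{T(z)}$.

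I would begin with the symmetry condition \eqref{eq:symcond}, which is essentially tautological: replacing $z$ by $z^{-1}$ in \eqref{defm} interchanges the two branches and simultaneously swaps the two entries, so $m(z^{-1})=m(z)\ssigI$ holds by inspection. This is the cheapest step, and it also lets me later deduce the pole condition at $\zeta_k^{-1}$ from the one at $\zeta_k$.

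For the jump condition (i) I would take the nontangential limits from inside ($m_+$) and outside ($m_-$) of the unit circle, rewrite $m_-$ by means of $z^{-1}=\ol z$ as $(\ol{\psi_+}\,z^n,\ \ol{T}\,\ol{\psi_-}\,z^{-n})$, and then compute $m_-v$ entry by entry. The key algebraic inputs are that $R(z)\E^{t\Phi(z)}=R_+(z,t)z^{2n}$ (from $R(z,t)=R(z)\E^{t(z-z^{-1})}$ and the explicit phase) and that $1-|R(z)|^2=|T(z)|^2$ on $\T$. Inserting the scattering relation \eqref{relscat} and its complex conjugate, the unitarity relation $|T|^2+|R_+|^2=1$ collapses $m_-v$ exactly onto $m_+$. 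I expect the bookkeeping of the powers $z^{\pm n}$, $z^{2n}$ and the exponential phases to be the one place where sign and label errors can creep in, so this is the step I would write out in full detail; everything else is comparatively mechanical.

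For the pole condition \eqref{eq:polecond} I note that inside the disc the only singularities come from the simple poles of $T$ at $\zeta_k$, which sit in the first entry of $m$, while the second entry $\psi_+z^{-n}$ is analytic there. Taking the residue and inserting \eqref{eq:resT} together with $\psi_-(\zeta_k,n,t)=\mu_k(t)\psi_+(\zeta_k,n,t)$ makes $\mu_k$ cancel, and combining the time evolution $\gamma_k(t)=\gamma_k\E^{t(\zeta_k-\zeta_k^{-1})}$ with $\E^{t\Phi(\zeta_k)}=\E^{t(\zeta_k-\zeta_k^{-1})}\zeta_k^{2n}$ reproduces the factor $-\zeta_k\gamma_k\E^{t\Phi(\zeta_k)}$ multiplying the limiting value of the second entry. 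The companion condition at $\zeta_k^{-1}$ then follows from \eqref{eq:symcond} under the change of variables $z\mapsto z^{-1}$, which supplies the factor $\zeta_k^{-2}$ converting the residue of the first entry into that of the second. Finally, the normalization \eqref{eq:normcond} is read off from \eqref{eq:psiasym} and \eqref{defAB}: the entries of $m(0)$ equal $T(0)/A_-(n,t)$ and $1/A_+(n,t)$, so $m_1>0$ is immediate from $a(j,t)>0$, while $m_1m_2=1$ reduces to the standard scattering identity $T(0)=A_+(n,t)A_-(n,t)=\prod_{j\in\Z}2a(j,t)$.
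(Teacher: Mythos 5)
Your proposal is correct and follows exactly the route of the paper's own (much terser) proof: the jump condition from the scattering relations \eqref{relscat} together with \eqref{reltrpm}, the pole condition at $\zeta_k$ from the residue formula \eqref{eq:resT} and the time evolution of $\gamma_k$, the condition at $\zeta_k^{-1}$ and the symmetry by construction, and the normalization from the asymptotics \eqref{eq:psiasym} via $T(0)=A_+A_-$. The only difference is that you spell out the bookkeeping (in particular $R(z)\E^{t\Phi(z)}=R_+(z,t)z^{2n}$ and the $-\rho/\zeta_k^2$ residue transformation under $z\mapsto z^{-1}$) that the paper leaves implicit.
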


\begin{proof}
The jump condition \eqref{eq:jumpcond} is a simple calculation using the scattering relations
\eqref{relscat} plus \eqref{reltrpm}. The pole conditions follow since $T(z)$ is meromorphic in $|z| <1$
with simple poles at $\zeta_k$ and residues given by \eqref{eq:resT}.
The symmetry condition holds by construction and the normalization \eqref{eq:normcond}
is immediate from the following lemma.
\end{proof}

Observe that the pole condition at $\zeta_k$ is sufficient since the one at $\zeta_k^{-1}$ follows
by symmetry. Moreover, it can be shown that the solution of the above Riemann--Hilbert problem
is unique \cite{krt}. However, we will not need this fact here and it will follow as a byproduct of
our analysis at least for sufficiently large $t$.

Moreover, we have the following asymptotic behaviour near $z=0$:

\begin{lemma}
The function $m(z,n,t)$ defined in \eqref{defm} satisfies
\be\label{eq:AB}
m(z,n,t) = \begin{pmatrix}
A(n,t) (1 - 2 B(n-1,t) z) &
\frac{1}{A(n,t)}(1 + 2 B(n,t) z )
\end{pmatrix} + O(z^2).
\ee
Here $A(n,t)= A_+(n,t)$ and $B(n,t)= B_+(n,t)$ are defined in \eqref{defAB}.
\end{lemma}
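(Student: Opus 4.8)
The plan is to read off both entries of $m(z,n,t)$ for $|z|<1$ directly from the definition \eqref{defm}, inserting the small-$z$ asymptotics \eqref{eq:psiasym} of the Jost functions together with the low-energy expansion of the transmission coefficient $T(z)$ near $z=0$ (note that the poles $\zeta_k$ of $T$ lie away from the origin, so $T$ is analytic there).

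The second entry is immediate. By \eqref{defm} it equals $\psi_+(z,n,t)z^{-n}$; substituting \eqref{eq:psiasym} the factor $z^{n}$ cancels against $z^{-n}$, leaving $\frac{1}{A_+(n,t)}\big(1+2B_+(n,t)z\big)+O(z^2)$, which is exactly the claimed expression once one recalls $A=A_+$, $B=B_+$ from \eqref{defAB}. For the first entry, which by \eqref{defm} equals $T(z)\psi_-(z,n,t)z^n$, I would first expand $\psi_-$: by \eqref{eq:psiasym} the factor $z^{-n}$ cancels against $z^n$ and gives $\psi_-(z,n,t)z^n = \frac{1}{A_-(n,t)}\big(1+2B_-(n,t)z\big)+O(z^2)$. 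It then remains to supply $T(z)=T(0)\big(1+\tau_1 z+O(z^2)\big)$, a standard fact of scattering theory for Jacobi operators (\cite{tjac}), with $T(0)=\prod_{j\in\Z}2a(j,t)$ and $\tau_1=2\sum_{j\in\Z}b(j,t)$. Multiplying the two expansions and using $T(0)=A_+(n,t)A_-(n,t)$ (so that $T(0)/A_-(n,t)=A_+(n,t)=A(n,t)$) yields $A(n,t)\big(1+(\tau_1+2B_-(n,t))z\big)+O(z^2)$.

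The one point that requires care — and which I regard as the crux — is to see that this linear coefficient collapses to the $n$-local quantity $-2B(n-1,t)$ claimed in \eqref{eq:AB}. This is a compatibility between the manifestly $n$-independent constant $\tau_1$ and the $n$-dependent boundary data $B_\pm$. Writing out the telescoping sums in \eqref{defAB} one checks that $B_+(n-1,t)+B_-(n,t)=-\sum_{j\in\Z}b(j,t)$ independently of $n$, hence $\tau_1=2\sum_j b(j,t)=-2\big(B_+(n-1,t)+B_-(n,t)\big)$ and therefore $\tau_1+2B_-(n,t)=-2B_+(n-1,t)=-2B(n-1,t)$, as required.

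If one prefers a self-contained derivation of the $T(z)$-expansion rather than citing it, one can instead exploit that the Wronskian $W(\psi_+(z),\psi_-(z))$ is independent of $n$ and proportional to $(z^{-1}-z)/T(z)$; evaluating it with \eqref{eq:psiasym} to first order in $z$ produces both $T(0)$ and $\tau_1$, at the cost of carrying the $2B_\pm$ correction terms through the computation (the apparent $n$-dependence of the intermediate terms must cancel, which is again the telescoping identity above). Either way, the normalization \eqref{eq:normcond} used in the proof of Theorem~\ref{thm:vecrhp} drops out as an immediate byproduct, since $m_1(0)=A(n,t)>0$ and $m_1(0)m_2(0)=T(0)/\big(A_+(n,t)A_-(n,t)\big)=1$.
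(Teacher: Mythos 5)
Your argument is correct and follows essentially the same route as the paper: both read off the two entries of $m$ from \eqref{defm} using the Jost asymptotics \eqref{eq:psiasym} together with the expansion of $T(z)$ near $z=0$, which the paper records as $T(z)=A_+A_-\bigl(1-2(B_+-b+B_-)z+O(z^2)\bigr)$ and which is identical to your $T(0)(1+\tau_1 z+O(z^2))$ after the telescoping identity $B_+(n,t)-b(n,t)+B_-(n,t)=-\sum_{j\in\Z}b(j,t)$ that you verify. Your reduction $\tau_1+2B_-(n,t)=-2B_+(n-1,t)$ is exactly the cancellation implicit in the paper's one-line proof, so there is nothing to add.
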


\begin{proof}
This follows from \eqref{eq:psiasym} and $T(z)= A_+ A_- ( 1 - 2(B_+ - b +B_-)z+ O(z^2))$.
\end{proof}

For our further analysis it will be convenient to rewrite the pole condition as a jump
condition and hence turn our meromorphic Riemann--Hilbert problem into a holomorphic Riemann--Hilbert problem following \cite{dkkz}.
Choose $\eps$ so small that the discs $|z-\zeta_k|<\eps$ are inside the set $\{z| 0<|z|<1\}$ and
do not intersect. Then redefine $m$ in a neighborhood of $\zeta_k$ respectively $\zeta_k^{-1}$ according to
\be\label{eq:redefm}
m(z) = \begin{cases} m(z) \begin{pmatrix} 1 & 0 \\
\frac{\zeta_k \gamma_k \E^{t\Phi(\zeta_k)} }{z-\zeta_k} & 1 \end{pmatrix},  &
|z-\zeta_k|< \eps,\\
m(z) \begin{pmatrix} 1 & -\frac{z \gamma_k \E^{t\Phi(\zeta_k)} }{z-\zeta_k^{-1}} \\
0 & 1 \end{pmatrix},  &
|z^{-1}-\zeta_k|< \eps,\\
m(z), & \text{else}.\end{cases}
\ee
Then a straightforward calculation using $\res_\zeta m = \lim_{z\to\zeta} (z-\zeta)m(z)$ shows

\begin{lemma}\label{lem:pctoj}
Suppose $m(z)$ is redefined as in \eqref{eq:redefm}. Then $m(z)$ is holomorphic away from
the unit circle and satisfies \eqref{eq:jumpcond}, \eqref{eq:symcond}, \eqref{eq:normcond}
and the pole conditions are replaced by the jump conditions
\be \label{eq:jumpcond2}
\aligned
m_+(z) &= m_-(z) \begin{pmatrix} 1 & 0 \\
\frac{\zeta_k \gamma_k \E^{t\Phi(\zeta_k)}}{z-\zeta_k} & 1 \end{pmatrix},\quad |z-\zeta_k|=\eps,\\
m_+(z) &= m_-(z) \begin{pmatrix} 1 & \frac{z \gamma_k \E^{t\Phi(\zeta_k)}}{z-\zeta_k^{-1}} \\
0 & 1 \end{pmatrix},\quad |z^{-1}-\zeta_k|=\eps,
\endaligned
\ee
where the small circle around $\zeta_k$ is oriented counterclockwise and the one around
$\zeta_k^{-1}$ is oriented clockwise.
\end{lemma}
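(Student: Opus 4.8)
The plan is to treat the redefinition \eqref{eq:redefm} one singularity at a time, exploiting that the modification is supported in the disjoint discs around $\zeta_k$ and $\zeta_k^{-1}$ and leaves $m$ unchanged near the unit circle and near $z=0$. Fix $k$, abbreviate $c_k=\zeta_k\gam_k\E^{t\Phi(\zeta_k)}$, call $B_k(z)=\big(\begin{smallmatrix}1&0\\ c_k/(z-\zeta_k)&1\end{smallmatrix}\big)$ the lower-triangular factor used near $\zeta_k$ and $\wha B_k(z)$ the upper-triangular factor used near $\zeta_k^{-1}$ in \eqref{eq:redefm}. First I would expand the row vector $m=(m_1,m_2)$ in a Laurent series about $\zeta_k$. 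Carrying out the matrix product on the right-hand side, the pole condition \eqref{eq:polecond} reads $\res_{\zeta_k}(m_1,m_2)=(-c_k\,m_2(\zeta_k),0)$; thus $m_2$ is already regular at $\zeta_k$ while $m_1$ carries the simple pole $-c_k m_2(\zeta_k)/(z-\zeta_k)$. Right multiplication by $B_k$ fixes the second entry and sends the first to $m_1+c_k m_2/(z-\zeta_k)$, whose two polar parts cancel identically. Hence $mB_k$ is holomorphic at $\zeta_k$; this is the computation called straightforward before the statement.

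Reading off the jump is then pure orientation bookkeeping. On $|z-\zeta_k|=\eps$ oriented counterclockwise the interior is the $+$ side, so $m_+=mB_k$ and $m_-=m$, which is exactly $m_+=m_-B_k$, the first line of \eqref{eq:jumpcond2}. The singularity at $\zeta_k^{-1}$ is handled in the same way from the second pole condition in \eqref{eq:polecond} (now $m_1$ is regular and $m_2$ has the pole); here the clockwise orientation makes the interior the $-$ side, so that $m_+=m_-v$ forces $v$ to be the \emph{inverse} of $\wha B_k$, which is precisely why the off-diagonal sign flips to $+$ in the second line of \eqref{eq:jumpcond2}. Since the remark after Theorem~\ref{thm:vecrhp} already records that the $\zeta_k^{-1}$ pole condition follows from the $\zeta_k$ one by symmetry, one may instead obtain holomorphy and the second jump at $\zeta_k^{-1}$ directly from the $\zeta_k$ case.

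It remains to confirm that the four conditions of Theorem~\ref{thm:vecrhp} survive the surgery. The jump \eqref{eq:jumpcond} and the normalization \eqref{eq:normcond} are immediate, because the discs are chosen disjoint from $\T$ and from the origin, so $m$ is untouched there. The one genuinely substantive point, and the step I expect to require the most care, is that the symmetry \eqref{eq:symcond} is preserved: near $\zeta_k$ we multiply by the lower-triangular $B_k$ but near $\zeta_k^{-1}$ by the upper-triangular $\wha B_k$, and the two factors must be compatible under $z\mapsto z^{-1}$. Writing everything out, this reduces to the single matrix identity $\wha B_k(z^{-1})=\ssigI\,B_k(z)\,\ssigI$. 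Conjugation by $\ssigI$ turns $B_k$ into $\big(\begin{smallmatrix}1 & c_k/(z-\zeta_k)\\ 0&1\end{smallmatrix}\big)$, and the short computation $z^{-1}/(z^{-1}-\zeta_k^{-1})=-\zeta_k/(z-\zeta_k)$ shows that the off-diagonal entry of $\wha B_k(z^{-1})$ is likewise $c_k/(z-\zeta_k)$. Granting this identity together with the old symmetry $m(z^{-1})=m(z)\ssigI$, the redefined $m$ continues to satisfy $m(z^{-1})=m(z)\ssigI$ on the modified discs too, so that \eqref{eq:symcond} holds throughout, which completes the verification.
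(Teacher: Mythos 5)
Your proposal is correct and carries out exactly the ``straightforward calculation'' that the paper merely asserts before the lemma: the cancellation of the polar parts via $\res_{\zeta_k}(m_1,m_2)=(-c_k m_2(\zeta_k),0)$, the orientation bookkeeping that turns the clockwise circle's factor into its inverse, and the symmetry identity $\wha B_k(z^{-1})=\ssigI B_k(z)\ssigI$ are all verified correctly. Nothing is missing.
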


Finally, we note that the case of just one eigenvalue and zero reflection coefficient can be
solved explicitly.

\begin{lemma}[One soliton solution]\label{lem:singlesoliton}
Suppose there is only one eigenvalue and a vanishing reflection coefficient, that is,
$\mathcal{S}_+(H(t))=\{ R(z)\equiv 0,\; |z|=1; \: (\zeta, \gam) \}$ with $\zeta\in(-1,0)\cup(0,1)$ and $\gam\ge0$.
Then the Riemann--Hilbert problem \eqref{eq:jumpcond}--\eqref{eq:normcond} has a unique solution
is given by
\begin{align}\label{eq:oss}
m_0(z) &= \begin{pmatrix} f(z) & f(1/z) \end{pmatrix} \\
\nn f(z) &= \frac{1}{\sqrt{1 - \zeta^2 + \gam(n,t)} \sqrt{1 - \zeta^2 + \zeta^2 \gam(n,t)}}
\left(\gam(n,t) \zeta^2 \frac{z-\zeta^{-1}}{z - \zeta} + 1 - \zeta^2\right),
\end{align}
where $\gam(n,t)=\gam \E^{t\Phi(\zeta)}$.
In particular,
\be
A_+(n,t) = \sqrt{\frac{1-\zeta^2 + \gam(n,t)}{1 - \zeta^2 + \gam(n,t) \zeta^2}}, \qquad
B_+(n,t) = \frac{\gam(n,t)  \zeta (\zeta ^2-1)}{2 (1 - \zeta^2 + \gam(n,t) \zeta^2)}.
\ee
Furthermore, the zero solution is the only solution of the corresponding vanishing problem where
the normalization is replaced by $m(0) = (0\quad m_2)$ with $m_2$ arbitrary.
\end{lemma}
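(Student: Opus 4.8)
The plan is to exploit the fact that a vanishing reflection coefficient trivializes the jump and collapses the Riemann--Hilbert problem to a rational interpolation problem that can be solved explicitly. Setting $R(z)\equiv 0$ in \eqref{eq:jumpcond} makes the jump matrix $v(z)$ the identity, so $m(z)$ extends holomorphically across $\T$ and is therefore meromorphic on the whole Riemann sphere, with at worst simple poles at $\zeta$ and $\zeta^{-1}$. The normalization \eqref{eq:normcond} gives a finite value at $z=0$, and combined with the symmetry \eqref{eq:symcond} (which yields $m(\infty)=m(0)\sigI$) also a finite value at $z=\infty$. Hence each component of $m$ is a genuine rational function.

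First I would use the symmetry condition \eqref{eq:symcond}, which in components reads $(m_1(z^{-1}),m_2(z^{-1}))=(m_2(z),m_1(z))$, to write $m(z)=(f(z),f(1/z))$ for a single scalar rational function $f$. The pole conditions \eqref{eq:polecond} then say that $m_1=f$ has a simple pole only at $\zeta$ and is regular at $\zeta^{-1}$, which together with boundedness at $0$ and $\infty$ forces the two-parameter ansatz
\[
f(z) = A\,\frac{z-\zeta^{-1}}{z-\zeta} + B .
\]

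Next I would pin down $A$ and $B$. Computing $\res_\zeta f = A(\zeta-\zeta^{-1})$ and $m_2(\zeta)=f(\zeta^{-1})=B$ and substituting into the first relation in \eqref{eq:polecond}, namely $\res_\zeta m_1 = -\zeta\,\gam(n,t)\,m_2(\zeta)$ with $\gam(n,t)=\gam\E^{t\Phi(\zeta)}$, yields the linear relation $A=\frac{\zeta^2\gam(n,t)}{1-\zeta^2}B$; the second pole condition is then automatic by symmetry. Evaluating $m_1=f(0)$ and $m_2=f(\infty)$ gives $m_1=B\,\frac{1-\zeta^2+\gam(n,t)}{1-\zeta^2}$ and $m_2=B\,\frac{1-\zeta^2+\zeta^2\gam(n,t)}{1-\zeta^2}$, so the normalization $m_1 m_2=1$, $m_1>0$ fixes $B$ (and hence $A$) uniquely and produces \eqref{eq:oss} after simplification. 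The stated formulas for $A_+(n,t)$ and $B_+(n,t)$ then follow by matching the Taylor expansions of $f(z)$ and $f(1/z)$ at $z=0$ against \eqref{eq:AB}. For the vanishing problem the same reduction applies verbatim: one again gets $f(z)=A\frac{z-\zeta^{-1}}{z-\zeta}+B$ with $A=\frac{\zeta^2\gam(n,t)}{1-\zeta^2}B$, but now $m_1=f(0)=0$ forces $B\,\frac{1-\zeta^2+\gam(n,t)}{1-\zeta^2}=0$, and since $\zeta\in(-1,0)\cup(0,1)$ and $\gam\ge 0$ make the prefactor strictly positive, this gives $B=0$, whence $A=0$ and $m\equiv 0$.

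I do not expect a genuine obstacle here: the whole point is the observation that $R\equiv 0$ turns the problem into rational interpolation. The only step needing a little care is the rigorous justification that $m$ is globally rational with finite values at $0$, $\infty$ and no poles on $\T$, which rests on combining the trivial jump with the symmetry and normalization; and the algebraic bookkeeping, where one must keep the two distinct denominators $1-\zeta^2+\gam(n,t)$ and $1-\zeta^2+\zeta^2\gam(n,t)$ carefully apart when extracting $A_+$ and $B_+$.
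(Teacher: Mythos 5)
Your argument is correct and is essentially the paper's own proof: both reduce the problem, via the trivial jump and the symmetry condition, to a rational function $f$ with a single simple pole at $\zeta$, and then determine the two free constants from the pole condition $\res_\zeta f = -\zeta\gam(n,t)f(\zeta^{-1})$ and the normalization $f(0)f(\infty)=1$, $f(0)>0$ (your ansatz $A\frac{z-\zeta^{-1}}{z-\zeta}+B$ is just a reparametrization of the paper's $\frac{1}{A}(1+\frac{2B}{z-\zeta})$). The extra detail you supply on excluding a pole of $m_1$ at $\zeta^{-1}$ and on the vanishing problem is consistent with, and merely fills in, what the paper leaves implicit.
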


\begin{proof}
By symmetry, the solution must be of the form $m_0(z) = \begin{pmatrix} f(z) & f(1/z) \end{pmatrix}$,
where $f(z)$ is meromorphic in $\C\cup\{\infty\}$ with the only possible pole at $\zeta$. Hence
\[
f(z) = \frac{1}{A} \left( 1+ 2 \frac{B}{z - \zeta}\right),
\]
where the unknown constants $A$ and $B$ are uniquely determined by the pole condition
$\res_\zeta f(z) = -\zeta \gam(n,t) f(\zeta^{-1})$ and the normalization $f(0) f(\infty)=1$, $f(0)>0$.
\end{proof}

\section{Conjugation and deformation}
\label{sec:conjdef}

This section demonstrates how to conjugate our Riemann--Hilbert problem and
deform the jump contours, such that the jumps will be exponentially close to the identity
away from the stationary phase points. In order to do this we will assume that $R(z)$
has an analytic extension to a strip around the unit circle throughout this and the
following section. This is for example the case if the decay in \eqref{decay} is exponentially.
We will eventually show how to remove this assumption in Section~\ref{sec:analapprox}.

For easy reference we note the following result which can be checked by a straightforward
calculation.

\begin{lemma}[Conjugation]\label{lem:conjug}
Assume that $\wti{\Sigma}\subseteq\Sigma$. Let $D$ be a matrix of the form
\be
D(z) = \begin{pmatrix} d(z)^{-1} & 0 \\ 0 & d(z) \end{pmatrix},
\ee
where $d: \C\backslash\wti{\Sigma}\to\C$ is a sectionally analytic function. Set
\be
\ti{m}(z) = m(z) D(z),
\ee
then the jump matrix transforms according to
\be
\ti{v}(z) = D_-(z)^{-1} v(z) D_+(z).
\ee
If $d$ satisfies $d(z^{-1}) = d(z)^{-1}$ and $d(0) > 0$. Then the transformation $\ti{m}(z) = m(z) D(z)$
respects our symmetry, that is, $\ti{m}(z)$ satisfies \eqref{eq:symcond} if and only if $m(z)$ does.
\end{lemma}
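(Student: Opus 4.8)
The statement splits into two independent claims, and both reduce to direct matrix bookkeeping; the plan is to verify each by tracking boundary values across the contour and by exploiting the special structure of $D$.

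For the jump transformation I would start from the defining relation $\ti m(z) = m(z) D(z)$ and take non-tangential limits from both sides of $\Sigma$. Since $m_+ = m_- v$ on $\Sigma$, substituting gives $\ti m_+ = m_+ D_+ = m_- v D_+ = \ti m_- D_-^{-1} v D_+$, whence $\ti v = D_-^{-1} v D_+$. The only point to check is that no spurious jumps are created off $\wti{\Sigma}$: because $\wti{\Sigma}\subseteq\Sigma$ and $d$ is analytic on $\C\backslash\wti{\Sigma}$, the factor $D$ is continuous across $\Sigma\backslash\wti{\Sigma}$, so there $D_\pm = D$ and the formula collapses to ordinary conjugation $\ti v = D^{-1} v D$, consistent with the stated formula.

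For the symmetry claim the key algebraic identity is $D(z^{-1}) = \sigI D(z) \sigI$. I would verify this by substituting $d(z^{-1}) = d(z)^{-1}$ into $D(z^{-1})$, which turns the diagonal entries into $d(z)$ and $d(z)^{-1}$, and separately computing $\sigI D(z)\sigI$, which swaps the two diagonal entries and yields the same matrix. Granting this identity together with $\sigI\sigI = \id$, if $m$ obeys $m(z^{-1}) = m(z)\sigI$ then
\[
\ti m(z^{-1}) = m(z^{-1}) D(z^{-1}) = m(z)\sigI \sigI D(z)\sigI = m(z) D(z)\sigI = \ti m(z)\sigI,
\]
so $\ti m$ inherits the symmetry. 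Since $D(z)$ is invertible and itself satisfies the same intertwining identity, running the computation with $D^{-1}$ in place of $D$ gives the reverse implication, establishing the equivalence.

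Although the asserted conclusion concerns only \eqref{eq:symcond}, the hypothesis $d(0) > 0$ enters when one also wants the normalization \eqref{eq:normcond} preserved: from $\ti m(0) = m(0) D(0)$ one reads off $\ti m_1 = m_1 d(0)^{-1}$ and $\ti m_2 = m_2 d(0)$, so the product $\ti m_1 \ti m_2 = m_1 m_2 = 1$ is automatic while $\ti m_1 > 0$ requires $d(0) > 0$. There is no genuine obstacle here: every step is a finite matrix computation, and the only item deserving care is confirming that $\wti{\Sigma}\subseteq\Sigma$ prevents $D$ from introducing jumps outside the original contour, which is precisely the role of that hypothesis.
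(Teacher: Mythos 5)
Your proof is correct and carries out exactly the ``straightforward calculation'' that the paper invokes without writing out: the boundary-value substitution $\ti m_\pm = m_\pm D_\pm$ for the jump formula, and the intertwining identity $D(z^{-1})=\sigI D(z)\sigI$ (with $D$ invertible) for the symmetry equivalence. Your added observation about the role of $d(0)>0$ in preserving the normalization \eqref{eq:normcond} is also accurate and consistent with how the lemma is used later in the paper.
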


In particular, we obtain
\be
\ti{v} = \begin{pmatrix} v_{11} & v_{12} d^{2} \\ v_{21} d^{-2}  & v_{22} \end{pmatrix},
\qquad z\in\Sigma\backslash\wti{\Sigma},
\ee
respectively
\be
\ti{v} = \begin{pmatrix} \frac{d_-}{d_+} v_{11} & v_{12} d_+ d_- \\
v_{21} d_+^{-1} d_-^{-1}  & \frac{d_+}{d_-} v_{22} \end{pmatrix},
\qquad z\in\wti{\Sigma}.
\ee

In order to remove the poles there are two cases to distinguish. If $\re(\Phi(\zeta_k))<0$
the corresponding jumps \eqref{eq:jumpcond2} are exponentially close to the identity as $t\to\infty$
and there is nothing to do.
Otherwise, if $\re(\Phi(\zeta_k))<0$, we use conjugation to turn the jumps into exponentially decaying
ones, again following Deift, Kamvissis, Kriecherbauer, and Zhou \cite{dkkz} (see also \cite{krt}).
For this purpose we will use the next lemma which shows how $\gam_k \E^{t \Phi(\zeta_k)}$ can be
replaced by its inverse. It turns out that we will have to handle the poles at $\zeta_k$ and $\zeta_k^{-1}$
in one step in order to preserve symmetry and in order to not add additional poles
elsewhere.

\begin{lemma}\label{lem:twopolesinc}
Assume that the Riemann--Hilbert problem for $m$ has jump conditions near $\zeta$ and
$\zeta^{-1}$ given by
\be
\aligned
m_+(z)&=m_-(z)\begin{pmatrix}1&0\\ \frac{\gam \zeta}{z-\zeta}&1\end{pmatrix}, && |z-\zeta|=\eps, \\
m_+(z)&=m_-(z)\begin{pmatrix}1&\frac{\gam z}{z-\zeta^{-1}}\\0&1\end{pmatrix}, && |z^{-1}- \zeta|=\eps.
\endaligned
\ee
Then this Riemann--Hilbert problem is equivalent to a Riemann--Hilbert problem for $\ti{m}$ which has
jump conditions near $\zeta$ and $\zeta^{-1}$ given by
\begin{align*}
\ti{m}_+(z)&= \ti{m}_-(z)\begin{pmatrix}1& \frac{(\zeta z-1)^2}{\zeta (z-\zeta) \gam}\\ 0 &1\end{pmatrix},
&& |z-\zeta|=\eps, \\
\ti{m}_+(z)&= \ti{m}_-(z)\begin{pmatrix}1& 0 \\ \frac{(z-\zeta)^2}{\zeta z (\zeta z-1) \gam}&1\end{pmatrix},
&& |z^{-1}- \zeta|=\eps,
\end{align*}
and all remaining data conjugated (as in Lemma~\ref{lem:conjug}) by
\be
D(z) = \begin{pmatrix} \frac{z - \zeta}{\zeta z-1} & 0 \\ 0 & \frac{\zeta z-1}{z-\zeta} \end{pmatrix}.
\ee
\end{lemma}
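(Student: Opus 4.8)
The plan is to realize the passage from $m$ to $\ti{m}$ as a single conjugation $\ti{m}(z)=m(z)D(z)$ by the Blaschke-type matrix $D$, and to read off the new jumps from the pole/zero structure of $D$. First I would record the elementary properties of $D$. Writing $d(z)=\frac{\zeta z-1}{z-\zeta}$ for its lower entry, $d$ is meromorphic on $\C\cup\{\infty\}$ with a simple pole at $\zeta$ and a simple zero at $\zeta^{-1}$, while $d^{-1}$ has a simple zero at $\zeta$ and a simple pole at $\zeta^{-1}$. A direct check gives $d(z^{-1})=d(z)^{-1}$, so the conjugation respects the symmetry \eqref{eq:symcond} (the content of Lemma~\ref{lem:conjug}); and since $d$ is analytic and nonvanishing across $\T$ and across every contour not meeting $\{\zeta,\zeta^{-1}\}$, on all of those contours the jump is simply replaced by $D^{-1}vD$, which is the ``remaining data conjugated'' assertion. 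The value $d(0)=\zeta^{-1}$ may be negative, but this only affects the sign in the normalization \eqref{eq:normcond}, which is likewise carried along by $D$.

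The heart of the matter is the behaviour at $\zeta$ and $\zeta^{-1}$, and here it is cleanest to momentarily undo the redefinition \eqref{eq:redefm} (applying Lemma~\ref{lem:pctoj} in reverse), so that $m$ carries honest simple poles governed by the pole conditions \eqref{eq:polecond}, then conjugate, and finally re-install the small circles. At $\zeta$ the simple pole of $m_1$ is cancelled by the simple zero of $d^{-1}$, so that $\ti{m}_1=m_1 d^{-1}$ is regular there; simultaneously the pole of $d$ turns the regular $m_2$ into $\ti{m}_2=m_2 d$, which acquires a simple pole at $\zeta$. Thus the conjugation interchanges which entry carries the pole, converting the lower-triangular datum at $\zeta$ into an upper-triangular one. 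Computing residues from $\res_\zeta d=\zeta^2-1$ together with the pole relation $\res_\zeta m_1=-\zeta\gam\,m_2(\zeta)$ yields the new pole condition $\res_\zeta \ti{m}_2=c\,\ti{m}_1(\zeta)$ with $c=-(1-\zeta^2)^2/(\zeta\gam)$.

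Finally I would convert this new pole condition back into a jump on $|z-\zeta|=\eps$ exactly as in \eqref{eq:redefm}, using the symmetry-respecting triangular factor whose $(1,2)$-entry is $\frac{(\zeta z-1)^2}{\zeta(z-\zeta)\gam}$; one checks that its residue at $\zeta$ equals $-c$, so it indeed removes the pole and produces the claimed first jump for $\ti{m}$. The jump at $\zeta^{-1}$ need not be recomputed independently: it is forced by the symmetry \eqref{eq:symcond}, which carries the upper-triangular factor at $\zeta$ to the stated lower-triangular factor at $\zeta^{-1}$. Concretely, $\frac{(\zeta z-1)^2}{\zeta(z-\zeta)\gam}$ and $\frac{(z-\zeta)^2}{\zeta z(\zeta z-1)\gam}$ are exchanged under $z\mapsto z^{-1}$ up to the sign coming from the reversed (clockwise) orientation of the circle about $\zeta^{-1}$, which is precisely consistent with \eqref{eq:symcond}.

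The main obstacle is conceptual rather than computational: one must see that conjugating by a factor which is itself singular at $\zeta$ and $\zeta^{-1}$ nonetheless leaves $\ti{m}$ a bona fide holomorphic Riemann--Hilbert problem, merely interchanging the roles of the two off-diagonal entries, and that the residue bookkeeping combined with the symmetry-dictated choice of triangular factors reproduces exactly the $z$-dependent numerators $(\zeta z-1)^2$ and $(z-\zeta)^2$. Treating $\zeta$ and $\zeta^{-1}$ simultaneously, as the lemma insists, is what guarantees that no spurious pole is introduced elsewhere and that the symmetry \eqref{eq:symcond} survives the transformation.
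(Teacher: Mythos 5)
Your argument is correct and is in substance the same as the paper's: the paper simply writes down a piecewise matrix $D(z)$ that is non-diagonal inside the two small circles, and your composite transformation --- undoing \eqref{eq:redefm}, conjugating by the diagonal Blaschke matrix, and re-installing the pole-to-jump factor with the new residue $-c=(1-\zeta^2)^2/(\zeta\gam)$ --- multiplies out to exactly that piecewise matrix, namely $\bigl(\begin{smallmatrix} \frac{z-\zeta}{\zeta z-1} & \frac{\zeta z-1}{\zeta\gam}\\ -\frac{\zeta\gam}{\zeta z-1} & 0\end{smallmatrix}\bigr)$ inside $|z-\zeta|<\eps$. Your residue bookkeeping and the symmetry derivation of the jump at $\zeta^{-1}$ (where the inverse forced by the orientation convention accounts for the sign in $-u(1/z)=\frac{(z-\zeta)^2}{\zeta z(\zeta z-1)\gam}$) are both correct, so this is a valid, slightly more conceptual rendering of the paper's ``straightforward calculation.''
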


\begin{proof}
To turn $\gam$ into $\gam^{-1}$, introduce $D$ by
\[
D(z) = \begin{cases}
\begin{pmatrix} 1 & \frac{1}{\gam} \frac{z-\zeta}{\zeta}\\ - \gam \frac{\zeta}{z-\zeta} & 0 \end{pmatrix}
\begin{pmatrix} \frac{z - \zeta}{\zeta z-1} & 0 \\ 0 & \frac{\zeta z-1}{z-\zeta} \end{pmatrix}, &  |z-\zeta|<\eps, \\
\begin{pmatrix} 0 & \gam \frac{z \zeta}{z \zeta -1} \\ -\frac{1}{\gam} \frac{z \zeta -1}{z \zeta} & 1 \end{pmatrix}
\begin{pmatrix} \frac{z - \zeta}{\zeta z-1} & 0 \\ 0 & \frac{\zeta z-1}{z-\zeta} \end{pmatrix}, & |z^{-1}-\zeta|<\eps, \\
\begin{pmatrix} \frac{z - \zeta}{\zeta z-1} & 0 \\ 0 & \frac{\zeta z-1}{z-\zeta} \end{pmatrix}, & \text{else},
\end{cases}
\]
and note that $D(z)$ is analytic away from the two circles. Now set $\ti{m}(z) = m(z) D(z)$, which is again
symmetric by $D(z^{-1})= \ssigI D(z) \ssigI$.
The jumps along $|z-\zeta|=\eps$ and $|z^{-1}- \zeta|=\eps$ follow by a straightforward calculation and
the remaining jumps follow from Lemma~\ref{lem:conjug}.
\end{proof}

The jumps along $\T$ are of oscillatory type and our aim is to apply a contour deformation which will
move them into regions where the oscillatory terms will decay exponentially. Since the
jump matrix $v$ contains both $\exp(t \Phi)$ and $\exp(-t \Phi)$ we need to separate them
in order to be able to move them to different regions of the complex plane. For this
we will need the following factorizations of the jump condition \eqref{eq:jumpcond}.
First of all
\be
v(z)= b_-(z)^{-1} b_+(z),
\ee
where
\[
b_-(z)= \begin{pmatrix}
1 & \ol{R(z)}  \E^{-t\Phi(z)} \\ 0 &1
\end{pmatrix}, \qquad
b_+(z)= \begin{pmatrix}
1 & 0 \\ R(z) \E^{t\Phi(z)} & 1
\end{pmatrix}.
\]
This will be the proper factorization for $z>z_0$. Here $z>z_0$ has to be understood as $\lam(z)>\lam_0$.
Similarly, we have
\be\label{facB}
v(z)= B_-(z)^{-1} \begin{pmatrix} 1-|R(z)|^2 & 0 \\ 0 & \frac{1}{1-|R(z)|^2}\end{pmatrix} B_+(z),
\ee
where
\[
B_-(z) =\begin{pmatrix}
1 & 0 \\ - \frac{R(z)  \E^{t\Phi(z)}}{1-|R(z)|^2} &1
\end{pmatrix}, \qquad
B_+(z)= \begin{pmatrix}
1 & -\frac{\ol{R(z)}  \E^{-t\Phi(z)}}{1-|R(z)|^2} \\ 0 & 1
\end{pmatrix}.
\]
This will be the proper factorization for $z<z_0$.

To get rid of the diagonal part we need to solve the corresponding scalar Riemann--Hilbert problem.
Moreover, for $z_0\in(-1,0)$ we have $\re(\Phi(z))>0$ for $z\in(\zeta_0,0)$ and $\re(\Phi(z))<0$
for $z\in(-1,\zeta_0)\cup(0,1)$, for $z_0\in\T$ we have $\re(\Phi(z))>0$ for $z\in(-1,0)$ and
$\re(\Phi(z))<0$ for $z\in(0,1)$, and for $z_0\in(0,1)$ we have $\re(\Phi(z))>0$ for
$z\in(-1,0)\cup(\zeta_0,1)$ and $\re(\Phi(z))<0$ for $z\in(0,\zeta_0)$ (compare Figure~\ref{fig:signRePhi}
and note that by $\re(\Phi(z^{-1}))= -\re(\Phi(z))$ the curves $\re(\Phi(z))=0$ are symmetric
with respect to $z\mapsto z^{-1}$).

Together with the Blaschke factors needed to conjugate the jumps near the eigenvalues,
this is just the partial transmission coefficient $T(z,z_0)$ introduced in \eqref{def:Tzz0}. In fact, it
satisfies the following scalar meromorphic Riemann--Hilbert problem:

\begin{lemma}\label{lemT}
Set $\Sigma(z_0)=\emptyset$ for $z_0\in(-1,0)$,  $\Sigma(z_0)= \{z \in\T | \re(z)<\re(z_0)\}$ for
$z_0\in\T$, and $\Sigma(z_0)=\T$ for $z_0\in(0,1)$. Then the partial transmission coefficient
$T(z,z_0)$ is meromorphic for $z\in\C\backslash\Sigma(z_0)$, with
simple poles at $\zeta_j$ and simple zeros at $\zeta_j^{-1}$ for all $j$ with
$\frac{1}{2}(\zeta_j+\zeta_j^{-1})<\lam_0$, and satisfies the jump condition
\[
T_+(z,z_0) = T_-(z,z_0) (1 - |R(z)|^2), \qquad z\in\Sigma(z_0).
\]
Moreover,
\begin{enumerate}
\item
$T(z^{-1},z_0) = T(z,z_0)^{-1}$, $z\in\C\backslash\Sigma(z_0)$, and $T(0,z_0)>0$,
\item
$\ol{T(z,z_0)}=T(\ol{z},z_0)$, $z\in\C$ and, in particular, $T(z,z_0)$ is real-valued for $z\in\R$,
\item
$T(z,z_0) = T(z) (C+o(1))$ with $C\ne 0$ for $|z|\le 1$ near $\pm 1$ if $\pm 1 \in\Sigma(z_0)$
and continuous otherwise.
\end{enumerate}
\end{lemma}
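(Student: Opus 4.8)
The plan is to treat $T(z,z_0)$ as a product of two elementary pieces and analyze each separately. Write $T(z,z_0)=B(z)\,E(z)$, where $B(z)$ is the finite Blaschke product $\prod_k |\zeta_k|\frac{z-\zeta_k^{-1}}{z-\zeta_k}$ appearing in \eqref{def:Tzz0} and $E(z)=\exp\big(\frac{1}{2\pi\I}\int_{\Sigma(z_0)}\log(|T(s)|)\frac{s+z}{s-z}\frac{ds}{s}\big)$ is the exponential Cauchy-type integral (with $E\equiv1$, $\Sigma(z_0)=\emptyset$ when $z_0\in(-1,0)$). Since $\log(|T(s)|)=\tfrac12\log(1-|R(s)|^2)$ is integrable on $\Sigma(z_0)$ --- the only possible singularities are the integrable logarithmic ones at $s=\pm1$ --- the Cauchy transform is analytic in $z\in\C\backslash\Sigma(z_0)$, so $E$ is analytic and nowhere vanishing there. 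Hence $T(z,z_0)$ is meromorphic off $\Sigma(z_0)$ and its poles and zeros are exactly those of the rational factor $B$: simple poles at the $\zeta_k$ occurring in the product and simple zeros at the corresponding $\zeta_k^{-1}$, which are precisely the eigenvalues in the range named in the statement. This disposes of the meromorphy and the pole/zero claim.

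For the jump relation I would apply Sokhotski--Plemelj to $E$. On $\Sigma(z_0)$ the factor $B$ is analytic (its poles and zeros lie on the real axis, off $\T$), so $B_+=B_-$ and the jump comes entirely from $E$. Splitting the kernel as $\frac{s+z}{s-z}\frac{ds}{s}=\big(\frac{2}{s-z}-\frac1s\big)ds$, the $\frac1s$-part is analytic across $\Sigma(z_0)$ while Plemelj applied to the $\frac{2}{s-z}$-part gives $C_+(z)-C_-(z)=2\log(|T(z)|)$ for the exponent $C$. Exponentiating and using $|T|^2+|R|^2=1$ yields $E_+/E_-=\E^{2\log(|T(z)|)}=|T(z)|^2=1-|R(z)|^2$, i.e.\ $T_+(z,z_0)=T_-(z,z_0)(1-|R(z)|^2)$, the orientation conventions matching ``$+$'' to the limit from $|z|<1$.

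The two symmetries are then factor-by-factor computations. A short algebraic identity shows each Blaschke factor satisfies $\big(|\zeta_k|\frac{z-\zeta_k^{-1}}{z-\zeta_k}\big)\big(|\zeta_k|\frac{z^{-1}-\zeta_k^{-1}}{z^{-1}-\zeta_k}\big)=1$, so $B(z^{-1})=B(z)^{-1}$, and since the $\zeta_k$ are real, $\ol{B(z)}=B(\ol z)$. For $E$, the substitution $s\mapsto 1/s$ (which maps $\Sigma(z_0)$ to itself reversing orientation, using that $\Sigma(z_0)$ is symmetric under conjugation) together with $\log(|T(1/s)|)=\log(|T(s)|)$ on $\T$ turns the kernel into its negative, giving $C(z^{-1})=-C(z)$, hence $E(z^{-1})=E(z)^{-1}$ and property~(i). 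The same symmetry of $\log|T|$ under $\phi\mapsto-\phi$ (parametrizing $s=\E^{\I\phi}$) gives $\ol{C(z)}=C(\ol z)$, hence property~(ii); reality on $\R$ follows by setting $\ol z=z$. Finally $B(0)=\prod|\zeta_k|^{-1}>0$ and $E(0)=\exp(\frac1{2\pi}\int\log|T|\,d\phi)>0$ give $T(0,z_0)>0$.

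The delicate part is property~(iii), the band-edge behaviour, and this is where I expect the main work. I would compare $T(z,z_0)$ with the Poisson--Jensen representation of the full transmission coefficient $T(z)=B_{\mathrm{full}}(z)\exp(\frac1{2\pi\I}\int_\T\log(|T(s)|)\frac{s+z}{s-z}\frac{ds}{s})$, so that the ratio $T(z,z_0)/T(z)$ equals a finite Blaschke product over the \emph{omitted} eigenvalues times $\exp(-\frac1{2\pi\I}\int_{\T\backslash\Sigma(z_0)}\log(|T(s)|)\frac{s+z}{s-z}\frac{ds}{s})$. If $\pm1$ lies in the interior of $\Sigma(z_0)$, then the complementary contour $\T\backslash\Sigma(z_0)$ is bounded away from $\pm1$, so this ratio is analytic and nonzero near $\pm1$ and $T(z,z_0)=T(z)(C+o(1))$ with $C\neq0$; if instead $\pm1\notin\Sigma(z_0)$, then the integral defining $T(z,z_0)$ itself runs over a contour avoiding $\pm1$ and $T(z,z_0)$ is continuous (indeed analytic) there. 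The obstacle is purely analytic: one must control the at-most-logarithmic singularity of $\log|T|$ at the band edges to guarantee both integrability and the continuity and non-vanishing of the relevant Cauchy integrals, and carefully locate $\pm1$ relative to the endpoints $z_0,\ol{z_0}$ of the arc $\Sigma(z_0)$ in the case $|z_0|=1$ (where $-1$ is interior to the arc and $+1$ interior to its complement).
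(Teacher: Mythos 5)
Your proposal is correct and follows essentially the same route as the paper's (very terse) proof: the poles and zeros are read off the Blaschke factors, the jump comes from the Plemelj formula applied to the exponent of the Cauchy-type integral, and (i)--(iii) are verified by direct computation on each factor. You have simply supplied the details --- in particular the splitting $\frac{s+z}{s-z}\frac{ds}{s}=\bigl(\frac{2}{s-z}-\frac{1}{s}\bigr)ds$, the symmetry of the arc under $s\mapsto s^{-1}$, and the comparison with the Poisson--Jensen representation of $T(z)$ near the band edges --- that the paper declares ``straightforward to check.''
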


\begin{proof}
That $\zeta_j$ are simple poles and $\zeta_j^{-1}$ are simple zeros is obvious from the Blaschke
factors and that $T(z,z_0)$ has the given jump follows from Plemelj's formulas.
(i)--(iii) are straightforward to check.
\end{proof}
Observe that for $\zeta_0 < \zeta_N$ if $\zeta_N\in(0,1)$ respectively $\zeta_0 < 1$ else we have
$T(z)=T(z,z_0)$.

Moreover, note that (i) and (ii) imply
\be\label{absparT}
|T(z,z_0)|^2 = T(\ol{z},z_0) T(z,z_0)  = T(z^{-1},z_0) T(z,z_0) = 1, \qquad z\in\T\backslash\Sigma(z_0).
\ee
Now we are ready to perform our conjugation step. Introduce
\[
D(z) = \begin{cases}
\begin{pmatrix} 1 & \frac{z-\zeta_k}{\zeta_k \gam_k \E^{t\Phi(\zeta_k)}}\\
- \frac{\zeta_k \gam_k \E^{t\Phi(\zeta_k)}}{z-\zeta_k} & 0 \end{pmatrix}
D_0(z), &  |z-\zeta_k|<\eps, \: \lam_k < \frac{1}{2}(\zeta_0+\zeta_0^{-1}),\\
\begin{pmatrix} 0 & \frac{z \zeta_k \gam_k \E^{t\Phi(\zeta_k)}}{z \zeta_k -1} \\
-\frac{z \zeta_k -1}{z \zeta_k \gam_k \E^{t\Phi(\zeta_k)}} & 1 \end{pmatrix}
D_0(z), & |z^{-1}-\zeta_k|<\eps, \: \lam_k < \frac{1}{2}(\zeta_0+\zeta_0^{-1}),\\
D_0(z), & \text{else},
\end{cases}
\]
where
\[
D_0(z) = \begin{pmatrix} T(z,z_0)^{-1} & 0 \\ 0 & T(z,z_0) \end{pmatrix}.
\]
Note that we have
\[
D(z^{-1})= \sigI D(z) \sigI.
\]
Now we conjugate our vector $m(z)$ defined in \eqref{defm} respectively \eqref{eq:redefm}
using $D(z)$,
\be\label{eq:tim}
\ti{m}(z)=m(z) D(z).
\ee
Since, by Lemma~\ref{lemT} (iii), $T(z,z_0)$ is either nonzero and continuous near $z=\pm1$ (if $\pm 1 \notin\Sigma(z_0)$) or it has
the same behaviour as $T(z)$ near $z=\pm 1$ (if $\pm 1 \in\Sigma(z_0)$), the new vector $\ti{m}(z)$
is again continuous near $z=\pm 1$ (even if $T(z)$ vanishes there).

Then using Lemma~\ref{lem:conjug} and Lemma~\ref{lem:twopolesinc} the jumps
corresponding to eigenvalues $\lam_k <\frac{1}{2}(\zeta_0+\zeta_0^{-1})$ (if any) are given by
\be
\aligned
\ti{v}(z) &= \begin{pmatrix}1& \frac{z-\zeta_k}{\zeta_k
\gam_k T(z,z_0)^{-2} \E^{t\Phi(\zeta_k)} }\\ 0 &1\end{pmatrix},
\qquad |z-\zeta_k|=\eps, \\ \label{eq:tvzetak1}
\ti{v}(z) &= \begin{pmatrix}1& 0 \\ \frac{\zeta_k z -1}{\zeta_k z \gam_k T(z,z_0)^2
\E^{t\Phi(\zeta_k)}}&1\end{pmatrix},
\qquad |z^{-1}- \zeta_k|=\eps,
\endaligned
\ee
and corresponding to eigenvalues $\lam_k > \frac{1}{2}(\zeta_0+\zeta_0^{-1})$ (if any) by
\be
\aligned
\ti{v}(z) &= \begin{pmatrix} 1 & 0 \\ \frac{\zeta_k \gam_k T(z,z_0)^{-2} \E^{t\Phi(\zeta_k)}}{z-\zeta_k}
 & 1 \end{pmatrix},
\qquad |z-\zeta_k|=\eps, \\ \label{eq:tvzetak2}
\ti{v}(z) &= \begin{pmatrix} 1 & \frac{z \gam_k T(z,z_0)^2 \E^{t\Phi(\zeta_k)}}{z-\zeta_k^{-1}} \\
0 & 1 \end{pmatrix},
\qquad |z^{-1}-\zeta_k|=\eps.
\endaligned
\ee
In particular, an investigation of the sign of $\re(\Phi(z))$ (see Figure~\ref{fig:signRePhi} below) shows
that all off-diagonal entries of these jump matrices, except for possibly one if
$\zeta_{k_0}=\zeta_0$ for some $k_0$, are exponentially decreasing. In the latter case we will keep the
pole condition for $\zeta_{k_0}=\zeta_0$ which now reads
\be
\aligned
\res_{\zeta_{k_0}} \ti{m}(z) &= \lim_{z\to\zeta_{k_0}} \ti{m}(z)
\begin{pmatrix} 0 & 0\\ - \zeta_{k_0} \gam_{k_0} T(\zeta_{k_0},z_0)^{-2} \E^{t\Phi(\zeta_{k_0})}  & 0 \end{pmatrix},\\\label{eq:tvzetak0}
\res_{\zeta_{k_0}^{-1}} \ti{m}(z) &= \lim_{z\to\zeta_{k_0}^{-1}} \ti{m}(z)
\begin{pmatrix} 0 & \zeta_{k_0}^{-1} \gam_{k_0} T(\zeta_{k_0},z_0)^{-2} \E^{t\Phi(\zeta_{k_0})} \\ 0 & 0 \end{pmatrix}.
\endaligned
\ee
Furthermore, the jump along
$\T$ is given by
\be
\ti{v}(z) = \begin{cases}
\ti{b}_-(z)^{-1} \ti{b}_+(z), \qquad \lam(z)> \lam_0,\\
\ti{B}_-(z)^{-1} \ti{B}_+(z), \qquad \lam(z)< \lam_0,\\
\end{cases}
\ee
where
\be\label{eq:deftib}
\ti{b}_-(z) = \begin{pmatrix} 1 & \frac{R(z^{-1}) \E^{-t\Phi(z)}}{T(z^{-1},z_0)^2} \\ 0 & 1 \end{pmatrix}, \quad
\ti{b}_+(z) = \begin{pmatrix} 1 & 0 \\ \frac{R(z) \E^{t\Phi(z)}}{T(z,z_0)^2}& 1 \end{pmatrix},
\ee
and
\begin{align}\nn
\ti{B}_-(z) &= \begin{pmatrix} 1 & 0 \\ - \frac{T_-(z,z_0)^{-2}}{1-R(z)R(z^{-1})} R(z) \E^{t\Phi(z)} & 1 \end{pmatrix}, \\
\ti{B}_+(z) &= \begin{pmatrix} 1 & - \frac{T_+(z,z_0)^2}{1-R(z)R(z^{-1})} R(z^{-1}) \E^{-t\Phi(z)} \\ 0 & 1 \end{pmatrix}.
\end{align}
Here we have used $T_\pm(z^{-1},z_0)=T_\pm(\ol{z},z_0)=\ol{T_\pm(z,z_0)}$ and
$R(z^{-1})=R(\ol{z})=\ol{R(z)}$ for $z\in\T$ to show that there exists an analytic continuation
into a neighborhood of the unit circle. Moreover, using
\[
T_\pm(z,z_0)=T_\mp(z^{-1},z_0)^{-1}, \qquad z\in\Sigma(z_0),
\]
we can write
\be \label{eq:relB}
\frac{T_-(z,z_0)^{-2}}{1-R(z)R(z^{-1})} = \frac{\ol{T_-(z,z_0)}}{T_-(z,z_0)}, \quad
\frac{T_+(z,z_0)^2}{1-R(z)R(z^{-1})} = \frac{T_+(z,z_0)}{\ol{T_+(z,z_0)}}
\ee
for $z\in\T$, which shows that the matrix entries are in fact bounded.

Now we deform the jump along $\T$ to move the oscillatory terms
into regions where they are decaying. There are three cases to distinguish (see
Figure~\ref{fig:signRePhi}):
\begin{figure}
\begin{picture}(3.5,3)

\put(1.1,2.6){$z_0\in(-1,0)$}
\put(0.2,0.2){$+$}
\put(0,1.3){$-$}
\put(1.2,0.8){$-$}
\put(1.2,1.15){$+$}

\put(0,1.2){\vector(1,0){3.2}}
\put(1.8,0){\vector(0,1){2.5}}
\put(1.017,1.2){\circle*{0.06}}
\put(0.87,1.35){\scriptsize $\zeta_0$}
\put(0.51,1.2){\circle*{0.06}}
\put(0.0,0.95){\scriptsize $\zeta_0^{-1}$}

\closecurve(0.8,1.2, 1.093,0.493, 1.8,0.2, 2.507,0.493, 2.8,1.2, 2.507,1.907,
1.8,2.2, 1.093,1.907)
\closecurve(1.017,1.2, 1.111,1.094, 1.213,1.05, 1.733,1.094, 1.8,1.2,
1.733,1.306, 1.213,1.35, 1.111,1.306)
\curve(-0.05,0.5, 0.064,0.64, 0.19,0.78, 0.322,0.92, 0.452,1.06, 0.523,1.2,
0.452,1.34, 0.322,1.48, 0.19,1.62, 0.064,1.76, -0.05,1.9)
\end{picture}\quad
\begin{picture}(3.5,3)

\put(0.5,2.6){$z_0\in\T$}
\put(0.2,0.2){$-$}
\put(0.7,0.7){$+$}
\put(1.5,1.3){$-$}
\put(2.4,1.5){$+$}

\put(0,1.2){\vector(1,0){3.2}}
\put(1.2,0){\vector(0,1){2.5}}
\put(1.91,1.9){\circle*{0.06}}
\put(1.7,2.1){\scriptsize $z_0^{-1}$}
\put(1.91,0.5){\circle*{0.06}}
\put(1.8,0.25){\scriptsize $z_0$}

\closecurve(0.2,1.2, 0.493,0.493, 1.2,0.2, 1.907,0.493, 2.2,1.2, 1.907,1.907,
1.2,2.2, 0.493,1.907)

\curve(2.359,-0.1, 2.293,0, 2.223,0.1, 2.148,0.2,
2.068,0.3, 1.981,0.4, 1.9,0.5, 1.782,0.6, 1.668,0.7, 1.547,0.8,
1.424,0.9, 1.316,1., 1.236,1.1, 1.2,1.2, 1.236,1.3, 1.316,1.4,
1.424,1.5, 1.547,1.6, 1.668,1.7, 1.782,1.8, 1.914,1.9, 1.981,2.,
2.068,2.1, 2.148,2.2, 2.223,2.3, 2.293,2.4, 2.359,2.5)

\end{picture}\quad
\begin{picture}(3.5,3)

\put(0.5,2.6){$z_0\in(0,1)$}
\put(0.2,0.2){$-$}
\put(0.7,0.7){$+$}
\put(1.5,1.2){$-$}
\put(2.3,1.7){$-$}
\put(2.8,0.9){$+$}

\put(0,1.2){\vector(1,0){3.2}}
\put(1.2,0){\vector(0,1){2.5}}
\put(1.983,1.2){\circle*{0.06}}
\put(1.92,1.33){\scriptsize $\zeta_0$}
\put(2.5,1.2){\circle*{0.06}}
\put(2.69,1.33){\scriptsize $\zeta_0^{-1}$}

\closecurve(0.2,1.2, 0.493,0.493, 1.2,0.2, 1.907,0.493, 2.2,1.2,
1.907,1.907, 1.2,2.2, 0.493,1.907)
\closecurve(1.2,1.2, 1.267,1.094, 1.787,1.05, 1.889,1.094, 1.983,1.2,
1.889,1.306, 1.787,1.35, 1.267,1.306, 1.2,1.2)
\curve(3.053,0.5, 2.905,0.675, 2.745,0.85, 2.578,1.025, 2.477,1.2,
2.578,1.375, 2.745,1.55, 2.905,1.725, 3.053,1.9)
\end{picture}

\caption{Sign of $\re(\Phi(z))$ for different values of $z_0$}\label{fig:signRePhi}
\end{figure}
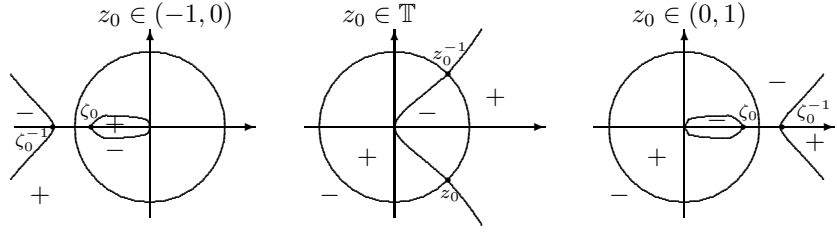

{\bf Case 1: $z_0\in(-1,0)$.}
In this case we will set $\Sigma_\pm=\{ z |\, |z|=(1-\eps)^{\pm 1}\}$ for some small
$\eps\in(0,1)$ such that $\Sigma_\pm$ lies in the region with $\pm \re(\Phi(z))< 0$
and such that we do not intersect the original contours (i.e., we stay away from
$\zeta_j^{\pm 1}$).
Then we can split our jump by redefining $\ti{m}(z)$ according to
\be \label{eq:tib}
\hat{m}(z) = \begin{cases}
\ti{m}(z) \ti{b}_+(z)^{-1}, & (1-\eps)<|z|<1,\\
\ti{m}(z) \ti{b}_-(z)^{-1}, & 1<|z|<(1-\eps)^{-1},\\
\ti{m}(z), & \text{else}.
\end{cases}
\ee
It is straightforward to check that the jump along $\T$ disappears and the
jump along $\Sigma_\pm$ is given by
\be\label{eq:jumpsolreg}
\hat{v}(z) = \begin{cases}
\ti{b}_+(z), & z\in\Sigma_+, \\
\ti{b}_-(z)^{-1}, & z\in\Sigma_-.
\end{cases}
\ee
The other jumps \eqref{eq:tvzetak1}, \eqref{eq:tvzetak2} as well as the pole condition \eqref{eq:tvzetak0}
(if present) are unchanged. 

Note that the resulting Riemann--Hilbert problem still
satisfies our symmetry condition \eqref{eq:symcond} since we have
\[
\ti{b}_\pm(z^{-1}) = \sigI \ti{b}_\mp(z) \sigI .
\]
By construction all jumps \eqref{eq:tvzetak1}, \eqref{eq:tvzetak2}, and \eqref{eq:tib} are exponentially close to
the identity as $t\to\infty$. The only non-decaying part being the pole condition \eqref{eq:tvzetak0}
(if present).

{\bf Case 2: $z_0\in\T\setminus\{\pm1\}$.}
In this case we will set $\Sigma_\pm=\Sigma_\pm^1\cup\Sigma_\pm^2$ as indicated in
Figure~\ref{fig:defcont}. Again note that $\Sigma_\pm^1$ respectively $\Sigma_\mp^2$
lies in the region with $\pm \re(\Phi(z))< 0$ and must be chosen such that we do not intersect any
other parts of the contour.
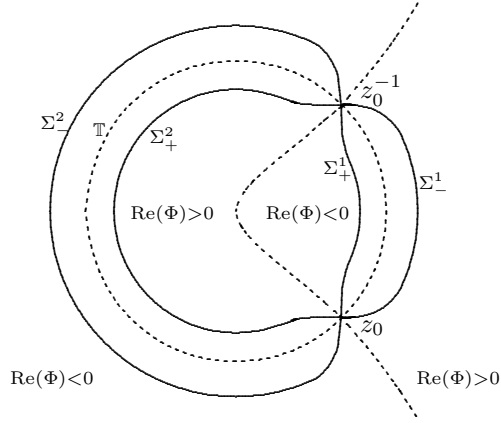
\begin{figure}
\begin{picture}(6,6)

\put(0,0.7){$\scriptstyle \re(\Phi)<0$}
\put(5.4,0.7){$\scriptstyle \re(\Phi)>0$}
\put(1.6,2.9){$\scriptstyle \re(\Phi)>0$}
\put(3.4,2.9){$\scriptstyle \re(\Phi)<0$}

\put(0.4,4.1){$\scriptstyle \Sigma_-^2$}
\put(4.16,3.5){$\scriptstyle \Sigma_+^1$}
\put(1.1,4){$\scriptstyle \T$}
\put(1.85,3.9){$\scriptstyle \Sigma_+^2$}
\put(5.43,3.3){$\scriptstyle \Sigma_-^1$}

\put(4.65,4.5){$z_0^{-1}$}
\put(4.65,1.4){$z_0$}

\curve(1.38,3., 1.4,2.747, 1.459,2.499, 1.557,2.265, 1.689,2.048,
1.854,1.854, 2.048,1.689, 2.265,1.557, 2.499,1.459, 2.747,1.4,
3.,1.38, 3.253,1.4, 3.501,1.459, 3.739,1.55, 4.029,1.584,
4.428,1.572, 4.852,1.654, 5.15,1.904, 5.302,2.252, 5.39,2.621,
5.42,3., 5.39,3.379, 5.302,3.748, 5.15,4.096, 4.852,4.346,
4.428,4.428, 4.029,4.416, 3.739,4.45, 3.501,4.541, 3.253,4.6,
3.,4.62, 2.747,4.6, 2.499,4.541, 2.265,4.443, 2.048,4.311,
1.854,4.146, 1.689,3.952, 1.557,3.735, 1.459,3.501, 1.4,3.253,
1.38,3.)

\curve(0.531,3., 0.561,2.614, 0.652,2.237, 0.8,1.879, 1.002,1.549,
1.254,1.254, 1.549,1.002, 1.879,0.8, 2.237,0.652, 2.614,0.561,
3.,0.531, 3.386,0.561, 3.763,0.652, 4.116,0.809, 4.343,1.151,
4.4,1.6, 4.413,1.973, 4.477,2.247, 4.572,2.489, 4.633,2.741,
4.653,3., 4.633,3.259, 4.572,3.511, 4.477,3.753, 4.413,4.027,
4.4,4.4, 4.343,4.849, 4.116,5.191, 3.763,5.348, 3.386,5.439,
3.,5.469, 2.614,5.439, 2.237,5.348, 1.879,5.2, 1.549,4.998,
1.254,4.746, 1.002,4.451, 0.8,4.121, 0.652,3.763, 0.561,3.386,
0.531,3.)

\curvedashes{0.05,0.05}
\curve(1.,3., 1.586,1.586, 3.,1., 4.414,1.586, 5.,3., 4.414,4.414,
3.,5., 1.586,4.414, 1.,3.)

\curve(5.441,0.2, 5.317,0.4, 5.186,0.6, 5.046,0.8, 4.897,1.,
4.736,1.2, 4.562,1.4, 4.4,1.6, 4.164,1.8, 3.937,2., 3.693,2.2,
3.449,2.4, 3.231,2.6, 3.072,2.8, 3.,3., 3.072,3.2, 3.231,3.4,
3.449,3.6, 3.693,3.8, 3.937,4., 4.164,4.2, 4.428,4.4, 4.562,4.6,
4.736,4.8, 4.897,5., 5.046,5.2, 5.186,5.4, 5.317,5.6, 5.441,5.8)
\end{picture}
\caption{Deformed contour}\label{fig:defcont}
\end{figure}
Then we can split our jump by redefining $\ti{m}(z)$ according to
\be
\hat{m}(z) = \begin{cases}
\ti{m}(z) \ti{b}_+(z)^{-1}, & z \text { between $\T$ and } \Sigma_+^1,\\
\ti{m}(z) \ti{b}_-(z)^{-1}, & z \text { between $\T$ and } \Sigma_-^1,\\
\ti{m}(z) \ti{B}_+(z)^{-1}, & z \text { between $\T$ and } \Sigma_+^2,\\
\ti{m}(z) \ti{B}_-(z)^{-1}, & z \text { between $\T$ and } \Sigma_-^2,\\
\ti{m}(z), & \text{else}.
\end{cases}
\ee
One checks that the jump along $\T$ disappears and the
jump along $\Sigma_\pm$ is given by
\be
\hat{v}(z) = \begin{cases}
\ti{b}_+(z), & z\in\Sigma_+^1, \\
\ti{b}_-(z)^{-1}, & z\in\Sigma_-^1,\\
\ti{B}_+(z), & z\in\Sigma_+^2,\\
\ti{B}_-(z)^{-1}, & z\in\Sigma_-^2.
\end{cases}
\ee
All other jumps \eqref{eq:tvzetak1} and \eqref{eq:tvzetak2} are unchanged. Again the resulting Riemann--Hilbert
problem still satisfies our symmetry condition \eqref{eq:symcond} and the jump along
$\Sigma_\pm$ away from the stationary phase points $z_0$, $z_0^{-1}$ is exponentially
close to the identity as $t\to\infty$.

{\bf Case 3: $z_0\in(0,1)$.}
In this case we will set $\Sigma_\pm=\{ z |\, |z|=(1-\eps)^{\pm 1}\}$ for some small
$\eps\in(0,1)$ such that $\Sigma_\pm$ lies in the region with $\mp \re(\Phi(z))< 0$
and such that we do not intersect the original contours.
Then we can split our jump by redefining $\ti{m}(z)$ according to
\be\label{eq:tiB}
\hat{m}(z) = \begin{cases}
\ti{m}(z) \ti{B}_+(z)^{-1}, & (1-\eps)<|z|<1,\\
\ti{m}(z) \ti{B}_-(z)^{-1}, & 1<|z|<(1-\eps)^{-1},\\
\ti{m}(z), & \text{else}.
\end{cases}
\ee
One checks that the jump along $\T$ disappears and the
jump along $\Sigma_\pm$ is given by
\be
\hat{v}(z) = \begin{cases} \ti{B}_+(z), & z\in\Sigma_+, \\
\ti{B}_-(z)^{-1}, & z\in\Sigma_-.\end{cases}
\ee
The other jumps \eqref{eq:tvzetak1}, \eqref{eq:tvzetak2} as well as the pole condition \eqref{eq:tvzetak0}
(if present) are unchanged. Again the resulting Riemann--Hilbert problem still
satisfies our symmetry condition \eqref{eq:symcond} and all jumps \eqref{eq:tvzetak1}, \eqref{eq:tvzetak2},
and \eqref{eq:tiB} are exponentially close to
the identity as $t\to\infty$. The only non-decaying part being the pole condition \eqref{eq:tvzetak0}
(if present).

In Case~1 and 3 we can immediately apply Theorem~\ref{thm:remcontour} to $\hat{m}$ as follows:
If $|\frac{n}{t} - c_k|>\eps$ for all $k$ we can choose $\gam_0=0$.
Since the error between $\hat{w}^t$ and $\hat{w}_0^t$ is exponentially small, this proves the second
part of Theorem~\ref{thm:asym} in the analytic case upon comparing
\be
m(z) = \hat{m}(z) \begin{pmatrix} T(z,z_0) & 0\\ 0 & T(z,z_0)^{-1} \end{pmatrix}
\ee
with \eqref{eq:AB}. The changes necessary for the
general case will be given in Section~\ref{sec:analapprox}.

Otherwise, if $|\frac{n}{t} - c_k|<\eps$ for some $k$, we choose $\gam_0^t=\gam_k(n,t)$.
Again we conclude that the error between $\hat{w}^t$ and $\hat{w}_0^t$ is
exponentially small, proving the first part of Theorem~\ref{thm:asym}. The changes necessary for the
general case will also be given in Section~\ref{sec:analapprox}.

In Case~2 the jump will not decay on the two small crosses containing the stationary phase
points $z_0$ and $z_0^{-1}$. Hence we will need to continue the investigation of this problem
in the next section.

\section{Reduction to a Riemann--Hilbert problem on a small cross}
\label{sec:reducecross}

In the previous section we have shown that for $z_0\in\T\backslash\{\pm 1\}$ we can
reduce everything to a Riemann--Hilbert problem for $\hat{m}(z)$ such that the jumps
are of order $O(t^{-1})$ except in a small neighborhoods of the stationary phase
points $z_0$ and $z_0^{-1}$. Denote by $\Sigma^C(z_0^{\pm 1})$ the parts of
$\Sigma_+\cup\Sigma_-$ inside a small neighborhood of $z_0^{\pm 1}$. In this section
we will show that everything can reduced to solving the two problems in the two small
crosses $\Sigma^C(z_0)$ respectively $\Sigma^C(z_0^{-1})$.

It will be slightly more convenient to use the alternate normalization
\be\label{eq:checkm}
\check{m}(z) = \frac{1}{\ti{A}} \hat{m}(z), \qquad A = T_0 \ti{A},
\ee
such that
\be
\check{m}(0) = \begin{pmatrix} 1 & \frac{1}{\ti{A}^2} \end{pmatrix}.
\ee
Without loss of generality we can also assume that $\hat{\Sigma}$ consists of two
straight lines in a sufficiently small neighborhood of $z_0$.

We will need the solution of the corresponding $2\times2$ matrix
\be\label{eq:RHPM}
\aligned
M^C_+(z)&= M^C_-(z) \ti{v}(z), \qquad z\in\Sigma^C,\\
M^C(\infty)&=\id,
\endaligned
\ee
where the jump $\ti{v}$ is the same as for $\ti{m}(z)$ but restricted to a
neighborhood of one of the two crosses $\Sigma^C=(\Sigma_+\cup\Sigma_-)\cap
\{ z | \,|z-z_0|<\eps/2\}$ for some small $\eps>0$.

As a first step we make a change of coordinates
\be\label{eqchoc}
\zeta= \frac{\sqrt{-2\sin(\theta_0)}}{z_0 \I} (z-z_0), \qquad
z= z_0 + \frac{z_0 \I}{\sqrt{-2\sin(\theta_0)}} \zeta
\ee
such that the phase reads $\Phi(z)= \I \Phi_0 + \frac{\I}{2} \zeta^2 +O(\zeta^3)$.
Here we have set
\[
z_0 = \E^{\I \theta_0}, \qquad \theta_0\in(-\pi,0),
\]
respectively $\cos(\theta_0) = -n/t$, which implies
\[
\Phi_0 = 2(\sin(\theta_0) - \theta_0 \cos(\theta_0)),\qquad
\Phi''(z_0) = 2 \I\E^{-2\I\theta_0} \sin(\theta_0).
\]
The corresponding Riemann--Hilbert problem will be solved in Section~\ref{sec:cross}.
To apply this result we need the behaviour of our jump matrices near $z_0$,
that is, the behaviour of $T(z,z_0)$ near $z\to z_0$.

\begin{lemma}
Let $z_0\in\T$, then
\be
T(z,z_0) = \left(-\ol{z_0}\frac{z-z_0}{z-\ol{z_0}}\right)^{\I\nu} \ti{T}(z,z_0)
\ee
where $\nu = -\frac{1}{\pi} \log(|T(z_0)|)$ and the branch cut of the logarithm used to define $z^{\I\nu}=\E^{\I\nu\log(z)}$
is chosen along the negative real axis. Here
\[
\ti{T}(z,z_0) = \prod\limits_{\zeta_k\in(-1,0)} |\zeta_k| \frac{z-\zeta_k^{-1}}{z-\zeta_k} \cdot
\exp\left(\frac{1}{2\pi\I}\int\limits_{\ol{z_0}}^{z_0}\log\Big(\frac{|T(s)|}{|T(z_0)|}\Big) \frac{s+z}{s-z} \frac{ds}{s}\right),
\]
is H\"older continuous of any exponent less than $1$ at $z=z_0$ and satisfies $\ti{T}(z_0,z_0)\in\T$.
\end{lemma}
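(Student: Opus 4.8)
The plan is to compare the given definition of $T(z,z_0)$ for $|z_0|=1$ with the claimed factorization, reducing everything to an explicit evaluation of a Cauchy-type integral with constant density. First I observe that the Blaschke product $\prod_{\zeta_k\in(-1,0)}|\zeta_k|\frac{z-\zeta_k^{-1}}{z-\zeta_k}$ is analytic and non-vanishing in a neighbourhood of $z_0$ (the poles $\zeta_k\in(-1,0)$ stay bounded away from $\T$), so it is the common factor appearing in both $T$ and $\ti{T}$ and may be set aside. It thus suffices to split the density in the exponential integral as $\log(|T(s)|)=\log(|T(z_0)|)+g(s)$ with $g(s)=\log(|T(s)|/|T(z_0)|)$, so that the contribution of $g$ is precisely the exponential defining $\ti{T}(z,z_0)$, while the contribution of the constant $\log(|T(z_0)|)$ must be shown to equal the power $\big(-\ol{z_0}\frac{z-z_0}{z-\ol{z_0}}\big)^{\I\nu}$.

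For the constant part I use the partial fraction decomposition $\frac{s+z}{s(s-z)}=-\frac1s+\frac{2}{s-z}$ and evaluate the two resulting integrals along the arc $\Sigma(z_0)$. Writing $z_0=\E^{\I\theta_0}$, the integral $\int_{\ol{z_0}}^{z_0}\frac{ds}{s}$ equals $\I(2\pi+2\theta_0)$, since the angle increases from $-\theta_0$ to $2\pi+\theta_0$ as $s$ traverses $\Sigma(z_0)$ counterclockwise, whereas $\int_{\ol{z_0}}^{z_0}\frac{ds}{s-z}$ gives $\log\frac{z-z_0}{z-\ol{z_0}}$ by continuation of the logarithm along the arc. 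Using $\frac{\log(|T(z_0)|)}{2\pi\I}=\frac{\I\nu}{2}$ together with the identity $-\I(\pi+\theta_0)=\log(-\ol{z_0})$ (valid for the branch cut along the negative real axis, as $-\ol{z_0}=\E^{\I(-\pi-\theta_0)}$ with $-\pi-\theta_0\in(-\pi,0)$), the constant contribution collapses to $\I\nu\log\big(-\ol{z_0}\frac{z-z_0}{z-\ol{z_0}}\big)$, exactly the stated power. The main obstacle is precisely this branch bookkeeping: one must check that the $2\pi$ picked up by $\int ds/s$ on the long arc combines with $2\theta_0$ to produce exactly the prefactor $-\ol{z_0}$ under the stated branch, and that the continuation of $\log\frac{z-z_0}{z-\ol{z_0}}$ along $\Sigma(z_0)$ is consistent with the same cut.

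For the regularity claim I invoke the Plemelj--Privalov theorem. Since $|T|=(1-|R|^2)^{1/2}$ with $R\in C^{l+1}(\T)$ and $|T(z_0)|\neq0$ away from the band edges $\pm1$, the density $g$ is H\"older continuous of any exponent less than $1$ and vanishes at $s=z_0$; hence the Cauchy integral $\frac{1}{\pi\I}\int_{\ol{z_0}}^{z_0}\frac{g(s)}{s-z}\,ds$ (the $z$-dependent part after the same partial fraction split, the $-\frac1s$ part being a $z$-independent constant) extends H\"older continuously up to $z=z_0$ of any exponent less than $1$, and so does $\ti{T}(z,z_0)$. Finally, to see $\ti{T}(z_0,z_0)\in\T$ I check that each Blaschke factor is unimodular on $\T$ (a direct computation gives $|\zeta_k|\,\big|\frac{z_0-\zeta_k^{-1}}{z_0-\zeta_k}\big|=1$), and that the remaining exponent at $z=z_0$ is purely imaginary: parametrizing $s=\E^{\I\phi}$ one finds $\frac{s+z_0}{s-z_0}\frac{ds}{s}=\cot\big(\frac{\phi-\theta_0}{2}\big)\,d\phi\in\R$, so $\frac{1}{2\pi\I}\int_{\ol{z_0}}^{z_0}g(s)\frac{s+z_0}{s-z_0}\frac{ds}{s}$ is $\frac{1}{2\pi\I}$ times a real integral, convergent at $z_0$ because $g$ vanishes there, hence purely imaginary. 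This forces $|\ti{T}(z_0,z_0)|=1$.
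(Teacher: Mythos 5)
Your proof is correct and, for the main factorization, follows the same route the paper's (very terse) proof takes: split the density as $\log|T(s)|=\log|T(z_0)|+g(s)$ and identify the constant-density Cauchy integral with the power $\bigl(-\ol{z_0}\tfrac{z-z_0}{z-\ol{z_0}}\bigr)^{\I\nu}$ — the paper simply asserts this identity, whereas you actually carry out the partial-fraction computation and the branch bookkeeping (your values $\int_{\ol{z_0}}^{z_0}ds/s=\I(2\pi+2\theta_0)$ and $\log(-\ol{z_0})=-\I(\pi+\theta_0)$ for $\theta_0\in(-\pi,0)$ check out). The Hölder claim via the standard endpoint behaviour of Cauchy integrals with a density vanishing at the end of the contour is likewise the intended argument. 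The one place you genuinely diverge is the claim $\ti{T}(z_0,z_0)\in\T$: the paper gets it for free by letting $z\to z_0$ in the already-established identity $|T(z,z_0)|=1$ on $\T\setminus\Sigma(z_0)$ (equation \eqref{absparT}), while you verify it directly by checking that each Blaschke factor is unimodular on $\T$ and that $\frac{s+z_0}{s-z_0}\frac{ds}{s}=\cot\bigl(\frac{\phi-\theta_0}{2}\bigr)d\phi$ is real, so the exponent is purely imaginary. Your version is self-contained and makes the unimodularity transparent; the paper's is shorter but leans on the earlier symmetry lemma. Both are valid.
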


\begin{proof}
This follows since
\[
\exp\left(\frac{1}{2\pi\I}\int\limits_{\ol{z_0}}^{z_0}\log\big(|T(z_0)|\big) \frac{s+z}{s-z} \frac{ds}{s}\right) = \left(-\ol{z_0}\frac{z-z_0}{z-\ol{z_0}}\right)^{\I\nu}.
\]
The property $\ti{T}(z_0,z_0)\in\T$ follows after letting $z\to z_0$ in \eqref{absparT}.
\end{proof}

Now if $z(\zeta)$ is defined as in \eqref{eqchoc} and $0 < \alpha < 1$,
then there is an $L > 0$ such that
\[
|T(z(\zeta), z_0) - \zeta^{\I \nu} \ti{T}(z_0,z_0) \E^{-\frac{3}{2} \I \nu \log(-2\sin(\theta_0))}|
\leq L |\zeta|^{\alpha},
\]
where the branch cut of $\zeta^{\I \nu}$ is tangent to the negative real axis. Clearly
we also have
\[
|R(z(\zeta)) - R(z_0)| \le L  |\zeta|^{\alpha}
\]
and thus the assumptions of Theorem~\ref{thm:solcross} are satisfied with
\[
r = R(z_0) \ti{T}(z_0,z_0)^{-2} \E^{3 \I \nu \log(-2\sin(\theta_0))}
\]
and the solution of \eqref{eq:RHPM} is given by
\begin{align}
M^C(z) &= \id - \frac{z_0}{(-2\sin(\theta_0) t)^{1/2}}  \frac{M_0}{z - z_0} +
O\left(\frac{1}{t^\alpha}\right), \\ \nn
M_0 &= \begin{pmatrix} 0 & -\beta \\  \ol{\beta} & 0 \end{pmatrix},\\
\beta &=\sqrt{\nu} \E^{\I(\pi/4-\arg(R(z_0)) + \arg(\Gamma(\I\nu)))}
(-2\sin(\theta_0))^{-3\I \nu} \ti{T}(z_0, z_0)^2 \E^{-\I t \Phi_0} t^{-\I\nu},
\end{align}
where  $1/2 < \alpha < 1$, and $\cos(\theta_0)=-\lam_0$. Note $|r|=|R(z_0)|$ and
hence $\nu=-\frac{1}{2\pi}\log(1-|R(z_0)|^2)$.

Now we are ready to show

\begin{theorem}\label{thm:decoupling}
The solution $\check{m}(z)$ is given by
\be
\check{m}(z) = \rI - \frac{1}{(-2\sin(\theta_0) t)^{1/2}} ( m_0(z) + \bar{m}_0(z))
+ O\left(\frac{1}{t^\alpha}\right),
\ee
where
\be
m_0(z) = \begin{pmatrix} \ol{\beta} \frac{z}{z-z_0} & -\beta \frac{z_0}{z-z_0} \end{pmatrix}, \qquad
\bar{m}_0(z) = \ol{m_0(\ol{z})} = m_0(z^{-1})\sigI.
\ee
\end{theorem}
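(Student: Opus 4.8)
The plan is to carry out the standard decoupling step of nonlinear steepest descent: approximate $\check{m}$ by an explicit parametrix assembled from the local cross solution $M^C$ at $z_0$ together with its symmetric image at $z_0^{-1}$, and control the difference by a small-norm Riemann--Hilbert problem. After the deformation of Section~\ref{sec:conjdef} (Case~2), $\check{m}$ is holomorphic off $\Sigma_+\cup\Sigma_-$ and, as explained at the start of Section~\ref{sec:reducecross}, its jump is $\id+O(t^{-1})$ everywhere except on the two crosses $\Sigma^C(z_0)$ and $\Sigma^C(z_0^{-1})$, where it is $O(1)$. In the similarity region $|n/t|\le 1-C$ the stationary phase points $z_0=\E^{\I\theta_0}$ and $z_0^{-1}=\ol{z_0}$ stay a fixed distance apart and away from $\pm1$, so I would fix disjoint discs $|z-z_0^{\pm1}|<\eps/2$ and treat the two crosses independently.

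First I would introduce a global parametrix: on $|z-z_0|<\eps/2$ set it equal to $M^C(z)$ from \eqref{eq:RHPM}, whose leading behaviour is $\id-\frac{z_0}{(-2\sin(\theta_0)t)^{1/2}}\frac{M_0}{z-z_0}+O(t^{-\alpha})$; on $|z-z_0^{-1}|<\eps/2$ use the image $\sigI M^C(z^{-1})\sigI$ forced by the symmetry \eqref{eq:symcond}; and set it equal to the identity elsewhere. By construction this reproduces the jump of $\check{m}$ exactly inside each disc and respects $D(z^{-1})=\sigI D(z)\sigI$. I would then pass to the error $E$, defined as $\check{m}$ times the inverse of the parametrix. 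The jumps cancel across the crosses, so $E$ is holomorphic there; its jump on the circles $|z-z_0^{\pm1}|=\eps/2$ is the boundary value of the parametrix, hence $\id+O(t^{-1/2})$, while on the remaining part of $\Sigma_+\cup\Sigma_-$ the jump is exponentially close to $\id$ by the sign analysis of $\re(\Phi)$ in Figure~\ref{fig:signRePhi}. Thus $E$ solves a holomorphic Riemann--Hilbert problem with jump uniformly $\id+O(t^{-1/2})$, and Theorem~\ref{thm:remcontour} (equivalently the singular integral equation reformulation) yields $E=\id+O(t^{-1/2})$ together with its leading correction as the Cauchy transform of the jump deviation. Because the discs are separated, the coupling between them enters only at order $O(t^{-1})$ and is absorbed into the error, so the $t^{-1/2}$ term is the sum of the two single-cross contributions.

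Finally I would evaluate this Cauchy transform and translate it back to $\check{m}$. Since $M^C$ is normalized at $\infty$ while $\check{m}$ is normalized at $z=0$, reconstructing $\check{m}$ requires the modified kernel $\frac{1}{s-z}-\frac{1}{s}=\frac{z}{s(s-z)}$, which enforces the first-component normalization at $z=0$. Evaluating the resulting contour integrals by residues, the pole at $s=z_0$ produces $m_0(z)$ and the pole at $s=z_0^{-1}$ produces $\bar{m}_0(z)$, the two being interchanged by $z\mapsto z^{-1}$ and $\sigI$ exactly as in $\bar{m}_0(z)=m_0(z^{-1})\sigI$; a short check gives $m_0(0)+\bar{m}_0(0)=\begin{pmatrix}0 & 2\re(\beta)\end{pmatrix}$, consistent with the first entry of $\check{m}(0)$ being $1$. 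This produces the claimed expansion with error $O(t^{-\alpha})$.

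The main obstacle is precisely the bookkeeping of this last step: one must simultaneously respect the finite-point normalization at $z=0$ (not at $\infty$) and the symmetry so that the leading correction collapses exactly to $m_0+\bar{m}_0$ with the asymmetric numerators $\tfrac{z}{z-z_0}$ and $\tfrac{z_0}{z-z_0}$ in the two slots, rather than to the naive residue of $M^C-\id$. Alongside this, one must confirm that $M^C$ approximates the true local solution to $O(t^{-\alpha})$ on the entire circle $|z-z_0|=\eps/2$, so that the residue extraction from the parametrix is legitimate and the separated-cross coupling is genuinely of lower order.
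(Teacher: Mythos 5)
Your proposal is correct and follows essentially the same route as the paper: the paper likewise divides $\check{m}$ by $M^C$ near $z_0$ and by its symmetric image $\sigI M^C(z^{-1})\sigI$ near $z_0^{-1}$, observes that the resulting jumps are $\id+O(t^{-1/2})$ on the two circles and smaller elsewhere, and then extracts the leading term by evaluating the Cauchy-type representation with the kernel $\Omega_\infty$ (your $\frac{1}{s-z}-\frac{1}{s}$) by residues at $z_0^{\pm1}$. The only cosmetic difference is that the paper phrases the small-norm step via the vector singular integral equation ($\|\mu-\rI\|_2=O(t^{-1/2})$, as in Lemma~\ref{lem:approrhp}) rather than through Theorem~\ref{thm:remcontour}, but your parenthetical already identifies that mechanism.
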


\begin{proof}
Introduce $m(z)$ by
\[
m(z) = \begin{cases}
\check{m}(z) M^C(z)^{-1}, & |z-z_0| \le \eps,\\
\check{m}(z) \ti{M}^C(z)^{-1}, & |z^{-1}-z_0| \le \eps,\\
\check{m}(z), & \text{else},\end{cases}
\]
where
\[
\ti{M}^C(z) = \sigI M^C(z^{-1}) \sigI = \id - \frac{z}{(-2\sin(\theta_0) t)^{1/2}}  \frac{\ol{M_0}}{z - z_0} +
O\left(\frac{1}{t^\alpha}\right).
\]
The Riemann--Hilbert problem for $m$ has jumps given by
\[
v(z) =\begin{cases}
M^C(z)^{-1}, & |z-z_0| = \eps,\\
M^C(z) \hat{v}(z) M^C(z)^{-1}, & z\in \hat{\Sigma}, \frac{\eps}{2} < |z-z_0| < \eps,\\
\id, & z\in \Sigma, |z-z_0|< \frac{\eps}{2},\\
\ti{M}^C(z)^{-1}, & |z^{-1}-z_0| = \eps,\\
\ti{M}^C(z) \hat{v}(z) \ti{M}^C(z)^{-1}, & z\in \hat{\Sigma}, \frac{\eps}{2} < |z^{-1}-z_0| < \eps,\\
\id, & z\in \Sigma, |z^{-1}-z_0|< \frac{\eps}{2},\\
\hat{v}(z), & \text{else}.
\end{cases}
\]
The jumps are $\id+O(t^{-1/2})$ on the loops $|z-z_0|=\eps$, $|z^{-1}-z_0|=\eps$ and
even $\id+O(t^{-\alpha})$ on the rest (in the $L^\infty$ norm, hence also in the $L^2$ one).
In particular, as in Lemma~\ref{lem:approrhp} we infer
\[
\|\mu - \rI \|_2 = O(t^{-1/2}).
\]
Thus we have with $\Omega_\infty$ as in \eqref{eq:defomegainfty}
\begin{align*}
m(z) = & \rI+ \frac{1}{2\pi\I} \int_{\Sigma} \mu(s) w(s) \Omega_\infty(s,z) \\
= & \rI + \frac{1}{2\pi\I} \int_{|s-z_0|=\eps} \mu(s) (M^C(s)^{-1}-\id) \Omega_\infty(s,z) \\
& + \frac{1}{2\pi\I} \int_{|s^{-1}-z_0|=\eps} \mu(s) (\ti{M}^C(s)^{-1}-\id) \Omega_\infty(s,z)
+ O(t^{-\alpha})\\
= & \rI + \frac{1}{(-2\sin(\theta_0) t)^{1/2}} \rI M_0 \frac{1}{2\pi\I} \int_{|s-z_0|=\eps} \frac{z_0}{s-z_0} \Omega_\infty(s,z)\\
& + \frac{1}{(-2\sin(\theta_0) t)^{1/2}} \rI \ol{M_0} \frac{1}{2\pi\I} \int_{|s^{-1}-z_0|=\eps} \frac{s}{s-\ol{z_0}} \Omega_\infty(s,z) + O(t^{-\alpha})\\
= & \rI - \frac{1}{(-2\sin(\theta_0) t)^{1/2}} ( m_0(z) + \bar{m}_0(z))
+ O\left(\frac{1}{t^\alpha}\right)
\end{align*}
finishing the proof.
\end{proof}

Hence, using \eqref{eq:AB} and \eqref{eq:checkm},
\be
\left( \check{m}(z) \right)_2 = \frac{1}{\ti{A}^2} \left( 1 + (T_1+2 B)z + O(z^2) \right)
\ee
and comparing with
\be
\left( \check{m}(z) \right)_2 = \left(1 - \frac{2\re(\beta)}{(-2\sin(\theta_0) t)^{1/2}}\right)
- \left(\frac{2\re(\ol{z_0}\beta)}{(-2\sin(\theta_0) t)^{1/2}}\right)z + O(z^2)
+O\left(\frac{1}{t^\alpha}\right),
\ee
we obtain
\be
\ti{A}^2 = 1 + \frac{2\re(\beta)}{(-2\sin(\theta_0) t)^{1/2}} + O\left(\frac{1}{t^\alpha}\right)
\ee
and
\be
T_1 + 2 B = -\frac{2\re(\ol{z_0}\beta)}{(-2\sin(\theta_0) t)^{1/2}}
+O\left(\frac{1}{t^\alpha}\right).
\ee
In summary we have
\begin{align}
A &= T_0 \left(1 + \frac{\re(\beta)}{(-2\sin(\theta_0) t)^{1/2}} + O\left(\frac{1}{t^\alpha}\right) \right),\\
B &= - \frac{1}{2} T_1 - \frac{\re(\ol{z_0}\beta)}{(-2\sin(\theta_0) t)^{1/2}}
+O\left(\frac{1}{t^\alpha}\right),
\end{align}
which proves Theorem~\ref{thm:asym2} in the analytic case.

\begin{remark}
Note that, in contradistinction to Theorem~\ref{thm:remcontour}, Theorem~\ref{thm:decoupling}
does not require uniform boundedness of the associated integral operators, but only some
knowledge of the solution of the Riemann-Hilbert problem. However, it requires that the
solution is of the form $\id+o(1)$ and hence cannot be used in the soliton region.
\end{remark}

\section{Analytic Approximation}
\label{sec:analapprox}

In this section we want to present the necessary changes in the case where the
reflection coefficient does not have an analytic extension. The idea is to
use an analytic approximation and split the reflection in an analytic part plus
a small rest. The analytic part will be moved to the complex plane while the rest
remains on the unit circle. This needs to be done in such a way that the rest
is of $O(t^{-l})$ and the growth of the analytic part can be controlled by the
decay of the phase.

In the soliton region a straightforward splitting based on the Fourier series
\be
R(z) = \sum_{k=-\infty}^{\infty} \hat{R}(k) z^k
\ee
will be sufficient. It is well-known that our assumption \eqref{decay} implies
$k^l \hat{R}(-k) \in \ell^1(\N)$ (this follows from the estimate \cite[eq.\ (10.83)]{tjac})
and $R \in C^l(\T)$.

\begin{lemma}\label{lem:analapprox}
Suppose $\hat{R}(k) \in \ell^1(\Z)$, $k^l \hat{R}(-k) \in \ell^1(\N)$ and let $0 < \eps < 1$, $\beta>0$ be given.
Then we can split the reflection coefficient according to
$R(z)= R_{a,t}(z) + R_{r,t}(z)$ such that $R_{a,t}(z)$ is analytic in $0<|z|<1$ and
\be
|R_{a,t}(z) \E^{-\beta t} | = O(t^{-l}), \quad 1-\eps \le |z|\le 1, \qquad
|R_{r,t}(z)| = O(t^{-l}), \quad |z|=1.
\ee
\end{lemma}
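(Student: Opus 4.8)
The plan is to split the Fourier series of $R$ at a $t$-dependent frequency cutoff, keeping the low (and positive) frequencies as the analytic part $R_{a,t}$ and pushing the high negative frequencies into the remainder $R_{r,t}$. Concretely, I would pick an integer $N=N(t)$, to be calibrated below, and set
\[
R_{a,t}(z)=\sum_{k\ge -N(t)}\hat R(k)\, z^k,\qquad
R_{r,t}(z)=\sum_{k<-N(t)}\hat R(k)\, z^k .
\]
Rewriting $R_{a,t}(z)=z^{-N}\sum_{j\ge 0}\hat R(j-N)\,z^{j}$ exhibits it as $z^{-N}$ times a power series whose coefficients form a subsequence of $\hat R\in\ell^1(\Z)$; that power series converges and is analytic for $\abs{z}<1$, so $R_{a,t}$ is analytic in $0<\abs{z}<1$, with at most a pole of order $N$ at the origin, which is harmless in the punctured disc. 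Since the two series sum to $R$ on $\T$, this is a genuine additive splitting.

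Next I would establish the two estimates. On the annulus $1-\eps\le\abs{z}\le 1$ every nonnegative power obeys $\abs{z}^{k}\le 1$, while for $-N\le k<0$ one has $\abs{z}^{k}\le(1-\eps)^{-\abs{k}}\le(1-\eps)^{-N}$, whence
\[
\abs{R_{a,t}(z)}\le (1-\eps)^{-N}\,\norm{\hat R}_{\ell^1}
=\norm{\hat R}_{\ell^1}\,\E^{N\log(1/(1-\eps))}.
\]
For the remainder, on $\abs{z}=1$ the triangle inequality together with $k^l\hat R(-k)\in\ell^1(\N)$ gives
\[
\abs{R_{r,t}(z)}\le\sum_{m>N}\abs{\hat R(-m)}
\le N^{-l}\sum_{m>N}m^{l}\abs{\hat R(-m)}
=O(N^{-l}).
\]

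It then remains to choose $N(t)$ so that, after weighting the analytic part by $\E^{-\beta t}$, both bounds are $O(t^{-l})$. The remainder forces $N$ to grow at least linearly in $t$, whereas the weighted analytic bound equals $\norm{\hat R}_{\ell^1}\exp\!\big(N\log(1/(1-\eps))-\beta t\big)$, which is $O(t^{-l})$ precisely when $N\log(1/(1-\eps))\le \beta t-l\log t+O(1)$. I would therefore take
\[
N(t)=\Big\lfloor \frac{\beta t-(l+1)\log t}{\log(1/(1-\eps))}\Big\rfloor,
\]
which is linear in $t$, so that $N(t)^{-l}=O(t^{-l})$ controls the remainder, while simultaneously pushing the weighted analytic part down to $O(t^{-(l+1)})=O(t^{-l})$.

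The single point requiring care — and the step I would flag as the crux — is this calibration of $N(t)$: the naive choice $N\log(1/(1-\eps))=\beta t$ leaves the weighted analytic part merely $O(1)$, so one must lower the cutoff by a logarithmic amount below the threshold in order to convert the exponential near-cancellation into the extra polynomial gain $t^{-l}$, and one must then check that this subleading shift does not destroy the linear growth $N(t)\sim c\,t$ needed for the remainder tail. Everything else reduces to the triangle inequality and the summability hypotheses.
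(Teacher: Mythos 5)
Your proof is correct and follows essentially the same route as the paper: split the Fourier series at a cutoff $N(t)$ growing linearly in $t$, bound the analytic part by $\|\hat R\|_{\ell^1}(1-\eps)^{-N}$ and the remainder tail by $N^{-l}\sum_{m>N}m^l|\hat R(-m)|$. The only (immaterial) difference is the calibration: the paper takes $K(t)=\lfloor \beta_0 t/\log(1/(1-\eps))\rfloor$ with a fixed $\beta_0<\beta$, so the weighted analytic part is even exponentially small, whereas your logarithmic shift of the cutoff yields just the polynomial gain $O(t^{-(l+1)})$ — both comfortably suffice for the stated $O(t^{-l})$ bounds.
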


\begin{proof}
We choose $R_{a,t}(z) = \sum_{k = - K(t)}^\infty \hat{R}(k) z^k$ with $K(t) =
\lfloor \frac{\beta_0}{-\log(1-\eps)} t\rfloor$ for some positive $\beta_0<\beta$. Then, for $1-\eps \le|z|$,
\[
|R_{a,t}(z) \E^{-\beta t} | \le \sum_{k = - K(t)}^\infty |\hat{R}(k)| \E^{-\beta t} (1-\eps)^k \le
\|\hat{R}\|_1 \E^{-\beta t} (1-\eps)^{-K(t)} \le \|\hat{R}\|_1 \E^{-(\beta-\beta_0)t}
\]
Similarly, for $|z|=1$,
\[
|R_{r,t}(z) | \le \sum_{k = - \infty}^{-K(t)-1} |\hat{R}(k)| \le
const \sum_{k = K(t)+1}^\infty \frac{k^l}{K(t)^l}  |\hat{R}(-k)| \le \frac{const}{K(t)^l} \le
\frac{const}{t^l}.
\]
\end{proof}

To apply this lemma in the soliton region $z_0\in(-1,0)$ we choose
\be
\beta= \min_{|z|=1-\eps} -\re(\Phi(z))>0
\ee
and split $R(z) =  R_{a,t}(z) + R_{r,t}(z)$ according to Lemma~\ref{lem:analapprox} to obtain
\[
\ti{b}_\pm(z) = \ti{b}_{a,t,\pm}(z) \ti{b}_{r,t,\pm}(z) = \ti{b}_{r,t,\pm}(z) \ti{b}_{a,t,\pm}(z).
\]
Here $\ti{b}_{a,t,\pm}(z)$, $\ti{b}_{r,t,\pm}(z)$ denote the matrices obtained from $\ti{b}_\pm(z)$
as defined in \eqref{eq:deftib} by replacing $R(z)$ with $R_{a,t}(z)$, $R_{r,t}(z)$, respectively.
Now we can move the analytic parts into the complex plane as in Section~\ref{sec:conjdef}
while leaving the rest on $\T$. Hence, rather then \eqref{eq:jumpsolreg}, the jump now reads
\be
\hat{v}(z) = \begin{cases}
\ti{b}_{a,t,+}(z), & z\in\Sigma_+, \\
\ti{b}_{a,t,-}(z)^{-1}, & z\in\Sigma_-,\\
\ti{b}_{r,t,-}(z)^{-1} \ti{b}_{r,t,+}(z), & z\in\T.
\end{cases}
\ee
By construction we have $\hat{v}(z)= \id + O(t^{-l})$ on the whole contour and the rest follows as
in Section~\ref{sec:conjdef}.

In the other soliton region $z_0\in(0,1)$ we proceed similarly, with the only difference that the
jump matrices $\ti{B}_\pm(z)$ have at first sight more complicated off diagonal entries. To remedy
this we will rewrite them in terms of left rather then right scattering data. For this purpose
let us use the notation $R_r(z) \equiv R_+(z)$ for the right and $R_l(z) \equiv R_-(z)$ for the
left reflection coefficient. Moreover, let $T_r(z,z_0) \equiv T(z,z_0)$ be the right and
$T_l(z,z_0) \equiv T(z)/T(z,z_0)$ be the left partial transmission coefficient.

With this notation we have
\be
\ti{v}(z) = \begin{cases}
\ti{b}_-(z)^{-1} \ti{b}_+(z), \qquad \lam(z)> \lam_0,\\
\ti{B}_-(z)^{-1} \ti{B}_+(z), \qquad \lam(z)< \lam_0,\\
\end{cases}
\ee
where
\[
\ti{b}_-(z) = \begin{pmatrix} 1 & \frac{R_r(z^{-1}) \E^{-t\Phi(z)}}{T_r(z^{-1},z_0)^2} \\ 0 & 1 \end{pmatrix}, \quad
\ti{b}_+(z) = \begin{pmatrix} 1 & 0 \\ \frac{R_r(z) \E^{t\Phi(z)}}{T_r(z,z_0)^2}& 1 \end{pmatrix},
\]
and
\begin{align*}
\ti{B}_-(z) &= \begin{pmatrix} 1 & 0 \\ - \frac{T_{r,-}(z,z_0)^{-2}}{|T(z)|^2} R_r(z) \E^{t\Phi(z)} & 1 \end{pmatrix}, \\
\ti{B}_+(z) &= \begin{pmatrix} 1 & - \frac{T_{r,+}(z,z_0)^2}{|T(z)|^2} R_r(z^{-1}) \E^{-t\Phi(z)} \\ 0 & 1 \end{pmatrix}.
\end{align*}
Using \eqref{reltrpm} together with \eqref{eq:relB} we can further write
\begin{align*}
\ti{B}_-(z) &= \begin{pmatrix} 1 & 0 \\ \frac{R_l(z^{-1}) \E^{-t\Phi(z)}}{T_l(z^{-1},z_0)^2} & 1 \end{pmatrix}, \\
\ti{B}_+(z) &= \begin{pmatrix} 1 & \frac{R_l(z) \E^{t\Phi(z)}}{T_l(z,z_0)^2} \\ 0 & 1 \end{pmatrix}.
\end{align*}
Now we can proceed as before with $\ti{B}_\pm(z)$ as with $\ti{b}_\pm(z)$ by splitting $R_l(z)$ rather than $R_r(z)$.

In the similarity region we need to take the small vicinities of the stationary phase points into account. Since
the phase is quadratic near these points, we cannot use it to dominate the exponential growth of the analytic
part away from the unit circle. Hence we will take the phase as a new variable and use the Fourier transform
with respect to this new variable. Since this change of coordinates is singular near the stationary phase points,
there is a price we have to pay, namely, requiring additional smoothness for $R(z)$. We begin with

\begin{lemma}
Suppose $R(z)\in C^5(\T)$. Then we can split $R(z)$ according to
\be
R(z) = R_0(z) + (z-z_0)(z-\ol{z_0}) H(z), \qquad z \in \Sigma(z_0),
\ee
where $R_0(z)$ is a real polynomial in $z$ such that $H(z)$ vanishes at $z_0, \ol{z_0}$ of order three and
has a Fourier series
\be
H(z) = \sum_{k=-\infty}^\infty \hat{H}_k \E^{k \omega_0 \Phi(z)}, \qquad \omega_0=\frac{\pi}{\pi\cos(\theta_0)+\Phi_0},
\ee
with $k\,\hat{H}_k$ summable. Here $\Phi_0 = \Phi(z_0)/\I$.
\end{lemma}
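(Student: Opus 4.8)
The plan is to first remove the obstruction at the two stationary phase points by subtracting a polynomial, and then to pass to the phase $\Phi$ as the integration variable so that a genuine Fourier series in $\E^{\omega_0\Phi}$ becomes available. To begin I would construct $R_0(z)$ as the Hermite interpolation polynomial matching $R(z)$ together with its first three derivatives at the two points $z_0$ and $\ol{z_0}$. Since $R(z)\in C^5(\T)$ these eight pieces of data are available, and a polynomial of degree seven is determined by them. Because $z_0$ and $\ol{z_0}=z_0^{-1}$ are complex conjugate and $R$ satisfies $\ol{R(z)}=R(\ol z)$ on $\T$, the interpolation data at $\ol{z_0}$ is the complex conjugate of that at $z_0$; hence the interpolating polynomial has real coefficients, i.e.\ $R_0$ is real. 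By construction $R(z)-R_0(z)$ vanishes to order four at each of $z_0,\ol{z_0}$, so after dividing by the factor $(z-z_0)(z-\ol{z_0})$, which has a simple zero at each point, the quotient $H(z)$ vanishes to order three there, as required, and is $C^3$ (indeed $C^5$ away from $z_0,\ol{z_0}$).

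Next I would introduce the phase as a new real coordinate on the arc $\Sigma(z_0)$. On $\T$ we have $\Phi(z)=\I\,\phi$ with $\phi=2(\sin\theta-\theta\cos\theta_0)$ real, and $\phi'(\theta)=2(\cos\theta-\cos\theta_0)$ vanishes precisely at the stationary points $\theta=\pm\theta_0$. Away from these points the map $z\mapsto\Phi(z)$ is a diffeomorphism, so there $H$ is as smooth in the phase variable as it is in $z$. The delicate feature is that at $z_0$ (and $\ol{z_0}$) the map degenerates: since $\Phi(z)-\Phi(z_0)=\tfrac12\Phi''(z_0)(z-z_0)^2+O((z-z_0)^3)$ with $\Phi''(z_0)=2\I\E^{-2\I\theta_0}\sin\theta_0\neq0$, the local inverse behaves like $z-z_0\sim(\Phi-\Phi(z_0))^{1/2}$, i.e.\ a function smooth in $z$ acquires a square-root singularity in the phase variable. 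This is exactly where the order-three vanishing of $H$ is used: introducing the uniformizing parameter $u=(\Phi-\Phi(z_0))^{1/2}$, the order-three zero forces $H\sim u^{3}\sim(\Phi-\Phi(z_0))^{3/2}$, so that $H$, viewed as a function of $\Phi$, is of class $C^1$ at the stationary points with leading singular term $(\Phi-\Phi(z_0))^{3/2}$.

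Finally I would expand $H$ in a Fourier series in the phase. Choosing $\omega_0=\pi/(\pi\cos(\theta_0)+\Phi_0)$ as the fundamental frequency, so that $\E^{\omega_0\Phi}$ runs once around its period as $z$ traverses the $\phi$-range of $\Sigma(z_0)$, the system $\{\E^{k\omega_0\Phi(z)}\}_{k\in\Z}$ serves as the Fourier basis and yields $H(z)=\sum_k\hat H_k\,\E^{k\omega_0\Phi(z)}$. The decay of the $\hat H_k$ is then read off from the regularity of $H$ in the phase variable: on the smooth part the $C^5$ regularity gives rapid decay, while each stationary point contributes a term of type $(\Phi-\Phi(z_0))^{3/2}$, whose Fourier coefficients decay like $|k|^{-5/2}$. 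Hence $\hat H_k=O(|k|^{-5/2})$ and $k\,\hat H_k$ is summable, which is the assertion.

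I expect the main obstacle to be precisely this last coefficient estimate at the stationary phase points: because the change of variables $z\leftrightarrow\Phi$ is singular there, one cannot simply transfer $z$-smoothness into $\Phi$-smoothness, and it is the interplay between the prescribed order-three vanishing (which fixes the $3/2$-type singularity) and the available $C^5$ regularity of $R$ (which controls everything else) that must be balanced to guarantee summability of $k\,\hat H_k$. Some care is also required at $z=-1\in\Sigma(z_0)$, where the principal branch of $\log$, and hence $\Phi$, jumps; this has to be accounted for when setting up the periodic phase coordinate.
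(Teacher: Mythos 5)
Your construction and overall strategy coincide with the paper's: subtract a Hermite interpolation polynomial at $z_0,\ol{z_0}$, pass to the phase $\Phi/\I$ as a coordinate on $\Sigma(z_0)$, observe that this change of variables degenerates like a square root at the stationary points, and trade the prescribed vanishing of $H$ there for regularity in the new variable. The one substantive difference is the order of matching. The paper uses the full $C^5$ hypothesis to match $R$ together with its first \emph{four} derivatives at $z_0,\ol{z_0}$, so that $H\in C^4(\T)$ vanishes together with its first three derivatives; under the square-root change of variables $H$ is then genuinely $C^2$ in the phase, and summability of $k\hat H_k$ follows from two routine integrations by parts (so that $k^2\hat H_k\in\ell^2$, and Cauchy--Schwarz finishes). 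You match only three derivatives, which yields exactly the order-three vanishing stated in the lemma but leaves $H$ with a $|s|^{3/2}$-type singularity in the phase variable $s$ near the stationary points; then $H''$ is only $L^1$ there, two integrations by parts give merely $\hat H_k=O(k^{-2})$ (not enough for $k\hat H_k\in\ell^1$), and you must invoke the finer classical asymptotics for the Fourier coefficients of an $|s|^{3/2}$ singularity, namely $O(|k|^{-5/2})$, after splitting off the leading singular term and checking the remainder is $C^2$. That estimate is correct and your argument closes, but it is the harder route; the paper's extra derivative of matching is precisely what reduces the last step to elementary Fourier analysis, and since $C^5$ regularity is assumed anyway there is no gain in economizing. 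Your remark about the branch of $\log$, and hence of $\Phi$, at $z=-1\in\Sigma(z_0)$ is a legitimate point of care that the paper passes over in silence.
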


\begin{proof}
By choosing a polynomial $R_0$ we can match the values of $R$ and its first four derivatives
at $z_0, \ol{z_0}$. Hence $H(z)\in C^4(\T)$ and vanishes together with its first three derivatives at
$z_0, \ol{z_0}$.

When restricted to $\Sigma(z_0)$ the phase $\Phi(z)/\I$ gives a one to one coordinate transform
$\Sigma(z_0) \to [\I\Phi_0,\I\Phi_0+\I\omega_0]$ and we can hence express $H(z)$ in this new coordinate.
The coordinate transform locally looks like a square root near $z_0$ and $\ol{z_0}$, however, due to our
assumption that $H$ vanishes there, $H$ is still $C^2$ in this new coordinate and the Fourier transform
with respect to this new coordinates exists and has the required properties.
\end{proof}

Moreover, as in Lemma~\ref{lem:analapprox} we obtain:

\begin{lemma}
Let $H(z)$ be as in the previous lemma. Then we can split $H(z)$ according to
$H(z)= H_{a,t}(z) + H_{r,t}(z)$ such that $H_{a,t}(z)$ is analytic in the region $\re(\Phi(z))<0$
and
\be
|H_{a,t}(z) \E^{\Phi(z) t/2} | = O(1), \: \re(\Phi(z))<0,|z|\le 1, \quad
|H_{r,t}(z)| = O(t^{-1}), \: |z|=1.
\ee
\end{lemma}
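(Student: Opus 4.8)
The plan is to run the proof of Lemma~\ref{lem:analapprox} verbatim, but in the variable $w=\E^{\omega_0\Phi(z)}$ supplied by the previous lemma rather than in $z$. First I would record the two elementary facts about this coordinate. Since $\omega_0>0$ and $\Phi$ is purely imaginary on the unit circle (one checks $\Phi(\E^{\I\theta})=\I(2\sin\theta+2\tfrac{n}{t}\theta)$, so $\re(\Phi)=0$ on $\T\setminus\{-1\}$), we have $\abs{w}=1$ on $\T$, while in the region $\re(\Phi(z))<0$ we have $\abs{w}=\E^{\omega_0\re(\Phi(z))}<1$. Consequently the Fourier expansion $H(z)=\sum_{k\in\Z}\hat{H}_k\,\E^{k\omega_0\Phi(z)}$ converges absolutely on $\T$, its nonnegative part $\sum_{k\ge0}\hat{H}_k\,\E^{k\omega_0\Phi(z)}$ converges (absolutely and locally uniformly, since $\abs{w}\le1$ and $\hat{H}_k$ is summable) to an analytic function throughout $\{\re(\Phi)<0\}$, and each individual term is analytic wherever $\Phi$ is.

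Next I would split exactly as in Lemma~\ref{lem:analapprox}: set $H_{a,t}(z)=\sum_{k\ge-K(t)}\hat{H}_k\,\E^{k\omega_0\Phi(z)}$ and $H_{r,t}(z)=\sum_{k<-K(t)}\hat{H}_k\,\E^{k\omega_0\Phi(z)}$, with the cutoff $K(t)$ chosen linearly in $t$. The analyticity claim for $H_{a,t}$ is then immediate from the observation above: the nonnegative-index part sums to an analytic function on $\{\re(\Phi)<0\}$, and the finitely many negative-index terms are manifestly analytic there.

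For the estimate on the analytic part I would bound $\abs{H_{a,t}(z)\E^{\Phi(z)t/2}}\le\sum_{k\ge-K(t)}\abs{\hat{H}_k}\,\E^{(k\omega_0+t/2)\re(\Phi(z))}$. In $\{\re(\Phi)<0\}$ each exponential is $\le1$ as soon as $k\omega_0+t/2\ge0$; this is automatic for $k\ge0$, and for $-K(t)\le k<0$ it holds provided $K(t)\le t/(2\omega_0)$. Choosing $K(t)=\lfloor t/(2\omega_0)\rfloor$ therefore yields $\abs{H_{a,t}(z)\E^{\Phi(z)t/2}}\le\sum_k\abs{\hat{H}_k}=O(1)$, using that $\hat{H}_k\in\ell^1$ (which follows from the stronger hypothesis of the previous lemma). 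For the remainder, $\abs{w}=1$ on $\T$ gives $\abs{H_{r,t}(z)}\le\sum_{k<-K(t)}\abs{\hat{H}_k}=\sum_{k>K(t)}\abs{\hat{H}_{-k}}\le K(t)^{-1}\sum_{k}k\,\abs{\hat{H}_{-k}}=O(t^{-1})$, which is precisely where the summability of $k\,\hat{H}_k$ is used. The single choice $K(t)=\lfloor t/(2\omega_0)\rfloor$ serves both estimates.

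Every computation here is routine once the coordinate $w=\E^{\omega_0\Phi}$ is in place; the one point demanding care, and the reason the previous lemma labored to produce a genuine Fourier series in $\Phi$ (with $H$ vanishing to third order at $z_0,\ol{z_0}$, forcing the $C^5$ hypothesis on $R$), is the balancing of the cutoff against the factor $\E^{\Phi t/2}$. One must verify that the growth of the negative-index terms, steepest at $k=-K(t)$ and of order $\E^{K(t)\omega_0\abs{\re(\Phi)}}$, is exactly absorbed by $\E^{\re(\Phi)t/2}$; this is what pins $K(t)$ to be linear in $t$ and ties the admissible growth rate $\E^{\Phi(z)t/2}$ in the statement to the quadratic nature of the phase near the stationary points.
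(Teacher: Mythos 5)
Your proposal is correct and follows the paper's own route exactly: the paper likewise sets $H_{a,t}(z)=\sum_{k\ge -K(t)}\hat{H}_k\E^{k\omega_0\Phi(z)}$ with $K(t)=\lfloor t/(2\omega_0)\rfloor$ and refers back to Lemma~6.1 for the estimates, which are precisely the two bounds you spell out (the cutoff balanced against $\E^{\Phi t/2}$ via $k\omega_0+t/2\ge 0$, and the tail bound from the summability of $k\hat{H}_k$). You have simply written out in full the details the paper leaves to the reader.
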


\begin{proof}
We choose $H_{a,t}(z) = \sum_{k = - K(t)}^\infty \hat{H}_k \E^{k \omega \Phi(z)}$ with $K(t) =
\lfloor t/(2\omega)\rfloor$. The rest follows as in Lemma~\ref{lem:analapprox}.
\end{proof}

By construction $R_{a,t}(z) = R_0(z) + (z-z_0)(z-\ol{z_0}) H_{a,t}(z)$ will satisfy the required
Lipschitz estimate in a vicinity of the stationary phase points (uniformly in $t$) and all
jumps will be $\id+O(t^{-1})$. Hence we can proceed as in Section~\ref{sec:reducecross}.

\appendix

\section{The solution on a small cross}
\label{sec:cross}

Introduce the cross $\Sigma = \Sigma_1 \cup\dots\cup \Sigma_4$ (see Figure~\ref{fig:contourcross}) by
\begin{align}
\nn \Sigma_1 & = \{u \E^{-\I\pi/4},\,u\in [0,\infty)\} &
\Sigma_2 & = \{u \E^{\I\pi/4},     \,u\in [0,\infty)\} \\
\Sigma_3 & = \{u \E^{3\I\pi/4},    \,u\in [0,\infty)\} &
\Sigma_4 & = \{u \E^{-3\I\pi/4},   \,u\in [0,\infty)\}.
\end{align}
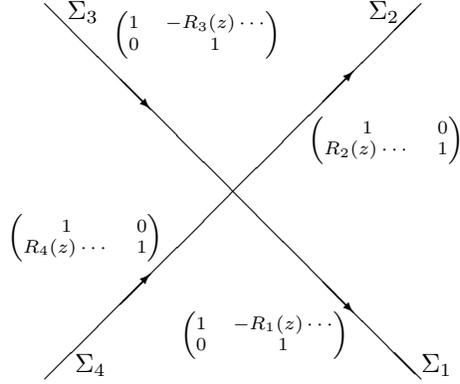
\begin{figure}
\begin{picture}(7,5.2)

\put(1,5){\line(1,-1){5}}
\put(2,4){\vector(1,-1){0.4}}
\put(4.7,1.3){\vector(1,-1){0.4}}
\put(1,0){\line(1,1){5}}
\put(2,1){\vector(1,1){0.4}}
\put(4.7,3.7){\vector(1,1){0.4}}

\put(6.0,0.1){$\Sigma_1$}
\put(5.3,4.8){$\Sigma_2$}
\put(1.3,4.8){$\Sigma_3$}
\put(1.4,0.1){$\Sigma_4$}

\put(2.8,0.5){$\scriptsize\begin{pmatrix} 1 & - R_1(z) \cdots\\ 0 & 1 \end{pmatrix}$}
\put(4.5,3.1){$\scriptsize\begin{pmatrix} 1 & 0 \\ R_2(z) \cdots & 1 \end{pmatrix}$}
\put(1.9,4.5){$\scriptsize\begin{pmatrix} 1 & - R_3(z) \cdots \\ 0 & 1 \end{pmatrix}$}
\put(0.5,1.8){$\scriptsize\begin{pmatrix} 1 & 0 \\ R_4(z) \cdots  & 1 \end{pmatrix}$}

\end{picture}
\caption{Contours of a cross}
\label{fig:contourcross}
\end{figure}%
Orient $\Sigma$ such that the real part of $z$ increases
in the positive direction. Denote by $\mathbb{D} = \{z,\,|z|<1\}$ the
open unit disc. Throughout this section $z^{\I\nu}$ will denote
the function $\E^{\I \nu \log(z)}$, where the branch cut of the logarithm is chosen along
the negative real axis $(-\infty,0)$.

Introduce the following jump matrices ($v_j$ for $z\in\Sigma_j$)
\begin{align}
\nn v_1 &= \begin{pmatrix} 1 & - R_1(z) z^{2\I\nu} \E^{- t \Phi(z)} \\ 0 & 1 \end{pmatrix}, &
v_2 &= \begin{pmatrix} 1 & 0 \\ R_2(z) z^{-2\I\nu} \E^{t \Phi(z)} & 1 \end{pmatrix},  \\
v_3 &= \begin{pmatrix} 1 & - R_3(z) z^{2\I\nu} \E^{- t \Phi(z)} \\ 0 & 1 \end{pmatrix}, &
v_4 &= \begin{pmatrix} 1 & 0 \\ R_4(z) z^{-2\I\nu} \E^{t \Phi(z)}  & 1 \end{pmatrix}.
\end{align}

Now consider the RHP given by
\begin{align}\label{eq:rhpcross}
m_+(z) &= m_-(z) v_j(z), && z\in\Sigma_j,\quad j=1,2,3,4,\\ \nn
m(z) &\to \id, && z\to \infty.
\end{align}
We have the next theorem, in which we follow the computations of
Sections 3 and 4 in \cite{dz}. The method can be found in earlier literature,
see for example \cite{its}.
One can also find arguments like this in Section 5 in \cite{km} or (3.65) to (3.76)
in \cite{dzp2}.

We will allow some variation, in all parameters as indicated in the next result.

\begin{theorem}\label{thm:solcross}
There is some $\rho_0>0$ such that $v_j(z)=\id$ for $|z|>\rho_0$. Moreover,
suppose that within $|z|\le\rho_0$ the following estimates hold:
\begin{enumerate}
\item
The phase satisfies $\Phi(0)=\I\Phi_0\in\I\R$, $\Phi'(0) = 0$, $\Phi''(0) = \I$ and
\begin{align}\label{estPhi}
\pm \re\big(\Phi(z)\big) &\geq \frac{1}{4} |z|^2,\quad
\begin{cases} + & \mbox{for } z\in\Sigma_1\cup\Sigma_3,\\ - &\mbox{else},\end{cases}\\ \label{estPhi2}
|\Phi(z) - \Phi(0) - \frac{\I z^2}{2}| &\leq C |z|^3.
\end{align}
\item
There is some $r\in\mathbb{D}$ and constants $(\alpha, L) \in (0,1] \times (0,\infty)$
such that $R_j$, $j=1,\dots,4$, satisfy H\"older conditions of the form
\begin{align}\nn
\abs{R_1(z) - \ol{r}} &\leq L |z|^\alpha, &
\abs{R_2(z) - r} &\leq L |z|^\alpha, \\\label{holdcondrj}
\abs{R_3(z) - \frac{\ol{r}}{1-\abs{r}^2}} &\leq L |z|^\alpha, &
\abs{R_4(z) - \frac{r}{1-\abs{r}^2}} &\leq L |z|^\alpha.
\end{align}
\end{enumerate}
Then the solution of the RHP \eqref{eq:rhpcross} satisfies
\be
m(z) = \id + \frac{1}{z} \frac{\I}{t^{1/2}} \begin{pmatrix} 0 & -\beta \\ \ol{\beta} & 0 \end{pmatrix}
+ O(t^{- \frac{1 + \alpha}{2}}),
\ee
for $|z|>\rho_0$, where
\be
\beta = \sqrt{\nu} \E^{\I(\pi/4-\arg(r)+\arg(\Gamma(\I\nu)))} \E^{-\I t \Phi_0} t^{-\I\nu},
\qquad \nu = - \frac{1}{2\pi} \log(1 - |r|^2).
\ee
Furthermore, if $R_j(z)$ and $\Phi(z)$ depend on some parameter, the error term is uniform
with respect to this parameter as long as $r$ remains within a compact subset of $\mathbb{D}$
and the constants in the above estimates can be chosen independent of the parameters.
\end{theorem}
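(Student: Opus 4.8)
The plan is to reduce the problem on the cross to a constant-coefficient model Riemann--Hilbert problem that can be solved explicitly in terms of parabolic cylinder functions, and then to control the errors introduced in this reduction. First I would rescale the spectral variable by setting $\zeta\to t^{1/2}z$. Since the four rays $\Sigma_j$ emanate from the origin they are invariant under this dilation, and by \eqref{estPhi2} the phase becomes $t\Phi(z)=\I t\Phi_0+\tfrac{\I}{2}\zeta^2+O(t^{-1/2}|\zeta|^3)$, so the oscillatory factor turns into the pure Gaussian $\E^{\pm\I\zeta^2/2}$ up to a cubic error. Because of the sign conditions \eqref{estPhi}, this Gaussian decays along each ray and dominates any polynomial growth, which is exactly what lets me discard the $O(|z|^\alpha)$ discrepancies in \eqref{holdcondrj}: freezing each $R_j$ at its limiting constant costs only $O(t^{-\alpha/2})$ after the rescaling trades $|z|^\alpha=t^{-\alpha/2}|\zeta|^\alpha$ against the Gaussian. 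The factors $z^{\pm2\I\nu}=\zeta^{\pm2\I\nu}t^{\mp\I\nu}$ and the constant $\E^{\pm\I t\Phi_0}$ are unimodular (here $\nu,\Phi_0\in\R$, $t>0$), so I would absorb them into the argument of $r$. This is precisely the mechanism producing the factors $t^{-\I\nu}$ and $\E^{-\I t\Phi_0}$ in the final $\beta$: the argument of the effective reflection coefficient is shifted by $\nu\log t+t\Phi_0$, so $\E^{-\I\arg(r_{\mathrm{eff}})}=\E^{-\I\arg(r)}\,t^{-\I\nu}\E^{-\I t\Phi_0}$.

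The heart of the argument is the explicit solution of the resulting model problem, in which $r$ is constant, the phase is $\I\zeta^2/2$, and the jumps carry the factors $\zeta^{\pm2\I\nu}$. I would strip these explicit factors by a diagonal conjugation of the form $\E^{\I\zeta^2\sigma_3/4}\zeta^{-\I\nu\sigma_3}$, where $\sigma_3=\mathrm{diag}(1,-1)$ (the exact signs are routine), turning the four jump matrices into ones whose entries are constant along each ray. The transformed unknown $\Psi(\zeta)$ then has jumps independent of $\zeta$, so the logarithmic derivative $\Psi'(\zeta)\Psi(\zeta)^{-1}$ has no jump across the rays and is entire; controlling its growth at infinity shows it is a polynomial of degree one in $\zeta$. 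This is Weber's equation, and $\Psi$ is expressed through parabolic cylinder functions.

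The constant $\beta$, together with the identity $\nu=-\frac{1}{2\pi}\log(1-|r|^2)$ and the appearance of $\Gamma(\I\nu)$, is then read off from the known connection formulas relating the asymptotics of the parabolic cylinder functions in adjacent sectors. The value of $\nu$ is forced by the consistency requirement that the cyclic product of the jump matrices encircling the origin reduce to the identity, which ties the monodromy exponent to $1-|r|^2=\E^{-2\pi\nu}$, while the $\pi/4$ and $\arg\Gamma(\I\nu)$ phases arise from the associated Stokes multipliers. Finally I would reassemble the estimates: the freezing of coefficients, the cubic phase correction, and the passage from the model solution to $m$ each contribute at worst $O(t^{-(1+\alpha)/2})$, made rigorous via the singular integral equation reformulation of the Riemann--Hilbert problem using the uniform boundedness of the Cauchy operator on the fixed contour $\Sigma$; uniformity in the external parameters follows because every constant depends on $r$ only through compact subsets of $\mathbb{D}$. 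I expect the main obstacle to be the exact bookkeeping of phases in the parabolic cylinder computation --- pinning down $\Gamma(\I\nu)$, the $\pi/4$, and the precise dependence on $\arg(r)$ --- rather than the comparatively soft error analysis.
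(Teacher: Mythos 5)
Your proposal follows the paper's own proof essentially step for step: the rescaling $z\mapsto z t^{-1/2}$ with the unimodular factors $t^{\mp\I\nu}\E^{\mp\I t\Phi_0}$ absorbed by a diagonal conjugation (equivalently, into $\arg(r)$), the $O(t^{-\alpha/2})$ freezing of the $R_j$ traded against the Gaussian decay of the phase, the reduction of the constant model problem to a first-order ODE by the Liouville argument and its explicit solution via parabolic cylinder functions with $\E^{-2\pi\nu}=1-\abs{r}^2$ fixed by the monodromy of $\zeta^{\I\nu}$ around the origin, and the error control through the singular-integral-equation reformulation with the uniformly bounded Cauchy operator. This is precisely the chain Lemma~\ref{lem:esticross}, Lemma~\ref{lem:approrhp}, Lemma~\ref{lem:approcross}, Lemma~\ref{lem:reducetoline}, and Lemma~\ref{lem:asymbyparabolic} in the paper, so there is nothing substantive to add.
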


We remark that the solution of the RHP \eqref{eq:rhpcross} is unique. This follows from the usual
Liouville argument \cite[Lem.~7.18]{deiftbook} since $\det(v_j)=1$.

Note that the actual value of $\rho_0$ is of no importance. In fact, if we choose $0 < \rho_1 < \rho_0$,
then the solution $\ti{m}$ of the problem with jump $\ti{v}$, where $\ti{v}$ is equal to $v$ for
$|z| < \rho_1$ and $\id$ otherwise, differs from $m$ only by an exponentially
small error.

This already indicates, that we should be able to replace $R_j(z)$ by their respective values at $z=0$.
To see this we start by rewriting our RHP as a singular integral equation.  We will use the theory
developed in Appendix~\ref{sec:sieq} for the case of $2\times2$ matrix valued functions with
$m_0(z)=\id$ and the usual Cauchy kernel (since we won't require symmetry in this section)
\[
\Omega(s,z) = \id \frac{ds}{s-z}.
\]
Moreover, since our contour is unbounded, we will assume $w\in L^1(\Sigma)\cap L^2(\Sigma)$.
All results from Appendix~\ref{sec:sieq} still hold in this case with some straightforward modifications
if one observes that $\mu-\id \in L^2(\Sigma)$. Indeed, as in Theorem~\ref{thm:cauchyop}, in the special case
$b_+(z) = v_j(z)$ and $b_-(z) = \id$ for $z\in\Sigma_j$, we obtain
\be\label{singintcross}
m(z) = \id + \frac{1}{2\pi\I} \int_{\Sigma} \mu (s) w(s) \frac{ds}{s - z},
\ee
where $\mu-\id$ is the solution of the singular integral equation
\be
(\id - C_w) (\mu -\id) = C_w \id,
\ee
that is,
\be\label{singinteqcross}
\mu = \id + (\id - C_w)^{-1} C_w \id, \qquad C_w f = \mathcal{C}_- (w f).
\ee
Here $\mathcal{C}$ denotes the usual Cauchy operator and we set $w(z)=w_+(z)$ (since $w_-(z)=0$).

As our first step we will get rid of some constants and rescale the
entire problem by setting
\be\label{scalecross}
\hat{m}(z) = D(t)^{-1} m(z t^{-1/2}) D(t),
\ee
where
\be
D(t) = \begin{pmatrix} d(t)^{-1} & 0 \\ 0 &d(t) \end{pmatrix}, \qquad
d(t) = \E^{\I t \Phi_0 /2} t^{\I\nu/2}, \quad d(t)^{-1} = \ol{d(t)}.
\ee
Then one easily checks that $\hat{m}(z)$ solves the RHP
\begin{align}
\hat{m}_+(z) &= \hat{m}_-(z) \hat{v}_j(z), && z\in\Sigma_j,\quad j=1,2,3,4,\\ \nn
\hat{m}(z) &\to \id, && z\to \infty,\quad z\notin\Sigma,
\end{align}
where $\hat{v}_j(z) = D(t)^{-1} v_j(z t^{-1/2}) D(t)$, $j=1,\dots,4$, explicitly
\begin{align} \nn
\hat{v}_1(z) &= \begin{pmatrix} 1 & -R_1(z t^{-1/2}) z^{2\I\nu} \E^{-t (\Phi(z t^{-1/2}) - \Phi(0))} \\ 0 & 1 \end{pmatrix},\\ \nn
\hat{v}_2(z) &= \begin{pmatrix} 1 & 0 \\ R_2(z t^{-1/2}) z^{-2\I\nu} \E^{t (\Phi(z t^{-1/2}) - \Phi(0))} & 1 \end{pmatrix}, \\ \nn
\hat{v}_3(z) &= \begin{pmatrix} 1 & -R_3(z t^{-1/2}) z^{2\I\nu} \E^{-t (\Phi(z t^{-1/2}) - \Phi(0))}\\ 0 & 1\end{pmatrix},\\
\hat{v}_4(z) &= \begin{pmatrix} 1 & 0 \\ R_2(z t^{-1/2}) z^{-2\I\nu} \E^{t (\Phi(z t^{-1/2}) - \Phi(0))} & 1 \end{pmatrix}.
\end{align}
Our next aim is to show that the solution $\hat{m}(z)$ of the rescaled problem is
close to the solution $\hat{m}^c(z)$ of the RHP
\begin{align}\label{eq:solrhpcross2}
\hat{m}^c_+(z) &= \hat{m}^c_-(z) \hat{v}^c_j(z), && z\in\Sigma_j,\quad j=1,2,3,4,\\ \nn
\hat{m}^c(z) &\to \id, && z\to \infty,\quad z\notin\Sigma,
\end{align}
associated with the following jump matrices
\begin{align} \nn
\hat{v}_1^c(z) &= \begin{pmatrix} 1 & -\ol{r} z^{2\I\nu} \E^{-\I z^2/2} \\ 0 & 1 \end{pmatrix}, &
\hat{v}_2^c(z) &= \begin{pmatrix} 1 & 0 \\ r z^{-2\I\nu} \E^{\I z^2/2} & 1 \end{pmatrix}, \\
\hat{v}_3^c(z) &= \begin{pmatrix} 1 & -\frac{\ol{r}}{1-\abs{r}^2} z^{2\I\nu} \E^{-\I z^2/2}\\ 0 & 1\end{pmatrix}, &
\hat{v}_4^c(z) &= \begin{pmatrix} 1 & 0 \\\frac{r}{1-\abs{r}^2} z^{-2\I\nu} \E^{\I z^2/2} & 1 \end{pmatrix}.
\end{align}
The difference between these jump matrices can be estimated as follows.

\begin{lemma}\label{lem:esticross}
The matrices $\hat{w}^c$ and $\hat{w}$ are close in the sense that
\be
\hat{w}_j(z) = \hat{w}^c_j(z) + O(t^{-\alpha/2}\E^{-|z|^2/8}),\quad
z \in\Sigma_j,\quad j=1,\dots 4.
\ee
Furthermore, the error term is uniform with respect to parameters as stated in Theorem~\ref{thm:solcross}.
\end{lemma}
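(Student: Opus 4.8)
The plan is to bound, on each ray $\Sigma_j$, the single nonzero (off-diagonal) entry of the difference $\hat{w}_j - \hat{w}^c_j = \hat{v}_j - \hat{v}^c_j$. First I would observe that in both $\hat{v}_j$ and $\hat{v}^c_j$ the factor $z^{\pm 2\I\nu}$ is \emph{identical}, and that on the ray $\Sigma_j$ it has modulus $\E^{\mp 2\nu\arg z}$, a positive constant (uniformly bounded while $r$ stays in a compact subset of $\mathbb{D}$, since then $\nu$ is bounded). Hence it suffices to bound, say on $\Sigma_1$,
\[
\abs{R_1(z t^{-1/2})\, \E^{-t(\Phi(z t^{-1/2}) - \Phi(0))} - \ol{r}\,\E^{-\I z^2/2}},
\]
and analogously on $\Sigma_2,\Sigma_3,\Sigma_4$, where $R_j$ limits to $r$, $\ol{r}/(1-\abs{r}^2)$, $r/(1-\abs{r}^2)$ respectively; all of these limit values stay bounded for $r$ in a compact subset of $\mathbb{D}$.

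Next I would insert an intermediate term and split via the triangle inequality. The first contribution, $\abs{R_1(z t^{-1/2}) - \ol{r}}\cdot\abs{\E^{-t(\Phi(z t^{-1/2}) - \Phi(0))}}$, is controlled by the H\"older hypothesis \eqref{holdcondrj}, giving $\abs{R_1(z t^{-1/2}) - \ol{r}}\le L\abs{z}^\alpha t^{-\alpha/2}$, together with the decay estimate \eqref{estPhi}: since $\Phi(0)\in\I\R$, one has $t\,\re(\Phi(z t^{-1/2}) - \Phi(0)) = t\,\re\Phi(z t^{-1/2}) \ge \tfrac14\abs{z}^2$ on $\Sigma_1$, so $\abs{\E^{-t(\Phi(z t^{-1/2}) - \Phi(0))}}\le\E^{-\abs{z}^2/4}$. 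This contribution is thus $O(t^{-\alpha/2}\abs{z}^\alpha\E^{-\abs{z}^2/4})$.

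The second contribution is $\abs{\ol{r}}\cdot\abs{\E^{a} - \E^{b}}$ with $a = -t(\Phi(z t^{-1/2}) - \Phi(0))$ and $b = -\I z^2/2$. Applying the cubic estimate \eqref{estPhi2} at $z t^{-1/2}$ and multiplying by $t$ gives $\abs{a - b}\le C\abs{z}^3 t^{-1/2}$. To convert this into a bound on $\abs{\E^a - \E^b}$ I would write $\E^a - \E^b = (a-b)\int_0^1 \E^{b + s(a-b)}\,ds$; the essential point is that the real part is linear along the segment, $\re(b + s(a-b)) = (1-s)\re b + s\,\re a$, and both endpoints satisfy $\re a \le -\tfrac14\abs{z}^2$ (by \eqref{estPhi}) and $\re b = -\tfrac12\abs{z}^2$ (directly, since $z^2 = -\I\abs{z}^2$ on $\Sigma_1$), so the bound $-\tfrac14\abs{z}^2$ holds along the entire segment. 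Hence $\abs{\E^a - \E^b}\le C\abs{z}^3 t^{-1/2}\E^{-\abs{z}^2/4}$, making this contribution $O(t^{-1/2}\abs{z}^3\E^{-\abs{z}^2/4})$.

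Finally I would absorb the polynomial prefactors into a slightly weaker Gaussian via $\abs{z}^\alpha\E^{-\abs{z}^2/4}\le C\E^{-\abs{z}^2/8}$ and $\abs{z}^3\E^{-\abs{z}^2/4}\le C\E^{-\abs{z}^2/8}$, and note $t^{-1/2}\le t^{-\alpha/2}$ for $\alpha\le1$, so both contributions are $O(t^{-\alpha/2}\E^{-\abs{z}^2/8})$ as claimed. The rays $\Sigma_2,\Sigma_3,\Sigma_4$ are treated identically after flipping the sign of the phase (using $\mp\re\Phi \ge \tfrac14\abs{z}^2$) and the corresponding limit value of $R_j$. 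The main obstacle is precisely the second contribution: one must combine the cubic phase error $\abs{a-b}=O(\abs{z}^3 t^{-1/2})$ with the fact that the two exponents, though growing like $\abs{z}^2$, have real part bounded above by $-\tfrac14\abs{z}^2$ along the whole interpolating segment, which is what lets the Gaussian factor survive and thereby prevents the polynomial blow-up from spoiling the estimate. Uniformity in the parameters is then immediate, since $L$, $C$, and the bound on $\abs{r}$ are uniform by hypothesis.
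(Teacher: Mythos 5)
Your argument is, in substance, the paper's own proof: the same factoring out of the common $z^{\pm 2\I\nu}$ (constant modulus on each ray), the same triangle-inequality split into a H\"older term and a phase-comparison term, the same use of \eqref{estPhi} to keep a Gaussian factor $\E^{-|z|^2/4}$ alive in both terms, and the same final absorption of $|z|^\alpha$ and $|z|^3$ into $\E^{-|z|^2/8}$ together with $t^{-1/2}\le t^{-\alpha/2}$. Your interpolation identity $\E^a-\E^b=(a-b)\int_0^1\E^{b+s(a-b)}\,ds$ with the observation that $\re(b+s(a-b))$ is affine in $s$ and hence bounded by $\max(\re a,\re b)\le -\tfrac14|z|^2$ is a clean equivalent of the paper's step, which factors out $\E^{b}$ and bounds $|\E^{a-b}-1|\le|a-b|\E^{\re(a-b)_+}$.

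There is one case you omit. The hypotheses of Theorem~\ref{thm:solcross} (the H\"older conditions \eqref{holdcondrj} and the phase estimates \eqref{estPhi}, \eqref{estPhi2}) are only assumed for $|z|\le\rho_0$, and moreover $v_j(z)=\id$ for $|z|>\rho_0$; after the rescaling $z\mapsto zt^{-1/2}$ this means your entire estimate is only valid for $|z|\le\rho_0 t^{1/2}$, while for $|z|>\rho_0 t^{1/2}$ one has $\hat{w}_j(z)=0$ but $\hat{w}^c_j(z)\neq 0$, so the difference does not vanish there and must still be bounded. The fix is immediate --- on that range $|\hat{w}^c_j(z)|\le \E^{\nu\pi/2}\E^{-|z|^2/2}\le \E^{\nu\pi/2}\E^{-\rho_0^2 t/4}\E^{-|z|^2/4}$, which is far smaller than the claimed error --- but since the lemma asserts the bound for all $z\in\Sigma_j$, this tail region needs to be addressed explicitly, as the paper does in its final display.
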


\begin{proof}
We only give the proof $z\in\Sigma_1$, the other cases being similar. There is only
one nonzero matrix entry in $\hat{w}_j(z) - \hat{w}^c_j(z)$ given by
\[
W = \begin{cases} - R_1(z t^{-1/2}) z^{2\I\nu} \E^{-t(\Phi(z t^{-1/2}) - \Phi(0))} + \ol{r} z^{2\I\nu} \E^{-\I z^2/2},
& |z| \le \rho_0 t^{1/2},\\ \ol{r} z^{2\I \nu} \E^{-\I z^2/2} & |z| > \rho_0 t^{1/2}. \end{cases}
\]
A straightforward estimate for $|z| \le \rho_0 t^{1/2}$ shows
\begin{align*}
|W| & =  \E^{\nu\pi/4} |R_1(z t^{-1/2}) \E^{-t\hat{\Phi}(z t^{-1/2})} - \ol{r}| \E^{-|z|^2/2}\\
& \le  \E^{\nu\pi/4} |R_1(z t^{-1/2}) - \ol{r} | \E^{\re(-t\hat{\Phi}(z t^{-1/2}))-|z|^2/2} +
\E^{\nu\pi/4} |\E^{-t\hat{\Phi}(z t^{-1/2})}-1| \E^{-|z|^2/2}\\
& \le  \E^{\nu\pi/4} |R_1(z t^{-1/2}) - \ol{r} | \E^{-|z|^2/4} +
\E^{\nu\pi/4}  t |\hat{\Phi}(z t^{-1/2})| \E^{-|z|^2/4},
\end{align*}
where $\hat{\Phi}(z) = \Phi(z) - \Phi(0) - \frac{\I}{2} z^2 = \frac{\Phi'''(0)}{6} z^3 + \dots$.
Here we have used $\frac{\I}{2} z^2= \frac{1}{2}|z|^2$ for $z\in\Sigma_1$ and $\re(-t\hat{\Phi}(z t^{-1/2})) \le |z|^2/4$
by \eqref{estPhi}.
Furthermore, by \eqref{estPhi2} and \eqref{holdcondrj},
\begin{align*}
|W| & \le \E^{\nu\pi/4} L t^{-\alpha/2} |z|^\alpha \E^{-|z|^2/4} +
\E^{\nu\pi/4} C t^{-1/2} |z|^3 \E^{-|z|^2/4},
\end{align*}
for $|z| \le \rho_0 t^{1/2}$. For $|z| > \rho_0 t^{1/2}$ we have
\[
|W| \le \E^{\nu\pi/4} \E^{-|z|^2/2} \le \E^{\nu\pi/4} \E^{-\rho_0^2 t/4} \E^{-|z|^2/4}
\]
which finishes the proof.
\end{proof}

The next lemma allows us, to replace $\hat{m}(z)$ by $\hat{m}^c(z)$.

\begin{lemma}\label{lem:approrhp}
Consider the RHP
\begin{align}
m_+(z) &= m_-(z) v(z), && z\in\Sigma,\\ \nn
m(z) &\to \id, && z\to \infty,\quad z\notin\Sigma.
\end{align}
Assume that $w \in L^2(\Sigma) \cap L^{\infty}(\Sigma)$. Then
\be
\|\mu-\id\|_2 \le \frac{c\|w\|_2}{1-c\|w\|_\infty}
\ee
provided $c\|w\|_\infty<1$, where $c$ is the norm of the Cauchy operator on $L^2(\Sigma)$.
\end{lemma}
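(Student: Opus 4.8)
The plan is to read the estimate straight off the singular integral equation that characterizes $\mu$, which has already been set up in Appendix~\ref{sec:sieq} (cf.\ Theorem~\ref{thm:cauchyop}). In the relevant reduction (see \eqref{singinteqcross}) the function $\mu$ is given by
\[
\mu = \id + (\id - C_w)^{-1} C_w \id, \qquad C_w f = \mathcal{C}_-(w f),
\]
where $\mathcal{C}_-$ is the boundary value of the Cauchy operator from the negative side; the general case $v = b_-^{-1} b_+$ is identical with $C_w f = \mathcal{C}_+(w_- f) + \mathcal{C}_-(w_+ f)$ and only changes constants. Thus $\mu - \id = (\id - C_w)^{-1} C_w \id$, and the whole statement reduces to two elementary operator estimates on $L^2(\Sigma)$, the only external input being that $\mathcal{C}_-$ is bounded on $L^2(\Sigma)$ with norm $c$.

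First I would bound the operator norm of $C_w$. Since multiplication by $w$ is bounded on $L^2(\Sigma)$ with norm $\|w\|_\infty$ (this is where $w \in L^\infty(\Sigma)$ enters), we get for every $f \in L^2(\Sigma)$
\[
\|C_w f\|_2 = \|\mathcal{C}_-(w f)\|_2 \le c \|w f\|_2 \le c \|w\|_\infty \|f\|_2,
\]
so that $\|C_w\|_{L^2 \to L^2} \le c\|w\|_\infty$. By the hypothesis $c\|w\|_\infty < 1$ the operator $\id - C_w$ is a contraction perturbation of the identity, its Neumann series converges, and
\[
\|(\id - C_w)^{-1}\|_{L^2 \to L^2} \le \frac{1}{1 - c\|w\|_\infty}.
\]

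Second, because $w \in L^2(\Sigma)$, the inhomogeneous term lies in $L^2$ and obeys $\|C_w \id\|_2 = \|\mathcal{C}_-(w)\|_2 \le c\|w\|_2$, again by boundedness of the Cauchy operator. Combining the two bounds gives
\[
\|\mu - \id\|_2 = \|(\id - C_w)^{-1} C_w \id\|_2 \le \frac{c\|w\|_2}{1 - c\|w\|_\infty},
\]
which is exactly the asserted inequality. I do not expect any genuine obstacle here: the entire content sits in the $L^2(\Sigma)$-boundedness of the Cauchy operator supplied by Appendix~\ref{sec:sieq}, together with the standard Neumann-series inversion of a contraction. The only point deserving a word of care is that $\Sigma$ is unbounded, so one works with $w \in L^1(\Sigma) \cap L^2(\Sigma)$ and checks that $\mu - \id \in L^2(\Sigma)$, exactly as noted in the text preceding the statement; with this remark in place the fixed-point argument is fully legitimate.
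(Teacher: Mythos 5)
Your argument is correct and is essentially the paper's own proof, which simply states that $\ti{\mu}=\mu-\id\in L^2(\Sigma)$ satisfies $(\id-C_w)\ti{\mu}=C_w\id$ and leaves the two operator estimates (the bound $\norm{C_w}\le c\norm{w}_\infty$ with Neumann-series inversion, and $\norm{C_w\id}_2\le c\norm{w}_2$) implicit. You have merely written out those standard steps explicitly; there is no substantive difference.
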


\begin{proof}
This follows since $\ti{\mu} = \mu-\id \in L^2(\Sigma)$ satisfies
$(\id - C_w) \ti{\mu} = C_w \id$.
\end{proof}

\begin{lemma}\label{lem:approcross}
The solution $\hat{m}(z)$ has a convergent asymptotic expansion
\be\label{asymhm}
\hat{m}(z) = \id + \frac{1}{z} \hat{M}(t) + O(\frac{1}{z^2})
\ee
for $|z|>\rho_0 t^{1/2}$ with the error term uniformly in $t$. Moreover,
\be
\hat{M}(t) = \hat{M}^c + O(t^{-\alpha/2}).
\ee
\end{lemma}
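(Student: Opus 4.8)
The plan is to read the expansion off the integral representation \eqref{singintcross} for the rescaled solution and then compare densities. Since $\hat{v}_j(z)=\id$ for $|z|>\rho_0 t^{1/2}$, the density $\hat{w}$ is supported in the disc $|s|\le\rho_0 t^{1/2}$, so for $|z|>\rho_0 t^{1/2}$ the geometric expansion $\frac{1}{s-z}=-\sum_{k\ge 0}s^k z^{-k-1}$ converges uniformly on the support of $\hat{w}$. Substituting it into $\hat{m}(z)=\id+\frac{1}{2\pi\I}\int_\Sigma\hat{\mu}(s)\hat{w}(s)\frac{ds}{s-z}$ gives \eqref{asymhm} with
\[
\hat{M}(t)=-\frac{1}{2\pi\I}\int_\Sigma\hat{\mu}(s)\hat{w}(s)\,ds.
\]
To see that the remainder is $O(z^{-2})$ uniformly in $t$, I would restrict to $|z|\ge 2\rho_0 t^{1/2}$ and bound the $k\ge 1$ tail by $|z|^{-2}\int_\Sigma |s|\,|\hat{\mu}(s)\hat{w}(s)|\,ds$. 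Writing $\hat{\mu}=\id+(\hat{\mu}-\id)$ and using that $\hat{w}$ inherits Gaussian decay from Lemma~\ref{lem:esticross} (so that $\|s\hat{w}\|_1$ and $\|s\hat{w}\|_2$ are $O(1)$), together with $\|\hat{\mu}-\id\|_2=O(1)$, this tail is $O(z^{-2})$ with a $t$-independent constant.

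For the comparison $\hat{M}(t)=\hat{M}^c+O(t^{-\alpha/2})$ I would expand $\hat{m}^c$ in the same way, obtaining $\hat{M}^c=-\frac{1}{2\pi\I}\int_\Sigma\hat{\mu}^c(s)\hat{w}^c(s)\,ds$, and then split
\[
\hat{M}(t)-\hat{M}^c=-\frac{1}{2\pi\I}\int_\Sigma\big[(\hat{\mu}-\hat{\mu}^c)\hat{w}+\hat{\mu}^c(\hat{w}-\hat{w}^c)\big]\,ds.
\]
The second term is immediately $O(t^{-\alpha/2})$: by Lemma~\ref{lem:esticross}, $\hat{w}-\hat{w}^c=O(t^{-\alpha/2}\E^{-|s|^2/8})$, and pairing this Gaussian against $\hat{\mu}^c=\id+(\hat{\mu}^c-\id)\in\id+L^2$ leaves an integrable factor times $t^{-\alpha/2}$. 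The first term is bounded by $\|\hat{\mu}-\hat{\mu}^c\|_2\,\|\hat{w}\|_2$, so everything reduces to showing $\|\hat{\mu}-\hat{\mu}^c\|_2=O(t^{-\alpha/2})$.

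For this I would use $\hat{\mu}=(\id-C_{\hat{w}})^{-1}\id$ and $\hat{\mu}^c=(\id-C_{\hat{w}^c})^{-1}\id$ together with the resolvent identity
\[
\hat{\mu}-\hat{\mu}^c=(\id-C_{\hat{w}})^{-1}(C_{\hat{w}}-C_{\hat{w}^c})\,\hat{\mu}^c.
\]
Since $(C_{\hat{w}}-C_{\hat{w}^c})f=\mathcal{C}_-((\hat{w}-\hat{w}^c)f)$, Lemma~\ref{lem:esticross} gives the operator-norm bound $\|C_{\hat{w}}-C_{\hat{w}^c}\|\le c\,\|\hat{w}-\hat{w}^c\|_\infty=O(t^{-\alpha/2})$, and applying it to $\hat{\mu}^c$ (again splitting off $\id$) yields an $L^2$ vector of size $O(t^{-\alpha/2})$.

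The step I expect to be the main obstacle is the uniform boundedness of $(\id-C_{\hat{w}})^{-1}$ on $L^2(\Sigma)$ — uniformly in $t$ and in $r$ over compact subsets of $\mathbb{D}$ — since on the cross $\|\hat{w}\|_\infty$ is \emph{not} small and Lemma~\ref{lem:approrhp} does not apply directly. I would obtain it by perturbation off the model operator: the model problem \eqref{eq:solrhpcross2} is explicitly solvable and its jumps have unit determinant, so its solution is unique and $(\id-C_{\hat{w}^c})^{-1}$ is bounded; then from
\[
\id-C_{\hat{w}}=(\id-C_{\hat{w}^c})\big(\id-(\id-C_{\hat{w}^c})^{-1}(C_{\hat{w}}-C_{\hat{w}^c})\big)
\]
and $\|C_{\hat{w}}-C_{\hat{w}^c}\|=O(t^{-\alpha/2})\to 0$, a Neumann series shows the right factor is invertible for $t$ large, with a uniformly bounded inverse. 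This same bound also furnishes $\|\hat{\mu}-\id\|_2=\|(\id-C_{\hat{w}})^{-1}C_{\hat{w}}\id\|_2=O(1)$ used above, and combining the three estimates proves the lemma.
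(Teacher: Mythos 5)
Your overall architecture (reading \eqref{asymhm} off the integral representation using the compact support of $\hat{w}$, then splitting $\hat{M}(t)-\hat{M}^c$ and reducing everything to $\|\hat{\mu}-\hat{\mu}^c\|_2=O(t^{-\alpha/2})$) matches the paper. The genuine gap is in how you obtain that last estimate. Your resolvent identity requires the uniform bounded invertibility of $(\id-C_{\hat{w}})^{-1}$, which you propose to deduce by perturbing off $(\id-C_{\hat{w}^c})^{-1}$; but your justification of the latter --- ``the model problem is explicitly solvable and its jumps have unit determinant, so its solution is unique, hence $(\id-C_{\hat{w}^c})^{-1}$ is bounded'' --- is not a proof. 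Uniqueness of the \emph{normalized} RHP (the Liouville argument) is not the same as triviality of $\ker(\id-C_{\hat{w}^c})$: the kernel corresponds to the \emph{vanishing} homogeneous problem, and to conclude invertibility you need (a) that $\id-C_{\hat{w}^c}$ is Fredholm of index zero on this unbounded cross, (b) a vanishing lemma for the homogeneous problem (e.g.\ via the deformation to $\R$ where $v+v^\dagger>0$ for $|r|<1$), and (c) continuity of the resolvent in $r$ to get uniformity over compact subsets of $\mathbb{D}$. None of these is supplied, and the explicit solvability by parabolic cylinder functions gives you a bounded \emph{solution}, not a bound on the \emph{operator} inverse. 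This is fixable (it is the classical Beals--Coifman/Zhou route), but as written it is the missing load-bearing step.

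The paper avoids this machinery entirely by a different device: it forms the ratio $\hat{m}^d(z)=\hat{m}(z)\hat{m}^c(z)^{-1}$, whose jump is
\[
\hat{v}^d(z)=\id+\hat{m}^c_-(z)\bigl(\hat{w}(z)-\hat{w}^c(z)\bigr)\hat{m}^c_-(z)^{-1}=\id+O(t^{-\alpha/2}\E^{-|z|^2/8}),
\]
using only the explicit boundedness of $\hat{m}^c$. Since $\|\hat{w}^d\|_\infty$ is now small, the small-norm Lemma~\ref{lem:approrhp} (plain Neumann series) gives $\|\hat{\mu}^d-\id\|_2=O(t^{-\alpha/2})$ directly, and $\hat{\mu}=\hat{\mu}^d\hat{\mu}^c$ then yields $\|\hat{\mu}-\hat{\mu}^c\|_2=O(t^{-\alpha/2})$ with no Fredholm theory, no vanishing lemma, and no uniform resolvent bound --- this is exactly the point of the remark following Theorem~\ref{thm:decoupling}. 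I recommend you either adopt the ratio-problem trick or fill in steps (a)--(c) above explicitly; your remaining estimates (the $O(z^{-2})$ tail and the $L^1$/$L^2$ pairings in the comparison of $\hat{M}(t)$ with $\hat{M}^c$) are fine and coincide with the paper's.
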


\begin{proof}
Consider $\hat{m}^d(z)= \hat{m}(z) \hat{m}^c(z)^{-1}$, whose jump matrix is given by
\[
\hat{v}^d(z) = \hat{m}_-^c(z)\hat{v}(z) \hat{v}^c(z)^{-1} \hat{m}_-^c(z)^{-1}=
\id + \hat{m}_-^c(z)\big(\hat{w}(z) - \hat{w}^c(z)\big)\hat{m}_-^c(z)^{-1}.
\]
By Lemma~\ref{lem:esticross}, we have that $\hat{w} - \hat{w}^c$ is
decaying of order $t^{-\alpha/2}$ in the norms of $L^1$ and $L^\infty$ and
thus the same is true for $\hat{w}^d = \hat{v}^d - \id$. Hence by the previous lemma
\[
\|\hat{\mu}^d - \id\|_2 = O(t^{-\alpha/2}).
\]
Furthermore, by $\hat{\mu}^d = \hat{m}^d_- = \hat{m}_- (\hat{m}^c_-)^{-1} =
\hat{\mu} (\hat{\mu}^c)^{-1}$ we infer
\[
\|\hat{\mu} - \hat{\mu}^c\|_2 = O(t^{-\alpha/2})
\]
since $ \hat{\mu}^c$ is bounded. Now
\[
\hat{m}(z) = \id - \frac{1}{2\pi\I} \frac{1}{z} \int_\Sigma \hat{\mu}(s) \hat{w}(s) ds +
\frac{1}{2\pi\I} \frac{1}{z} \int_\Sigma s \hat{\mu}(s) \hat{w}(s) \frac{ds}{s - z}
\]
shows (recall that $\hat{w}$ is supported inside $|z|\le\rho_0 t^{1/2}$)
\[
\hat{m}(z) = \id + \frac{1}{z} \hat{M}(t) + O(\frac{\|\hat{\mu}(s)\|_2 \|s \hat{w}(s)\|_2}{z^2}),
\]
where
\[
\hat{M}(t) = -\frac{1}{2\pi\I}  \int_\Sigma \hat{\mu}(s) \hat{w}(s) ds.
\]
Now the rest follows from
\[
\hat{M}(t) = \hat{M}^c - \frac{1}{2\pi\I}  \int_\Sigma (\hat{\mu}(s) \hat{w}(s) -\hat{\mu}^c(s) \hat{w}^c(s)) ds
\]
using $\|\hat{\mu} \hat{w} -\hat{\mu}^c \hat{w}^c\|_1 \le \|\hat{w}-\hat{w}^c\|_1 + \|\hat{\mu}-\id\|_2 \|\hat{w}-\hat{w}^c\|_2 +
\|\hat{\mu}-\hat{\mu}^c\|_2\|\hat{w}^c\|_2$.
\end{proof}

Finally, it remains to solve \eqref{eq:solrhpcross2} and to show:

\begin{theorem}
The solution of the RHP \eqref{eq:solrhpcross2}
is of the form
\be
\hat{m}^c(z) = \id + \frac{1}{z} \hat{M}^c + O(\frac{1}{z^2}),
\ee
where
\begin{align}
\hat{M}^c &= \I \begin{pmatrix} 0 & -\beta \\ \ol{\beta} & 0 \end{pmatrix},\quad
\beta = \sqrt{\nu} \E^{\I(\pi/4-\arg(r)+\arg(\Gamma(\I\nu)))}.
\end{align}
The error term is uniform with respect to $r$ in compact subsets of $\mathbb{D}$.
Moreover, the solution is bounded (again uniformly with respect to $r$).
\end{theorem}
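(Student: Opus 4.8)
The plan is to follow the classical reduction to parabolic cylinder (Weber) functions used in Sections~3 and~4 of \cite{dz} (see also \cite{its}). The essential observation is that the entire $z$-dependence of the jumps $\hat{v}^c_j$ lives in the scalar factors $z^{\pm 2\I\nu}\E^{\mp \I z^2/2}$, which can be removed by a diagonal conjugation. Thus I would set $\sig_3=\left(\begin{smallmatrix}1&0\\0&-1\end{smallmatrix}\right)$, $P(z)=z^{\I\nu\sig_3}\E^{-\I z^2\sig_3/4}$ (with the branch of $z^{\I\nu}$ cut along $(-\infty,0)$ as in the statement) and put $\Psi(z)=\hat{m}^c(z)P(z)$. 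Since $P$ is diagonal with no branch point on any ray, $P_+=P_-$ there and the transformed jump $P^{-1}\hat{v}^c_j P$ has its off-diagonal entry multiplied by $z^{\mp 2\I\nu}\E^{\pm\I z^2/2}$; this cancels the variable factor and leaves the constant matrices $\big(\begin{smallmatrix}1&-\ol{r}\\0&1\end{smallmatrix}\big)$, $\big(\begin{smallmatrix}1&0\\ r&1\end{smallmatrix}\big)$, $\big(\begin{smallmatrix}1&-\ol{r}/(1-|r|^2)\\0&1\end{smallmatrix}\big)$, $\big(\begin{smallmatrix}1&0\\ r/(1-|r|^2)&1\end{smallmatrix}\big)$ on $\Sigma_1,\dots,\Sigma_4$. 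Crossing $(-\infty,0)$, the branch of $P$ contributes the additional constant jump $\E^{-2\pi\nu\sig_3}$ for $\Psi$.

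Next I would exploit that all of these jumps are independent of $z$. Consequently $\Psi'(z)\Psi(z)^{-1}$ has trivial jumps across every ray and is entire. Inserting the prescribed normalization $\hat{m}^c(z)=\id+\hat{M}^c/z+O(z^{-2})$ together with $P'P^{-1}=(\tfrac{\I\nu}{z}-\tfrac{\I z}{2})\sig_3$, a Liouville argument gives
\[
\Psi'(z)\Psi(z)^{-1} = -\frac{\I z}{2}\sig_3 -\frac{\I}{2}[\hat{M}^c,\sig_3].
\]
Writing $\hat{M}^c=\big(\begin{smallmatrix}a&b\\c&d\end{smallmatrix}\big)$, the constant term is off-diagonal and equals $\big(\begin{smallmatrix}0&\I b\\-\I c&0\end{smallmatrix}\big)$, so $\Psi$ solves a first-order linear system with coefficient linear in $z$. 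Eliminating one entry converts this into Weber's equation, whose solutions are the parabolic cylinder functions $D_a$ with parameter determined by the product $bc$.

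The third step is to solve this ODE sector by sector and match. Using the large-$z$ asymptotics of $D_a$ valid in $|\arg z|<3\pi/4$ and its analytic continuation, the associated connection (Stokes) formula -- which carries the factor $\sqrt{2\pi}/\Gamma(-a)$ -- reproduces exactly the four constant off-diagonal jumps above plus the branch jump $\E^{-2\pi\nu\sig_3}$. Matching this Stokes data against $r$, $\ol{r}$, $r/(1-|r|^2)$ forces $a=\I\nu$ and $bc=\nu$, and fixes the entries to $b=-\I\beta$, $c=\I\ol{\beta}$ with $\beta=\sqrt{\nu}\,\E^{\I(\pi/4-\arg(r)+\arg\Gamma(\I\nu))}$. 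Reading off the $1/z$ coefficient of $\hat{m}^c=\Psi P^{-1}$ then gives $\hat{M}^c=\I\big(\begin{smallmatrix}0&-\beta\\\ol{\beta}&0\end{smallmatrix}\big)$ as claimed; since the construction is explicit it also establishes existence, while $\det\hat{v}^c_j=1$ yields boundedness and uniqueness via the usual Liouville argument.

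The main obstacle is the precise bookkeeping of the branch cuts and of the normalization constants in the connection formulae for $D_a$ across the four sectors: extracting the exact phases (the $\pi/4$, and the $\arg\Gamma(\I\nu)$ and $\arg(r)$ contributions) is where all the delicacy lies, whereas the relation $|\beta|^2=\nu$ is routine algebra. Uniformity of the $O(z^{-2})$ error and boundedness of $\hat{m}^c$ for $r$ in compact subsets of $\mathbb{D}$ follow at the end, because $\nu$, $\beta$, and every connection constant depend continuously on $r$ and stay finite as long as $r$ is bounded away from $\partial\mathbb{D}$.
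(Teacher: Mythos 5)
Your proposal is correct and follows essentially the same route as the paper: conjugate by $z^{\I\nu\sigma_3}\E^{-\I z^2\sigma_3/4}$ so that all jumps become constant, apply a Liouville argument to obtain a first-order ODE linear in $z$, solve it via parabolic cylinder functions, and determine $\beta$ (with $\beta_{12}\beta_{21}=\nu$) from the connection data. The only cosmetic difference is that the paper first folds the four rays onto $\R$ using the extra constant factors $D_j$ and reads the single jump matrix off from the values $D_a(0)$, $D_a'(0)$, whereas you match the Stokes multipliers of $D_a$ on the rays directly.
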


Given this result, Theorem~\ref{thm:solcross} follows from Lemma~\ref{lem:approcross}
\begin{align}\nn
m(z) &= D(t) \hat{m}(z t^{1/2}) D(t)^{-1} = \id + \frac{1}{t^{1/2} z} D(t) \hat{M}(t) D(t)^{-1}+ O(z^{-2} t^{-1})\\
&= \id + \frac{1}{t^{1/2} z} D(t) \hat{M}^c D(t)^{-1}+ O(t^{-(1+\alpha)/2})
\end{align}
for $|z|>\rho_0$, since $D(t)$ is bounded.

The proof of this result will be given in the remainder of this section.
In order to solve \eqref{eq:solrhpcross2} we begin with a deformation which moves the jump to $\R$
as follows. Denote the region enclosed by $\mathbb{R}$ and $\Sigma_j$ as $\Omega_j$
(cf.\ Figure~\ref{fig:defcross})
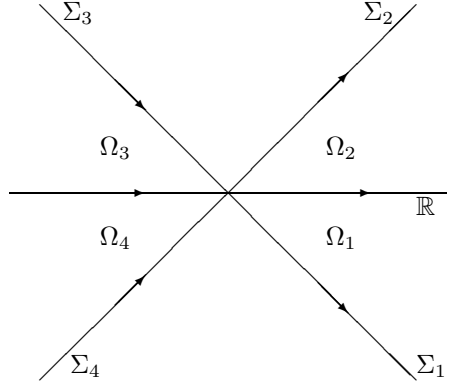
\begin{figure}
\begin{picture}(7,5.2)

\put(1,5){\line(1,-1){5}}
\put(2,4){\vector(1,-1){0.4}}
\put(4.7,1.3){\vector(1,-1){0.4}}
\put(1,0){\line(1,1){5}}
\put(2,1){\vector(1,1){0.4}}
\put(4.7,3.7){\vector(1,1){0.4}}
\put(0.6,2.5){\line(1,0){5.8}}
\put(2,2.5){\vector(1,0){0.4}}
\put(5,2.5){\vector(1,0){0.4}}

\put(6.0,0.1){$\Sigma_1$}
\put(5.3,4.8){$\Sigma_2$}
\put(1.3,4.8){$\Sigma_3$}
\put(1.4,0.1){$\Sigma_4$}
\put(6,2.2){$\R$}

\put(4.8,1.8){$\Omega_1$}
\put(4.8,3){$\Omega_2$}
\put(1.8,3){$\Omega_3$}
\put(1.8,1.8){$\Omega_4$}

\end{picture}
\caption{Deforming back the cross}
\label{fig:defcross}
\end{figure}%
and define
\be\label{eq:mrhpcross3}
\ti{m}^c(z) = \hat{m}^c(z) \begin{cases}
D_0(z) D_j, & z\in\Omega_j,\: j = 1,\dots,4,\\
D_0(z), & \mbox{else}, \end{cases}
\ee
where
\[
D_0(z) = \begin{pmatrix} z^{\I\nu} \E^{-\I z^2/4} & 0 \\ 0 & z^{-\I\nu} \E^{\I z^2/4} \end{pmatrix},
\]
and
\begin{align*}
D_1 &= \begin{pmatrix} 1 & \ol{r}  \\ 0 & 1 \end{pmatrix} &
D_2 &= \begin{pmatrix} 1 & 0 \\ r  & 1 \end{pmatrix} &
D_3 &= \begin{pmatrix} 1 & -\frac{\ol{r}}{1-\abs{r}^2} \\ 0 & 1\end{pmatrix} &
D_4 &= \begin{pmatrix} 1 & 0 \\ -\frac{r}{1-\abs{r}^2} & 1 \end{pmatrix}.
\end{align*}

\begin{lemma}\label{lem:reducetoline}
The function $\ti{m}^c(z)$ defined in \eqref{eq:mrhpcross3} satisfies the RHP
\begin{align}\label{eq:solrhpcross3}
\ti{m}^c_+ (z) &= \ti{m}^c_-(z) \begin{pmatrix} 1 - \abs{r}^2 & - \ol{r} \\ r & 1 \end{pmatrix},&&
z\in\mathbb{R}\\ \nn
\ti{m}^c(z) &= (\id + \frac{1}{z} \hat{M}^c + \dots) D_0(z),&& z\to\infty,
\: \frac{\pi}{4} < \arg(z) < \frac{3\pi}{4}.
\end{align}
\end{lemma}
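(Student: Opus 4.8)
The plan is to verify the three requirements of \eqref{eq:solrhpcross3} — the jump on $\mathbb{R}$, the absence of jumps on the rays, and the behaviour at infinity — by a direct computation, the point being that $D_0$ and the constant matrices $D_1,\dots,D_4$ have been chosen precisely so that the oscillatory factors cancel. First I would record that $D_0(z)$ is analytic on $\C\setminus(-\infty,0]$: the entry $\E^{\mp\I z^2/4}$ is entire, while $z^{\pm\I\nu}$ is analytic off its branch cut on the negative real axis. Hence $D_0$ is continuous across each ray $\Sigma_j$ but genuinely jumps across $\mathbb{R}^-$; and since $\hat m^c$ has no jump on $\mathbb{R}$ (its jumps sit only on the rays), every jump of $\ti m^c$ collapses to an algebraic identity among the $D$'s.

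On a ray $\Sigma_j$, writing $\Omega_a$, $\Omega_b$ for the regions on the $+$ and $-$ side, the jump is $\ti v = D_b^{-1}\,(D_0^{-1}\hat v_j^c D_0)\,D_a$. Conjugating the single off-diagonal entry of $\hat v_j^c$ by the diagonal $D_0$ multiplies it by $z^{\mp2\I\nu}\E^{\pm\I z^2/2}$, which exactly annihilates the factor $z^{\pm2\I\nu}\E^{\mp\I z^2/2}$ sitting there and leaves a constant. For example on $\Sigma_1$ one gets $D_0^{-1}\hat v_1^c D_0=\big(\begin{smallmatrix}1&-\ol{r}\\0&1\end{smallmatrix}\big)=D_1^{-1}$; since the $+$ side is $\Omega_1$ and the $-$ side is the unlabelled region (where $D_b=\id$), this yields $\ti v=\id\cdot D_1^{-1}\cdot D_1=\id$. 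The three other rays are identical, so all ray jumps disappear.

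The jump on $\mathbb{R}^+$ is immediate: there $\hat m^c$ and $D_0$ are both continuous, the $+$ side is $\Omega_2$ and the $-$ side $\Omega_1$, so $\ti v=D_1^{-1}D_2=\big(\begin{smallmatrix}1-|r|^2&-\ol{r}\\ r&1\end{smallmatrix}\big)$, the desired matrix. The delicate point, which I expect to be the main obstacle, is $\mathbb{R}^-$, where the branch-cut discontinuity of $D_0$ must conspire with $D_3$ and $D_4$. Using that $z^2$ is continuous while $z^{\pm\I\nu}$ jumps by $\E^{\mp2\pi\nu}$ across $(-\infty,0)$, one finds $(D_0)_-^{-1}(D_0)_+=\big(\begin{smallmatrix}\E^{-2\pi\nu}&0\\0&\E^{2\pi\nu}\end{smallmatrix}\big)$, and the defining relation $\nu=-\frac{1}{2\pi}\log(1-|r|^2)$ turns this into $\big(\begin{smallmatrix}1-|r|^2&0\\0&(1-|r|^2)^{-1}\end{smallmatrix}\big)$. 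Since here the $+$ side is $\Omega_3$ and the $-$ side $\Omega_4$, the jump is $\ti v=D_4^{-1}\big(\begin{smallmatrix}1-|r|^2&0\\0&(1-|r|^2)^{-1}\end{smallmatrix}\big)D_3$, and a short multiplication (its $(2,2)$ entry reducing to $(1-|r|^2)/(1-|r|^2)=1$) reproduces exactly $\big(\begin{smallmatrix}1-|r|^2&-\ol{r}\\ r&1\end{smallmatrix}\big)$. Thus the jump is the same constant matrix along all of $\mathbb{R}$.

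Finally, the normalization at infinity is read off directly: for $\frac{\pi}{4}<\arg(z)<\frac{3\pi}{4}$ we lie in the unlabelled region, where $\ti m^c(z)=\hat m^c(z)D_0(z)$, so inserting the expansion $\hat m^c(z)=\id+\frac1z\hat M^c+\dots$ from \eqref{eq:solrhpcross2} gives $\ti m^c(z)=\big(\id+\frac1z\hat M^c+\dots\big)D_0(z)$, which is precisely the claimed asymptotics.
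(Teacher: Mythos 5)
Your proof is correct and follows the same route as the paper: verify that $D_0^{-1}\hat v_j^c D_0$ collapses to the appropriate constant matrix so the ray jumps cancel against $D_j$, read off $D_1^{-1}D_2$ on $\R^+$, and on $\R^-$ combine $D_{0,-}^{-1}D_{0,+}=\operatorname{diag}(\E^{-2\pi\nu},\E^{2\pi\nu})$ with $\E^{-2\pi\nu}=1-|r|^2$ to get $D_4^{-1}D_{0,-}^{-1}D_{0,+}D_3$ equal to the same constant jump. Your write-up is just a more explicit version of the paper's computation, with the side/orientation bookkeeping done correctly.
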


\begin{proof}
First, one checks that $\ti{m}^c_+(z) = \ti{m}^c_-(z) D_0(z)^{-1} \hat{v}_1^c(z) D_0(z) D_1 =
\ti{m}^c_-(z)$, $z\in\Sigma_1$ and similarly for $z \in \Sigma_2, \Sigma_3, \Sigma_4$.
To compute the jump along $\R$ observe that, by our choice of branch cut for $z^{\I\nu}$, $D_0(z)$
has a jump along the negative real axis given by
\[
D_{0,\pm}(z) = \begin{pmatrix} \E^{(\log|z| \pm \I \pi) \I \nu} \E^{-\I z^2/4} & 0 \\ 0 & \E^{-(\log|z| \pm \I \pi) \I \nu} \E^{\I z^2/4} \end{pmatrix}, \qquad z<0.
\]
Hence the jump along $\R$ is given by
\[
D_1^{-1} D_2,\quad z>0 \quad\text{and}\quad
D_4^{-1} D_{0,-}^{-1}(z) D_{0,+}(z) D_3,\quad z<0,
\]
and \eqref{eq:solrhpcross3} follows after recalling $\E^{-2\pi\nu}=1-|r|^2$.
\end{proof}

Now, we can follow (4.17) to (4.51) in \cite{dz} to construct
an approximate solution.

The idea is as follows, since the jump matrix for \eqref{eq:solrhpcross3}, the derivative
$\frac{d}{dz} \ti{m}^c(z)$ has the same jump and hence is given by $n(z) \ti{m}^c(z)$,
where the entire matrix $n(z)$ can be determined from the behaviour $z\to\infty$. Since
this will just serve as a motivation for our ansatz, we will not worry about justifying any
steps.

For $z$ in the sector $\frac{\pi}{4} < \arg(z) < \frac{3\pi}{4}$ (enclosed by $\Sigma_2$ and $\Sigma_3$)
we have $\ti{m}^c(z) =\hat{m}^c(z) D_0(z)$ and hence
\begin{align*}
& \left( \frac{d}{dz} \ti{m}^c(z) + \frac{\I z}{2} \sigma_3 \ti{m}^c(z) \right) \ti{m}^c(z)^{-1}\\
& \qquad = \left( \I(\frac{\nu}{z} - \frac{z}{2}) \hat{m}^c(z) \sig_3 + \frac{d}{dz} \hat{m}^c(z)
+ \I \frac{z}{2} \sig_3 \hat{m}^c(z) \right) \hat{m}^c(z)^{-1}\\
& \qquad = \frac{\I}{2} [\sig_3,  \hat{M}^c] + O(\frac{1}{z}), \qquad
\sigma_3 = \begin{pmatrix} 1 & 0 \\ 0 & -1 \end{pmatrix}.
\end{align*}
Since the left hand side has no jump, it is entire and hence by Liouville's theorem a constant
given by the right hand side. In other words,
\be\label{eq:odetimc}
\frac{d}{dz} \ti{m}^c(z) + \frac{\I z}{2} \sigma_3 \ti{m}^c(z) = \beta \ti{m}^c(z),\quad
\beta = \begin{pmatrix} 0 & \beta_{12} \\ \beta_{21} & 0 \end{pmatrix} = \frac{\I}{2} [\sigma_3, \hat{M}^c].
\ee
This differential equation can be solved in terms of parabolic cylinder function which then gives
the solution of \eqref{eq:solrhpcross3}.

\begin{lemma}\label{lem:asymbyparabolic}
The RHP \eqref{eq:solrhpcross3} has a unique solution, and the term $\hat{M}^c$ is given by
\begin{align}
\hat{M}^c &= \I \begin{pmatrix} 0 & - \beta_{12} \\ \beta_{21} & 0 \end{pmatrix},\quad
&& \beta_{12} = \ol{\beta_{21}} = \sqrt{\nu} \E^{\I(\pi/4-\arg(r)+\arg(\Gamma(\I\nu)))}.
\end{align}
\end{lemma}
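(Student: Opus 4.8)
The plan is to take the first-order linear ODE \eqref{eq:odetimc} as the starting point and reduce it to a scalar second-order equation of parabolic cylinder (Weber) type, whose solutions are explicit and whose asymptotics are classical. The coefficient matrix $-\frac{\I z}{2}\sigma_3+\beta$ of the system $\frac{d}{dz}\ti{m}^c=\big(-\frac{\I z}{2}\sigma_3+\beta\big)\ti{m}^c$ is entire, so every column of $\ti{m}^c$ is an entire vector solution; the prescribed behaviour as $z\to\infty$ together with the jump across $\R$ in \eqref{eq:solrhpcross3} will then single out the relevant solution and thereby pin down the unknown constants $\beta_{12},\beta_{21}$ in $\beta$.

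First I would write the system for a single column $(\psi_1,\psi_2)^\top$ of $\ti{m}^c$, namely $\psi_1'=-\frac{\I z}{2}\psi_1+\beta_{12}\psi_2$ and $\psi_2'=\frac{\I z}{2}\psi_2+\beta_{21}\psi_1$. Solving the first relation for $\psi_2$ and inserting it into the second eliminates $\psi_2$ and gives
\[
\psi_1'' + \Big(\frac{z^2}{4} + \frac{\I}{2} - \beta_{12}\beta_{21}\Big)\psi_1 = 0,
\]
which, after the rotation $z\mapsto\E^{\I\pi/4}\zeta$ (respectively $\E^{-3\I\pi/4}\zeta$, depending on the sector) that turns $\frac{z^2}{4}$ into $-\frac{\zeta^2}{4}$, is exactly Weber's equation for the parabolic cylinder function $D_a$ with parameter $a$ governed by $\beta_{12}\beta_{21}$. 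Matching the leading term of the classical expansion $D_a(\zeta)=\zeta^a\E^{-\zeta^2/4}(1+O(\zeta^{-2}))$ against the required behaviour $\ti{m}^c(z)\sim(\id+\frac1z\hat{M}^c+\dots)D_0(z)$ fixes $a=-\I\nu$ and expresses $\hat{M}^c$ through the subleading coefficients; in particular the off-diagonal form of $\beta$ forces $\beta_{21}=\ol{\beta_{12}}$, whence $\beta_{12}\beta_{21}=|\beta_{12}|^2$.

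Next I would impose the jump $\ti{m}^c_+(z)=\ti{m}^c_-(z)\big(\begin{smallmatrix}1-|r|^2 & -\ol{r}\\ r & 1\end{smallmatrix}\big)$ on $\R$. Since the two half-plane sectors correspond to $\arg\zeta=\frac{\pi}{4}$ and $\arg\zeta=-\frac{3\pi}{4}$ of the rotated variable, relating the solution on the two sides is precisely the Stokes/connection problem for $D_a$: the standard connection formula expressing $D_a(\zeta)$ through $D_a(-\zeta)$ and $D_{-a-1}(-\I\zeta)$, whose coefficient involves $\sqrt{2\pi}/\Gamma(-a)$, supplies the required linear combination. Comparing the entries of the jump matrix with these connection coefficients yields, on the one hand, $\E^{-2\pi\nu}=1-|r|^2$, i.e. $|\beta_{12}|^2=\nu=-\frac{1}{2\pi}\log(1-|r|^2)$, and on the other hand the phase: the factor $\Gamma(-a)=\Gamma(\I\nu)$ contributes $\arg(\Gamma(\I\nu))$, the rotation of the contour contributes the $\pi/4$, and the off-diagonal conjugation entry $r$ of $D_2$ contributes $-\arg(r)$, giving altogether $\beta_{12}=\ol{\beta_{21}}=\sqrt{\nu}\,\E^{\I(\pi/4-\arg(r)+\arg(\Gamma(\I\nu)))}$.

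Uniqueness is immediate from the usual Liouville argument: the jump matrix on $\R$ has determinant $(1-|r|^2)+|r|^2=1$, so the quotient of two normalized solutions is an entire bounded matrix, hence constant and equal to $\id$. The main obstacle is the third step, namely correctly tracking the branch cuts of $z^{\I\nu}$ and $\zeta^a$, the sectors in which each asymptotic expansion of $D_a$ is valid, and the precise Stokes multipliers, so that the \emph{full} phase of $\beta_{12}$, and not merely its modulus $\sqrt{\nu}$, emerges with the stated constant. All of this is carried out in detail following (4.17)--(4.51) of \cite{dz}.
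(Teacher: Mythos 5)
Your proposal follows essentially the same route as the paper's proof: reduce the ODE \eqref{eq:odetimc} to Weber's equation, solve it by parabolic cylinder functions $D_{\pm\I\nu}$, match the prescribed asymptotics to obtain $\beta_{12}\beta_{21}=\nu$, and identify the (necessarily constant) jump matrix with $\bigl(\begin{smallmatrix}1-\abs{r}^2 & -\ol{r}\\ r & 1\end{smallmatrix}\bigr)$ to extract the phase of $\beta_{12}$ via Gamma-function identities --- the only difference being that the paper evaluates the constant jump matrix at $z=0$ using $D_a(0)$, $D_a'(0)$ and the duplication formula, whereas you invoke the equivalent $D_a(\zeta)$ versus $D_a(-\zeta)$, $D_{-a-1}(\pm\I\zeta)$ connection formula. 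One small caveat: the relation $\beta_{21}=\ol{\beta_{12}}$ is not forced by the off-diagonal form of $\beta$ alone, but emerges (as in the paper) from $\beta_{12}\beta_{21}=\nu$ together with the unimodularity of the free phase $\gamma$ fixed by the jump condition.
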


\begin{proof}
Uniqueness follows by the standard Liouville argument since the determinant of the jump matrix is equal to $1$.
To find the solution we use the ansatz
\[
\ti{m}^c(z) = \begin{pmatrix} \psi_{11}(z) & \psi_{12}(z) \\ \psi_{21}(z) & \psi_{22}(z) \end{pmatrix},
\]
where the functions $\psi_{jk}(z)$ satisfy
\begin{align*}
\psi_{11}''(z) &=  - \left(\frac{\I}{2} + \frac{1}{4}z^2-\beta_{12}\beta_{21}\right)\psi_{11}(z), &
\psi_{12}(z) &= \frac{1}{\beta_{21}}\left(\frac{d}{dz} - \frac{\I z}{2}\right)\psi_{22}(z),\\
\psi_{21}(z) &= \frac{1}{\beta_{12}}\left(\frac{d}{dz} + \frac{\I z}{2}\right)\psi_{11}(z), &
\psi_{22}''(z) &=  \left(\frac{\I}{2} - \frac{1}{4}z^2+\beta_{12}\beta_{21}\right)\psi_{22}(z).
\end{align*}
That is, $\psi_{11}(\E^{3\pi \I/4}\zeta)$ satisfies the parabolic cylinder equation
\[
D ''(\zeta) + \left(a + \frac{1}{2} - \frac{1}{4} \zeta^2\right) D(\zeta) = 0
\]
with $a= \I\beta_{12}\beta_{21}$ and $\psi_{22}(\E^{\I\pi/4}\zeta)$ satisfies the
parabolic cylinder equation with $a= -\I\beta_{12}\beta_{21}$.

Let $D_a$ be the entire parabolic cylinder function of \S16.5 in \cite{whwa} and set
\begin{align*}
\psi_{11}(z) &= \begin{cases}
\E^{-3\pi\nu/4}D_{\I\nu}(-\E^{\I\pi/4} z), & \im(z)>0,\\
\E^{\pi\nu/4} D_{\I\nu}(\E^{\I\pi/4} z), & \im(z)<0, \end{cases} \\
\psi_{22}(z) &= \begin{cases}
\E^{\pi\nu/4 } D_{-\I\nu}(-\I\E^{\I\pi/4} z), & \im(z)>0,\\
\E^{-3\pi\nu/4} D_{-\I\nu}(\I \E^{\I\pi/4} z), & \im(z)<0.\end{cases}
\end{align*}
Using the asymptotic behavior
\[
D_a(z) = z^a \E^{-z^2/4} \big(1 - \frac{a(a-1)}{2z^2} + O(z^{-4})\big),\quad z\to\infty,\quad \abs{\arg(z)} \leq 3 \pi/4,
\]
shows that the choice $\beta_{12}\beta_{21}=\nu$ ensures the correct asymptotics
\begin{align*}
&\psi_{11}(z) = z^{\I\nu} \E^{-\I z^2 /4} (1 + O(z^{-2})),
&&\psi_{12}(z) = -\I\beta_{12} z^{-\I\nu} \E^{\I z^2 /4} (z^{-1} + O(z^{-3})),\\
&\psi_{21}(z) = \I\beta_{21} z^{\I\nu} \E^{- \I z^2 /4} (z^{-1} + O(z^{-3})),
&&\psi_{22}(z) = z^{-\I\nu} \E^{\I z^2 /4} (1 + O(z^{-2})),
\end{align*}
as $z\to\infty$ inside the half plane $\im(z)\ge 0$. In particular,
\[
\ti{m}^c(z) = \big(\id + \frac{1}{z} \hat{M}^c + O(z^{-2})\big) D_0(z) \quad\mbox{with} \quad
\hat{M}^c = \I \begin{pmatrix} 0 & -\beta_{12} \\ \beta_{21} & 0 \end{pmatrix}.
\]
It remains to check that we have the correct jump. Since by construction both limits $\ti{m}^c_+(z)$ and
$\ti{m}^c_-(z)$ satisfy the same differential equation \eqref{eq:odetimc}, there is a constant matrix $v$
such that $\ti{m}^c_+(z) = \ti{m}^c_-(z) v$. Moreover, since the coefficient matrix of the linear differential
equation \eqref{eq:odetimc} has trace $0$, the determinant of $\ti{m}^c_\pm(z)$ is constant and hence
$\det(\ti{m}^c_\pm(z))=1$ by our asymptotics. Moreover, a straightforward calculation shows
\[
v= \ti{m}^c_-(0)^{-1} \ti{m}^c_+(0) = \begin{pmatrix}
\E^{-2\pi\nu} & - \frac{\sqrt{2\pi}\E^{-\I\pi/4} \E^{-\pi\nu/2}}{\sqrt{\nu}\Gamma(\I\nu)} \gamma^{-1}\\
\frac{\sqrt{2\pi}\E^{\I\pi/4} \E^{-\pi\nu/2}}{\sqrt{\nu}\Gamma(-\I\nu)} \gamma & 1
\end{pmatrix}
\]
where $\gamma =\frac{\sqrt{\nu}}{\beta_{12}}=\frac{\beta_{21}}{\sqrt{\nu}}$. Here we have used
\[
D_a(0)= \frac{2^{a/2}\sqrt{\pi}}{\Gamma((1-a)/2)}, \qquad
D_a'(0)= -\frac{2^{(1+a)/2}\sqrt{\pi}}{\Gamma(-a/2)}
\]
plus the duplication formula $\Gamma(z)\Gamma(z+\frac{1}{2})= 2^{1-2z} \sqrt{\pi} \Gamma(2z)$ for
the Gamma function. Hence, if we choose
\[
\gamma= \frac{\sqrt{\nu}\Gamma(-\I\nu)}{\sqrt{2\pi}\E^{\I\pi/4} \E^{-\pi\nu/2}} r,
\]
we have
\[
v= \begin{pmatrix} 1 - |r|^2 & - \ol{r} \\ r & 1 \end{pmatrix}
\]
since $|\gamma|^2=1$. To see this use $|\Gamma(-\I\nu)|^2 = \frac{\Gamma(1-\I\nu)\Gamma(\I\nu)}{-\I\nu}
= \frac{\pi}{\nu \sinh(\pi\nu)}$, which follows from Euler's reflection formula $\Gamma(1-z)\Gamma(z)=\frac{\pi}{\sin(\pi z)}$
for the Gamma function.

In particular,
\[
\beta_{12} = \ol{\beta_{21}}  = \sqrt{\nu} \E^{\I(\pi/4-\arg(r)+\arg(\Gamma(\I\nu)))},
\]
which finishes the proof.
\end{proof}

\begin{remark}
An inspection of the proof shows that $\hat{m}^c$ is given by the solution of a
differential equation depending analytically on $\nu$. Hence, $\hat{m}^c$
depends analytically on $\nu = - \frac{1}{2\pi} \log(1 - |r|^2)$. This
implies local Lipschitz dependence on $r$ as long as $r\in\mathbb{D}$.
\end{remark}

\section{Singular integral equations}
\label{sec:sieq}

In this section we show how to transform a meromorphic vector Riemann--Hilbert problem
with simple poles at $\zeta$, $\zeta^{-1}$,
\begin{align}\nn
& m_+(z) = m_-(z) v(z), \qquad z\in \Sigma,\\ \label{eq:rhp5m}
& \res_{\zeta} m(z) = \lim_{z\to\zeta} m(z)
\begin{pmatrix} 0 & 0\\ - \zeta \gam  & 0 \end{pmatrix},\quad
\res_{\zeta^{-1}} m(z) = \lim_{z\to\zeta^{-1}} m(z)
\begin{pmatrix} 0 & \zeta^{-1} \gam \\ 0 & 0 \end{pmatrix},\\ \nn
& m(z^{-1}) = m(z) \sigI,\\ \nn
& m(0) = \begin{pmatrix} 1 & m_2\end{pmatrix},
\end{align}
where $\zeta\in(-1,0)\cup(0,1)$ and $\gam\ge 0$, into a singular integral equation.
Since we require the symmetry condition for our Riemann--Hilbert
problems we need to adapt the usual Cauchy kernel to preserve this symmetry.
Moreover, we keep the single soliton as an inhomogeneous term which will play
the role of the leading asymptotics in our applications.

\begin{hypothesis}\label{hyp:sym}
Suppose the jump data $(\Sigma,v)$ satisfy the following assumptions:
\begin{enumerate}
\item
$\Sigma$ consist of a finite number of smooth oriented finite curves in $\C$
which intersect at most finitely many times with all intersections being transversal.
\item
$\Sigma$ does not contain $0$, $\zeta^{\pm 1}$.
\item
$\Sigma$ is invariant under $z\mapsto z^{-1}$ and is oriented such that under the
mapping $z\mapsto z^{-1}$ sequences converging from the positive sided to $\Sigma$
are mapped to sequences converging to the negative side.
\item
The jump matrix $v$ is invertible and can be factorized according to $v = b_-^{-1} b_+ =
(\id-w_-)^{-1}(\id+w_+)$, where $w_\pm = \pm(b_\pm-\id)$ are continuous and satisfy
\be\label{eq:wpmsym}
w_\pm(z^{-1}) = \sigI w_\mp(z) \sigI,\quad z\in\Sigma.
\ee
\end{enumerate}
\end{hypothesis}

The classical Cauchy-transform
of a function $f:\Sigma\to \C$ which is square integrable is the
analytic function $C f: \C\backslash\Sigma\to\C$ given by
\be
(C f)(z) = \frac{1}{2\pi\I} \int_{\Sigma} \frac{f(s)}{s - z} ds,\qquad z\in\C\backslash\Sigma.
\ee
Denote the non-tangential boundary values from both sides (taken possibly
in the $L^2$-sense --- see e.g.\ \cite[eq.\ (7.2)]{deiftbook}) by $C_+ f$ respectively $C_- f$.
Then it is well-known that $C_+$ and $C_-$ are bounded operators $L^2(\Sigma)\to L^2(\Sigma)$,
which satisfy $C_+ - C_- = \id$ and $C_+ C_- = 0$ (see e.g. \cite{bc}). Moreover, one has
the Plemelj--Sokhotsky formula (\cite{mu})
\[
C_\pm = \frac{1}{2} (\I H \pm \id),
\]
where
\be
(H f)(t) = \frac{1}{\pi} \dashint_\Sigma \frac{f(s)}{t-s} ds,\qquad t\in\Sigma,
\ee
is the Hilbert transform and $\dashint$ denotes the principal value integral.

In order to respect the symmetry condition we will restrict our attention to
the set $L^2_{s}(\Sigma)$ of square integrable functions $f:\Sigma\to\C^{2}$ such that
\be\label{eq:sym}
f(z^{-1}) = f(z) \sigI.
\ee
Clearly this will only be possible if we require our jump data to be symmetric as well (i.e.,
Hypothesis~\ref{hyp:sym} holds).

Next we introduce the Cauchy operator
\be
(C f)(z) = \frac{1}{2\pi\I} \int_\Sigma f(s) \Omega_\zeta(s,z)
\ee
acting on vector-valued functions $f:\Sigma\to\C^{2}$.
Here the Cauchy kernel is given by
\be
\Omega_{\zeta}(s,z) =
\begin{pmatrix} \frac{z-\zeta^{-1}}{s-\zeta^{-1}} \frac{1}{s-z} & 0 \\
0 & \frac{z-\zeta}{s-\zeta} \frac{1}{s-z} \end{pmatrix} ds =
\begin{pmatrix} \frac{1}{s-z} - \frac{1}{s-\zeta^{-1}} & 0 \\
0 & \frac{1}{s-z} - \frac{1}{s-\zeta} \end{pmatrix} ds,
\ee
for some fixed $\zeta\notin\Sigma$. In the case $\zeta=\infty$ we set
\be\label{eq:defomegainfty}
\Omega_{\infty}(s,z) =
\begin{pmatrix} \frac{1}{s-z} - \frac{1}{s} & 0 \\
0 & \frac{1}{s-z} \end{pmatrix} ds.
\ee
and one easily checks the symmetry property:
\be\label{eq:symC}
\Omega_\zeta(1/s,1/z) = \sigI \Omega_\zeta(s,z) \sigI.
\ee
The properties of $C$ are summarized in the next lemma.

\begin{lemma}
Assume Hypothesis~\ref{hyp:sym}.
The Cauchy operator $C$ has the properties, that the boundary values
$C_\pm$ are bounded operators $L^2_s(\Sigma) \to L^2_s(\Sigma)$
which satisfy
\be\label{eq:cpcm}
C_+ - C_- = \id
\ee
and
\be\label{eq:Cnorm}
(Cf)(\zeta^{-1}) = (0\quad\ast), \qquad (Cf)(\zeta) = (\ast\quad 0).
\ee
Here $\ast$ is a placeholder for an unspecified value.
Furthermore, $C$ restricts to $L^2_{s}(\Sigma)$, that is
\be
(C f) (z^{-1}) = (Cf)(z) \sigI,\quad z\in\C\backslash\Sigma
\ee
for $f\in L^2_{s}(\Sigma)$ and if $w_\pm$ satisfy \eqref{eq:wpmsym} we also have
\be \label{eq:symcpm}
C_\pm(f w_\mp)(1/z) = C_\mp(f w_\pm)(z) \sigI,\quad z\in\Sigma.
\ee
\end{lemma}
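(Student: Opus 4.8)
The plan is to reduce every claim to the classical (scalar) Cauchy theory recalled just above the lemma, together with the two structural symmetries of the modified kernel. First I would note that the modified operator differs from the componentwise classical transform only by $z$-independent shifts: the first component of $(Cf)(z)$ equals $(\mathcal{C}f_1)(z) - \frac{1}{2\pi\I}\int_\Sigma \frac{f_1(s)}{s-\zeta^{-1}}\,ds$, and the second subtracts the analogous constant at $\zeta$. Since $\zeta^{\pm1}\notin\Sigma$ by Hypothesis~\ref{hyp:sym}(ii), these subtracted numbers are finite and, being independent of $z$, have identical limits from both sides of $\Sigma$, so they contribute nothing to the difference of boundary values. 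Hence boundedness of $C_\pm$ on $L^2(\Sigma)$ and the jump relation \eqref{eq:cpcm}, $C_+ - C_- = \id$, are inherited verbatim from the classical statements; that $C_\pm$ actually restrict to the symmetric subspace $L^2_s(\Sigma)$ is exactly the invariance property proved below. The normalization \eqref{eq:Cnorm} is then immediate by direct evaluation: at $z=\zeta^{-1}$ the first diagonal entry of the kernel becomes $\frac{1}{s-\zeta^{-1}}-\frac{1}{s-\zeta^{-1}}=0$, killing the first component, and the symmetric choice $z=\zeta$ kills the second.

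For the restriction property I would begin from $(Cf)(z^{-1}) = \frac{1}{2\pi\I}\int_\Sigma f(s)\,\Omega_\zeta(s,z^{-1})$ and perform the substitution $s\mapsto s^{-1}$. The contour is preserved because $\Sigma$ is inversion-invariant (Hypothesis~\ref{hyp:sym}(iii)), and the measure-carrying kernel symmetry \eqref{eq:symC}, $\Omega_\zeta(1/s,1/z)=\sigI\,\Omega_\zeta(s,z)\,\sigI$, transforms the integrand after substitution. Combined with the defining symmetry $f(s^{-1})=f(s)\sigI$ of $f\in L^2_s(\Sigma)$ and the identity $\sigI^2=\id$, the two conjugating factors collapse to a single $\sigI$ acting on the right, yielding $(Cf)(z^{-1})=(Cf)(z)\sigI$. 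The point worth checking is that \eqref{eq:symC} already absorbs the Jacobian $d(1/s)=-s^{-2}\,ds$ of the substitution; a short partial-fraction computation confirms entrywise that inverting both arguments of one diagonal entry reproduces the other diagonal entry, so no stray factor survives.

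Finally, the boundary-value symmetry \eqref{eq:symcpm} uses the same inversion substitution, but now at the level of the non-tangential limits, and here the orientation clause of Hypothesis~\ref{hyp:sym}(iii) does the essential work: since inversion maps sequences approaching $\Sigma$ from the positive side to sequences approaching from the negative side, the $+$ and $-$ limits are interchanged under $z\mapsto z^{-1}$, which is precisely the source of the $C_\pm\to C_\mp$ swap. Feeding in \eqref{eq:wpmsym}, $w_\pm(z^{-1})=\sigI w_\mp(z)\sigI$, exchanges $w_\mp$ for $w_\pm$, while the kernel symmetry \eqref{eq:symC} again supplies the outer $\sigI$ factors, giving $C_\pm(fw_\mp)(1/z)=C_\mp(fw_\pm)(z)\sigI$. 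I expect the main obstacle to be exactly this last bookkeeping: one must verify that the positive/negative-side convention, the direction in which $\Sigma$ is traversed, and the sign of the Jacobian all conspire so that the interchange of $C_+$ and $C_-$ emerges with no leftover sign. The boundedness and normalization parts are routine; all the real content lies in getting these two inversion symmetries to close up cleanly.
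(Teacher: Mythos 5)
Your proposal is correct and follows essentially the same route as the paper, whose proof is the one-line remark that everything follows from the kernel symmetry \eqref{eq:symC} together with the fact that $C$ inherits all properties from the classical Cauchy operator. You have simply made explicit the two reductions the paper leaves implicit (the $z$-independent shift relating $\Omega_\zeta$ to the classical kernel, and the inversion substitution with the orientation clause producing the $C_\pm\to C_\mp$ swap), and your bookkeeping checks out.
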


\begin{proof}
Everything follows from \eqref{eq:symC} and the fact that $C$ inherits all properties from
the classical Cauchy operator.
\end{proof}

We have thus obtained a Cauchy transform with the required properties.
Following Section 7 and 8 of \cite{bc} respectively \cite{krt}, we can solve our Riemann--Hilbert problem using this
Cauchy operator.

Introduce the operator $C_w: L_s^2(\Sigma)\to L_s^2(\Sigma)$ by
\be
C_w f = C_+(fw_-) + C_-(fw_+),\quad f\in L^2_s(\Sigma)
\ee
and recall from Lemma~\ref{lem:singlesoliton} that the unique
solution corresponding to $v\equiv \id$ is given by
\[
m_0(z)= \begin{pmatrix} f(z) & f(\frac{1}{z}) \end{pmatrix}, \quad
f(z) = \frac{1}{1 - \zeta^2 + \gamma}
\left(\gamma \zeta^2 \frac{z-\zeta^{-1}}{z - \zeta} + 1 - \zeta^2\right)
\]
Observe that for $\gam=0$ we have $f(z)=1$ and for $\gam=\infty$ we have
$f(z)= \zeta^2\frac{z-\zeta^{-1}}{z - \zeta}$. In particular, $m_0(z)$ is uniformly bounded
away from $\zeta$ for all $\gam\in[0,\infty]$.

Then we have the next result.

\begin{theorem}\label{thm:cauchyop}
Assume Hypothesis~\ref{hyp:sym}.

Suppose $m$ solves the Riemann--Hilbert problem \eqref{eq:rhp5m}. Then
\be\label{eq:mOm}
m(z) = (1-c_0) m_0(z) + \frac{1}{2\pi\I} \int_\Sigma \mu(s) (w_+(s) + w_-(s)) \Omega_\zeta(s,z),
\ee
where
\[
\mu = m_+ b_+^{-1} = m_- b_-^{-1} \quad\mbox{and}\quad
c_0= \left( \frac{1}{2\pi\I} \int_\Sigma \mu(s) (w_+(s) + w_-(s)) \Omega_\zeta(s,0) \right)_{\!1}.
\]
Here $(m)_j$ denotes the $j$'th component of a vector.
Furthermore, $\mu$ solves
\be\label{eq:sing4muc}
(\id - C_w) \mu = (1-c_0) m_0.
\ee

Conversely, suppose $\ti{\mu}$ solves
\be\label{eq:sing4mu}
(\id - C_w) \ti{\mu} = m_0,
\ee
and
\[
\ti{c}_0= \left( \frac{1}{2\pi\I} \int_\Sigma \ti{\mu}(s) (w_+(s) + w_-(s)) \Omega_\zeta(s,0) \right)_{\!1} \ne -1,
\]
then $m$ defined via \eqref{eq:mOm}, with $(1-c_0)=(1+\ti{c}_0)^{-1}$ and $\mu=(1+\ti{c}_0)^{-1}\ti{\mu}$,
solves the Riemann--Hilbert problem \eqref{eq:rhp5m} and $\mu= m_\pm b_\pm^{-1}$.
\end{theorem}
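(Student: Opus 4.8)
The plan is to prove the two implications separately, leaning on two structural properties of the symmetry-adapted Cauchy operator established above: the jump relation $C_+-C_-=\id$, and the pole-annihilating normalization \eqref{eq:Cnorm}, which gives $(Cf)(\zeta)=(\ast\;0)$ and $(Cf)(\zeta^{-1})=(0\;\ast)$. Throughout I abbreviate the integral term in \eqref{eq:mOm} by $\Psi(z)=\frac{1}{2\pi\I}\int_\Sigma \mu(s)(w_+(s)+w_-(s))\,\Omega_\zeta(s,z)$, so that the claimed representation reads $m=(1-c_0)m_0+\Psi$.

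For the converse direction I would first compute the boundary values of $\Psi$. Splitting $w_++w_-$ and applying $C_+-C_-=\id$ to each piece gives
\be
\Psi_+ = C_w\mu + \mu w_+, \qquad \Psi_- = C_w\mu - \mu w_-.
\ee
Inserting the integral equation in the form $C_w\mu=\mu-(1-c_0)m_0$ then collapses the inhomogeneity and yields $m_\pm=\mu(\id\pm w_\pm)=\mu b_\pm$, which is simultaneously the jump condition $m_+=m_-v$ and the asserted identity $\mu=m_\pm b_\pm^{-1}$. The remaining conditions follow one by one. Since $\Psi$ is analytic at $\zeta,\zeta^{-1}$, both residues of $m$ come entirely from $(1-c_0)m_0$, while \eqref{eq:Cnorm} gives $\Psi(\zeta)=(\ast\;0)$, which is annihilated by the pole matrix $\begin{pmatrix}0&0\\-\zeta\gam&0\end{pmatrix}$; hence the single-soliton pole condition carried by $m_0$ carries over to $m$, and likewise at $\zeta^{-1}$. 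The symmetry $m(z^{-1})=m(z)\sigI$ holds because $m_0$ is symmetric and $\mu(w_++w_-)\in L^2_s(\Sigma)$ by \eqref{eq:wpmsym}, so $\Psi$ inherits symmetry from the Cauchy operator; the normalization $m(0)=(1\;\ast)$ holds because $m_0(0)=(1\;\ast)$ and, by definition, $c_0$ is precisely the first component of $\Psi(0)$. Finally, the relation $(1-c_0)=(1+\tilde c_0)^{-1}$ is self-consistent: applying $\id-C_w$ to $\mu=(1+\tilde c_0)^{-1}\tilde\mu$ produces $(1+\tilde c_0)^{-1}m_0$, and recomputing the first component of $\Psi(0)$ from this $\mu$ returns $c_0=\tilde c_0/(1+\tilde c_0)$, which rearranges to the claimed identity.

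For the forward direction, given a solution $m$ I would set $\mu=m_+b_+^{-1}=m_-b_-^{-1}$, the two expressions agreeing by the jump condition. Then $m_+-m_-=\mu(b_+-b_-)=\mu(w_++w_-)=\Psi_+-\Psi_-$, so $m-\Psi$ has no jump on $\Sigma$ and is meromorphic on $\C$ with poles only at $\zeta,\zeta^{-1}$, since $\Psi$ is analytic there. Using the symmetry to write $m-\Psi=(u(z)\;u(1/z))$, the function $u$ is rational on $\hat\C$ with a single simple pole at $\zeta$, hence $u(z)=\alpha+\beta/(z-\zeta)$. The pole condition and symmetry impose a homogeneous linear constraint on $(\alpha,\beta)$ whose solution space is one-dimensional and consists exactly of the scalar multiples of $m_0$ from Lemma~\ref{lem:singlesoliton}; matching the first component at $z=0$, which equals $1-c_0$, fixes the scalar and yields $m-\Psi=(1-c_0)m_0$, i.e.\ \eqref{eq:mOm}. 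The integral equation \eqref{eq:sing4muc} then falls out of $\mu b_-=m_-=(1-c_0)m_0+\Psi_-=(1-c_0)m_0+C_w\mu-\mu w_-$ after cancelling $\mu w_-$.

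The step I expect to be the main obstacle is this decoupling in the forward direction, namely verifying that the homogeneous pole-plus-symmetry conditions really do cut the candidate rational functions down to the one-dimensional span of $m_0$, so that $m-\Psi$ carries no spurious meromorphic part. This is precisely where the explicit single-soliton solution of Lemma~\ref{lem:singlesoliton} and the pole-annihilating property \eqref{eq:Cnorm} of the adapted kernel do the genuine work; by contrast, the boundary-value manipulations and the tracking of the normalization constants $c_0$ and $\tilde c_0$ are essentially bookkeeping, controlled throughout by \eqref{eq:symcpm}.
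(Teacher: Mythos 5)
Your proposal is correct and follows essentially the same route as the paper: the boundary-value computation $m_\pm=(1-c_0)m_0+C_w\mu\pm\mu w_\pm=\mu b_\pm$ via $C_+-C_-=\id$, the use of \eqref{eq:Cnorm} for the pole conditions, and a Liouville-type identification of the jump-free part in the forward direction. The only cosmetic difference is that where you explicitly parametrize $m-\Psi$ as $\bigl(u(z)\ \ u(1/z)\bigr)$ with $u(z)=\alpha+\beta/(z-\zeta)$ and solve the resulting linear constraints, the paper instead observes that $m-\ti{m}$ solves the vanishing problem and invokes the uniqueness clause of Lemma~\ref{lem:singlesoliton}, which is proved by exactly the computation you carry out by hand.
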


\begin{proof}
If $m$ solves \eqref{eq:rhp5m} and we set $\mu = m_\pm b_\pm^{-1}$,
then $m$ satisfies an additive jump given by
\[
m_+ - m_- = \mu (w_+ + w_-).
\]
Hence, if we denote the left hand side of \eqref{eq:mOm} by $\ti{m}$, both functions satisfy the same additive
jump. Furthermore, Hypothesis~\ref{hyp:sym} implies that $\mu$ is symmetric and hence so is $\ti{m}$. 
Using \eqref{eq:Cnorm} we also see that $\ti{m}$ satisfies the same pole conditions as $m_0$. In summary,
$m-\ti{m}$ has no jump and solves \eqref{eq:rhp5m} with $v\equiv \id$ except for the normalization which is given by 
$m(0)-\ti{m}(0)=\begin{pmatrix} 0 & *\end{pmatrix}$. Hence Lemma~\ref{lem:singlesoliton} implies $m-\ti{m} =0$.

Moreover, if $m$ is given by \eqref{eq:mOm}, then \eqref{eq:cpcm} implies
\begin{align} \label{eq:singtorhp}
m_\pm &= (1-c_0)m_0 + C_\pm(\mu w_-) + C_\pm(\mu w_+) \\
\nn &= (1-c_0)m_0 + C_w(\mu) \pm \mu w_\pm \\
\nn &= (1-c_0)m_0 - (\id - C_w) \mu + \mu b_\pm.
\end{align}
From this we conclude that $\mu = m_\pm b_\pm^{-1}$ solves \eqref{eq:sing4muc}.

Conversely, if $\ti{\mu}$ solves \eqref{eq:sing4mu}, then set
\[
\ti{m}(z) = m_0(z) + \frac{1}{2\pi\I} \int_\Sigma \ti{\mu}(s) (w_+(s) + w_-(s)) \Omega_\zeta(s,z),
\]
and the same calculation as in \eqref{eq:singtorhp} implies
$\ti{m}_\pm= \ti{\mu} b_\pm$, which shows that $m = (1+\ti{c}_0)^{-1} \ti{m}$ solves the
Riemann--Hilbert problem \eqref{eq:rhp5m}.
\end{proof}

Note that in the special case $\gamma=0$ we have $m_0(z)= \rI$ and
we can choose $\zeta$ as we please, say $\zeta=\infty$ such that $c_0=\ti{c}_0=0$
in the above theorem.

Hence we have a formula for the solution of our Riemann--Hilbert problem $m(z)$ in terms of
$(\id - C_w)^{-1} m_0$ and this clearly raises the question of bounded
invertibility of $\id - C_w$. This follows from Fredholm theory (cf.\ e.g. \cite{zh}):

\begin{lemma}
Assume Hypothesis~\ref{hyp:sym}.
The operator $\id-C_w$ is Fredholm of index zero,
\be
\ind(\id-C_w) =0.
\ee
\end{lemma}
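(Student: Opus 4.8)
The plan is to establish Fredholmness by constructing a two-sided parametrix modulo compact operators and then to pin down the index by a homotopy to the identity. The only decisive analytic input is a compactness statement for commutators; everything else is bookkeeping with the algebraic identities for the Cauchy projections recorded above.

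First I would exploit the properties $C_+ - C_- = \id$ together with $C_+ C_- = 0$, which (using $C_- = C_+ - \id$) also force $C_+^2 = C_+$, $C_-^2 = -C_-$, and $C_- C_+ = 0$; thus $C_+$ and $-C_-$ are complementary bounded projections on $L^2_s(\Sigma)$. The key fact I would invoke is that, since $\Sigma$ is a finite union of smooth arcs meeting transversally and $w_\pm$ are continuous by Hypothesis~\ref{hyp:sym}~(iv), the commutators $[M_{w_\pm}, C_\pm]$ of multiplication by $w_\pm$ with the boundary Cauchy operators are \emph{compact} on $L^2_s(\Sigma)$. This is the classical commutator theorem for singular integral operators with continuous coefficients, and it is exactly what the continuity assumption on $w_\pm$ is there to guarantee.

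Next I would build the parametrix. Writing $C_w = C_+ M_{w_-} + C_- M_{w_+}$, where $M_w$ denotes right multiplication by the matrix $w$, I introduce the companion operator $C_{w'} = C_+ M_{q_-} + C_- M_{q_+}$ with $q_- = \id - b_-^{-1}$ and $q_+ = b_+^{-1} - \id$; this is legitimate since $b_\pm = \id \pm w_\pm$ are invertible. Reducing every product of Cauchy operators via $C_\pm^2 = \pm C_\pm$ and $C_\pm C_\mp = 0$, and discarding all commutators $[M_{w}, C_\pm]$ as compact, a short computation yields $(\id - C_w)(\id - C_{w'}) = \id + K$ with $K$ compact. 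Running the same computation in the opposite order furnishes a left parametrix, so $\id - C_w$ is Fredholm.

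Finally, for the index I would use that $C_w$ is \emph{linear} in $w$, so the scaling $w \mapsto s w$ produces the norm-continuous family $\id - s C_w$, $s\in[0,1]$. In all the Riemann--Hilbert problems occurring here the factors $b_\pm$ are unipotent (triangular with unit diagonal), so the $w_\pm$ are nilpotent and $\id \pm s w_\pm$ stays invertible for every $s$; hence the parametrix construction applies uniformly and $\id - s C_w$ is Fredholm along the whole homotopy. Since the Fredholm index is locally constant on norm-continuous families of Fredholm operators and $\id - 0\cdot C_w = \id$ has index zero, we conclude $\ind(\id - C_w)=0$. The main obstacle is the compactness of the commutators $[M_{w_\pm}, C_\pm]$: it is what makes the parametrix collapse to $\id$ modulo compacts and what keeps the family Fredholm, so it simultaneously underpins both halves of the argument.
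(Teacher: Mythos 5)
Your argument is correct, and it reaches the conclusion by the same two-stage plan as the paper (a parametrix modulo compacts, then a homotopy to $\id$), but both decisive steps are carried out differently. For the parametrix you take the standard Beals--Coifman/Zhou companion $C_{w'}$ with $q_{\pm}$ built from $b_{\pm}^{-1}$; the paper instead uses $\id-C_{-w}=\id+C_w$, for which $(\id-C_w)(\id+C_w)=\id-C_w^2$ holds exactly, and then shows that $C_w^2=T_w$ is compact. (When $b_{\pm}$ are unipotent one has $w_{\pm}^2=0$ and $q_{\pm}=-w_{\pm}$, so the two parametrices coincide.) Your compactness input is the classical theorem that the commutators $[M_{w_{\pm}},C_{\pm}]$ of the boundary Cauchy operators with continuous multipliers are compact; since the symmetrized kernel $\Omega_\zeta$ differs from the classical diagonal Cauchy kernel only by terms with bounded kernels, this does reduce to the scalar commutator with the Hilbert transform on a piecewise smooth composed contour, which is compact for continuous symbols (approximate uniformly by Lipschitz functions, for which the commutator kernel is bounded). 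The paper instead proves compactness of the four operators $T_{\sigma_1\sigma_2}$ directly, approximating $w_{\pm}$ by rational functions via Mergelyan and deforming the inner contour off $\Sigma$; your route is shorter and quotes a standard fact, the paper's is more self-contained. The one place where you invoke something beyond Hypothesis~\ref{hyp:sym} is the index step: to keep your parametrix alive along $\id-sC_w$ you need $\id\pm s w_{\pm}$ invertible for all $s\in[0,1]$, which you secure from the unipotence of $b_{\pm}$. That property holds for every jump matrix occurring in this paper but is not part of the hypothesis as literally stated; the paper's homotopy sidesteps the issue because $C_{sw}=sC_w$, so $(\id-sC_w)(\id+sC_w)=\id-s^2C_w^2$ is a two-sided parametrix identity for every $s$ once $C_w^2$ is known to be compact. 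Apart from making that (harmless, and in the applications automatic) structural assumption explicit, your proof is complete.
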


\begin{proof}
Since one can easily check
\be
(\id-C_w) (\id-C_{-w}) = (\id-C_{-w}) (\id-C_w) = \id- T_w,
\ee
where
\[
T_w = T_{++} + T_{+-} + T_{-+} + T_{--}, \qquad
T_{\sig_1\sig_2}(f) = C_{\sig_1}[C_{\sig_2}(f w_{-\sig_2})w_{-\sig_1}] ,
\]
it suffices to check that the operators $T_{\sig_1\sig_2}$ are compact (\cite[Thm.~1.4.3]{proe}).
By Mergelyan's theorem we can approximate $w_\pm$ by rational functions and, since
the norm limit of compact operators is compact, we can assume without loss that $w_\pm$ have
an analytic extension to a neighborhood of $\Sigma$.

Indeed, suppose $f_n \in L^2(\Sigma)$ converges weakly to zero. Without loss we can assume
$f_n$ to be continuous. We will show that $\|T_w f_n\|_{L^2} \to 0$.

Using the analyticity of $w$ in a neighborhood of $\Sigma$
and the definition of $C_\pm$, we can slightly deform
the contour $\Sigma$ to some contour
$\Sigma_\pm$ close to $\Sigma$, on the left, and
have, by Cauchy's theorem,
\begin{align*}
T_{++} f_n(z) =& \frac{1}{2 \pi \I}
\int_{\Sigma_+} (C(f_n w_-)(s) w_-(s)) \Omega_\zeta(s,z).
\end{align*}
Now $(C(f_n w_-) w_-)(z) \to 0$ as $n \to \infty$.
Also
\[
|(C(f_n w_-) w_-)(z)| < const\, \|f_n\|_{L^2} \|w_- \|_{L^\infty}
< const
\]
and thus, by the dominated convergence theorem, $\|T_{++} f_n\|_{L^2} \to 0$ as
desired.

Moreover, considering $\id- \eps C_w = \id- C_{ \eps w}$ for $0 \leq \eps \leq 1$
we obtain $\ind(\id-C_w)= \ind(\id) =0$ from homotopy invariance of the index.
\end{proof}

By the Fredholm alternative, it follows that to show the bounded invertibility of $\id-C_w$
we only need to show that $\ker (\id-C_w) =0$. The latter being equivalent to
unique solvability of the corresponding vanishing Riemann--Hilbert problem in the case $\gam=0$
(where we can choose $\zeta=\infty$ such that $c_0=\ti{c}_0=0$).

\begin{corollary}
Assume Hypothesis~\ref{hyp:sym}.
A unique solution of the Riemann--Hilbert problem \eqref{eq:rhp5m} with $\gam=0$ exists if and only if
the corresponding vanishing Riemann--Hilbert problem, where the normalization condition is replaced by
$m(0)= \begin{pmatrix} 0 & m_2\end{pmatrix}$, with $m_2$ arbitrary, has at most one solution.
\end{corollary}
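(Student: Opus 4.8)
The plan is to reduce the assertion to the bounded invertibility of the singular integral operator $\id - C_w$ and then feed this into the Fredholm alternative. Throughout I work with $\gam = 0$ and, as noted after Theorem~\ref{thm:cauchyop}, take $\zeta = \infty$, so that $m_0(z) = \rI$ and $c_0 = \ti{c}_0 = 0$. With these choices Theorem~\ref{thm:cauchyop} becomes a clean dictionary: a function $m$ solves the Riemann--Hilbert problem \eqref{eq:rhp5m} (with the normalization $m(0) = \begin{pmatrix} 1 & m_2\end{pmatrix}$) if and only if $\mu = m_\pm b_\pm^{-1}$ solves $(\id - C_w)\mu = m_0$, and $m$ is recovered from $\mu$ via the representation \eqref{eq:mOm}; moreover these two assignments are mutually inverse. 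Consequently, the Riemann--Hilbert problem has a unique solution precisely when $(\id - C_w)\mu = m_0$ is uniquely solvable, i.e. when $\id - C_w$ is injective, which (by the previous lemma, where $\ind(\id - C_w) = 0$) is the same as bounded invertibility and forces $\ker(\id - C_w) = \{0\}$.

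It therefore remains to identify $\ker(\id - C_w)$ with the solution set of the vanishing problem. Given $\nu \in \ker(\id - C_w)$, I would set $m(z) = \frac{1}{2\pi\I}\int_\Sigma \nu(s)(w_+(s)+w_-(s))\Omega_\infty(s,z)$, the representation \eqref{eq:mOm} with the inhomogeneous term dropped. Repeating the computation \eqref{eq:singtorhp} with $(\id - C_w)\nu = 0$ gives $m_\pm = \nu b_\pm$, so $m$ satisfies the jump, pole, and symmetry conditions, while the normalization property \eqref{eq:Cnorm} (with $\zeta^{-1} = 0$) forces the first component of $m(0)$ to vanish; hence $m$ solves the vanishing problem. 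Conversely, from a solution $m$ of the vanishing problem I would form $\mu = m_\pm b_\pm^{-1}$; the integral $\ti{m}$ built from $\mu$ has the same additive jump $\mu(w_+ + w_-)$, the same poles and symmetry, and first component $0$ at the origin, so $m - \ti{m}$ solves the vanishing single-soliton problem and vanishes identically by the last statement of Lemma~\ref{lem:singlesoliton}. Then \eqref{eq:singtorhp} shows $(\id - C_w)\mu = 0$, i.e. $\mu \in \ker(\id - C_w)$.

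This correspondence gives $\ker(\id - C_w) = \{0\}$ if and only if the zero function is the only solution of the vanishing problem. Since the zero function always solves the vanishing problem, the phrase ``at most one solution'' is equivalent to ``only the trivial solution,'' hence to $\ker(\id - C_w) = \{0\}$; combined with the two equivalences of the first paragraph this yields the stated result. As a direct cross-check of the forward implication one can also argue by linearity: if $m$ were the unique solution of \eqref{eq:rhp5m} and $m^v \neq 0$ solved the vanishing problem, then $m + m^v$ would be a second solution of \eqref{eq:rhp5m}, since the jump, pole, and symmetry conditions are linear and adding a function whose first component vanishes at $0$ leaves the normalization $m(0) = \begin{pmatrix} 1 & m_2\end{pmatrix}$ intact.

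I expect the main obstacle to be the careful bookkeeping in this dictionary between $\ker(\id - C_w)$ and the vanishing problem: specifically, verifying through \eqref{eq:Cnorm} that the reconstructed $m$ has vanishing first component at the origin, and invoking the uniqueness of the vanishing single-soliton problem from Lemma~\ref{lem:singlesoliton} at exactly the right step. Once the dictionary is established, the equivalence is a formal consequence of the Fredholm alternative together with the index-zero property proved in the preceding lemma.
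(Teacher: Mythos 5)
Your proposal is correct and takes essentially the same route the paper intends for this corollary: Theorem~\ref{thm:cauchyop} with $\gam=0$ and $\zeta=\infty$ (so $m_0=\rI$, $c_0=\ti{c}_0=0$) to translate the Riemann--Hilbert problem into the equation $(\id-C_w)\mu=m_0$, the identification of $\ker(\id-C_w)$ with the solution set of the vanishing problem via \eqref{eq:singtorhp}, \eqref{eq:Cnorm} and the uniqueness statement of Lemma~\ref{lem:singlesoliton}, and finally the Fredholm alternative combined with the index-zero lemma. The paper leaves these steps implicit in the surrounding discussion; your write-up supplies exactly those details.
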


We are interested in comparing a Riemann--Hilbert problem for which $\|w\|_\infty$ is small with the one-soliton problem,
where
\be
\|w\|_\infty= \|w_+\|_{L^\infty(\Sigma)} + \|w_-\|_{L^\infty(\Sigma)}.
\ee
For such a situation we have the following result:

\begin{theorem}\label{thm:remcontour}
Fix a contour $\Sigma$ and choose $\zeta$, $\gam=\gam^t$, $v^t$ depending on some parameter $t\in\R$ such that
Hypothesis~\ref{hyp:sym} holds.

Assume that  $w^t$ satisfies
\be
\|w^t\|_\infty \leq \rho(t)
\ee
for some function $\rho(t) \to 0$ as $t\to\infty$. Then $(\id-C_{w^t})^{-1}: L^2_s(\Sigma)\to L^2_s(\Sigma)$ exists
for sufficiently large $t$ and the solution $m(z)$ of the Riemann--Hilbert problems \eqref{eq:rhp5m} differs
from the one-soliton solution $m_0^t(z)$ only by $O(\rho(t))$, where the error term depends on the distance of $z$
to $\Sigma \cup \{\zeta^{\pm 1}\}$.
\end{theorem}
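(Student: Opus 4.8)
The plan is to treat this as a small-norm perturbation of the one-soliton problem, exploiting that the integral operator $C_{w^t}$ becomes negligible in operator norm as $t\to\infty$, and then to read off the conclusion from the reconstruction formula of Theorem~\ref{thm:cauchyop}. First I would bound $C_{w^t}$ in operator norm. Since the boundary Cauchy operators $C_\pm$ are bounded on $L^2_s(\Sigma)$, with $c := \max(\|C_+\|,\|C_-\|)$, the definition $C_{w^t}f = C_+(f w_-^t) + C_-(f w_+^t)$ immediately yields
\[
\|C_{w^t} f\|_2 \le c\big(\|w_-^t\|_\infty + \|w_+^t\|_\infty\big)\|f\|_2 = c\,\|w^t\|_\infty\,\|f\|_2 \le c\,\rho(t)\,\|f\|_2 .
\]
Thus $\|C_{w^t}\|\le c\,\rho(t)<1$ for $t$ large, and the Neumann series shows that $\id-C_{w^t}$ is boundedly invertible on $L^2_s(\Sigma)$ with $\|(\id-C_{w^t})^{-1}\|\le(1-c\rho(t))^{-1}$.

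Next I would quantify how far the density $\tilde\mu:=(\id-C_{w^t})^{-1}m_0^t$ sits from the one-soliton density $m_0^t$. Writing $\tilde\mu-m_0^t=(\id-C_{w^t})^{-1}C_{w^t}m_0^t$ and using that the family $m_0^t$ is uniformly bounded (hence uniformly bounded in $L^2(\Sigma)$, as $\Sigma$ avoids $\zeta$ by Hypothesis~\ref{hyp:sym}) for all $\gam^t\in[0,\infty]$, the bounds above give
\[
\|\tilde\mu-m_0^t\|_2 \le \frac{c\,\rho(t)}{1-c\,\rho(t)}\,\|m_0^t\|_2 = O(\rho(t)),
\]
so in particular $\|\tilde\mu\|_2$ stays bounded uniformly in $t$. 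This uniform $L^2$-control of $m_0^t$ is precisely the mechanism that delivers the uniformity in $t$ claimed in the theorem.

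Finally I would insert $\tilde\mu$ into the reconstruction formula \eqref{eq:mOm}. For $z$ bounded away from $\Sigma\cup\{\zeta^{\pm1}\}$ (and likewise for the point $0$, which avoids $\Sigma$ and $\zeta^{\pm1}$ by Hypothesis~\ref{hyp:sym}), the kernel $\Omega_\zeta(\cdot,z)$ lies in $L^2(\Sigma)$ with norm controlled by $\dist(z,\Sigma\cup\{\zeta^{\pm1}\})$; Cauchy--Schwarz then gives
\[
\left|\frac{1}{2\pi\I}\int_\Sigma \tilde\mu(s)\big(w_+^t+w_-^t\big)(s)\,\Omega_\zeta(s,z)\right|
\le \frac{1}{2\pi}\,\|\tilde\mu\|_2\,\|w^t\|_\infty\,\|\Omega_\zeta(\cdot,z)\|_2 = O(\rho(t)),
\]
and the same estimate at $z=0$ shows $\tilde c_0=O(\rho(t))$, whence $\tilde c_0\ne-1$ and $(1+\tilde c_0)^{-1}=1+O(\rho(t))$ for large $t$. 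By the converse half of Theorem~\ref{thm:cauchyop} the genuine solution equals $(1+\tilde c_0)^{-1}$ times $\tilde m(z)=m_0^t(z)+\frac{1}{2\pi\I}\int_\Sigma\tilde\mu\,(w_+^t+w_-^t)\,\Omega_\zeta(s,z)$, so combining the last two observations gives $m(z)=m_0^t(z)+O(\rho(t))$ with the implied constant depending on $\dist(z,\Sigma\cup\{\zeta^{\pm1}\})$. I expect no serious obstacle beyond careful bookkeeping: the only delicate points are tracking the normalization constant $\tilde c_0$ through \eqref{eq:mOm} and recording how the $z$-dependence of the Cauchy kernel $\Omega_\zeta(\cdot,z)$ propagates into the error term.
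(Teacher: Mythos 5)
Your proposal is correct and follows essentially the same route as the paper: a Neumann series bound $\|C_{w^t}\|\le c\,\rho(t)$ giving invertibility, the $L^2$ estimate $\|\ti{\mu}^t-m_0^t\|_{L^2_s}=O(\rho(t))$ (using the uniform boundedness of $m_0^t$ on $\Sigma$), control of $\ti{c}_0^t=O(\rho(t))$, and then the reconstruction formula of Theorem~\ref{thm:cauchyop} together with Cauchy--Schwarz to pass to the pointwise estimate away from $\Sigma\cup\{\zeta^{\pm1}\}$. The paper's proof is just a terser version of the same argument, so no further comparison is needed.
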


\begin{proof}
By boundedness of the Cauchy transform, one has
\[
\|C_{w^t}\| \leq const \|w^t\|_\infty.
\]
Thus, by the Neumann series, we infer that $(\id-C_{w^t})^{-1}$ exists for sufficiently large $t$ and
\[
\|(\id-C_{w^t})^{-1}-\id\| = O(\rho(t)).
\]
This implies $\|\ti{\mu}^t - m_0^t\|_{L^2_s} = O(\rho(t))$ and $\ti{c}_0^t = O(\rho(t))$ (note $\ti{\mu}_0^t = \mu_0^t = m_0^t$).
Consequently $c_0^t = O(\rho(t))$ and $\|\mu^t - m_0^t\|_{L^2_s} = O(\rho(t))$ and thus $m^t(z) - m_0^t(z) = O(\rho(t))$
uniformly in $z$ as long as it stays a positive distance away from  $\Sigma \cup \{\zeta^{\pm 1}\}$.
\end{proof}

\noindent
{\bf Acknowledgments.}
We thank Ira Egorova, Katrin Grunert, Alice Mikikits-Leitner, and Johanna Michor
for pointing out errors in a previous version of this article. Furthermore, we are indebted to Fritz Gesztesy and
the anonymous referee for valuable suggestions improving the presentation of the material.


\begin{thebibliography}{XXX}
\bibitem{bc} R. Beals and R. Coifman, {\em Scattering and inverse scattering for
first order systems}, Comm. in Pure and Applied Math. {\bf 37}, 39--90 (1984).
\bibitem{dpr} T. Dauxois, M. Peyrard, and S. Ruffo, {\em The Fermi--Pasta--Ulam 'numerical experiment':
history and pedagogical perspectives}, Eur. J. Phys. {\bf 26}, S3--S11 (2005).
\bibitem{deiftbook} P. Deift, {\em Orthogonal Polynomials and Random Matrices:
A Riemann--Hilbert Approach}, Courant Lecture Notes {\bf 3}, Amer. Math. Soc., Rhode Island, 1998.
\bibitem{dz} P. Deift and X. Zhou, {\em A steepest descent method for oscillatory
Riemann--Hilbert problems}, Ann. of Math. (2) {\bf 137}, 295--368 (1993).
\bibitem{dz2} P. Deift and X. Zhou, {\em Long-time asymptotics for integrable systems. Higher order
theory}, Comm. Math. Phys. {\bf 165-1}, 175--191 (1994).
\bibitem{dzp2} P. Deift, X. Zhou, {\em Asymptotics for the Painlev\'e II equation},
Comm. in Pure and Applied Math. {\bf 48}, 277--337 (1995).
\bibitem{diz} P. A. Deift, A. R. Its, A. R., and X. Zhou, {\em Long-time asymptotics for integrable nonlinear wave
equations}, in ``Important developments in soliton theory'', 181--204, 
Springer Ser. Nonlinear Dynam., Springer, Berlin, 1993.
\bibitem{dkv} P. Deift, T. Kriecherbauer, and S. Venakides, {\em Forced lattice vibrations. I, II},
Comm. Pure Appl. Math. {\bf 48:11}, 1187--1249, 1251--1298 (1995). 
\bibitem{dlt} P. Deift, L. C. Li, and C. Tomei, {\em Toda flows with infinitely many variables},
J. Funct. Anal. {\bf 64}, 358--402 (1985).
\bibitem{dvz} P. Deift, S. Venakides, and X. Zhou, {\em The  collisionless shock
region for  the long time behavior of solutions of the KdV equation},
Comm. in Pure and Applied Math. {\bf 47}, 199--206 (1994).
\bibitem{dkkz} P. Deift, S. Kamvissis, T. Kriecherbauer, and X. Zhou,
{\em The Toda rarefaction problem}, Comm. Pure Appl. Math. {\bf 49}, no. 1, 35--83 (1996).
\bibitem{emtist} I. Egorova, J. Michor, and G. Teschl,
{\em Inverse scattering transform for the Toda hierarchy with quasi-periodic background},
Proc. Amer. Math. Soc. {\bf 135}, 1817--1827 (2007).
\bibitem{emtsr} I. Egorova, J. Michor, and G. Teschl,
{\em Soliton solutions of the Toda hierarchy on quasi-periodic background revisited},
Math. Nach. (to appear).
\bibitem{fad} L. Faddeev and L. Takhtajan, {\em Hamiltonian Methods in the
Theory of Solitons}, Springer, Berlin, 1987.
\bibitem{fpu} E. Fermi, J. Pasta, S. Ulam, {\em Studies of Nonlinear Problems},
Collected Works of Enrico Fermi, University of Chicago Press, Vol.II,978--988,1965.
Theory, Methods, and Applications, 2nd ed., Marcel Dekker, New York, 2000.
\bibitem{fl1} H. Flaschka, {\em The Toda lattice. I. Existence of integrals}, Phys. Rev. B {\bf 9},
1924--1925 (1974).
\bibitem{ggkm} C. S. Gardner, J. M. Green, M. D. Kruskal, and R. M. Miura,
{\em A method for solving the Korteweg-de Vries equation}, Phys. Rev.
Letters {\bf 19}, 1095--1097 (1967).
\bibitem{ghmt} F. Gesztesy, H. Holden, J. Michor, and G. Teschl, {\em Soliton Equations and
Their Algebro-Geometric Solutions. Volume II: $(1+1)$-Dimensional
Discrete Models}, Cambridge Studies in Advanced Mathematics {\bf 114}, Cambridge
University Press, Cambridge, 2008.
\bibitem{its} A.R. Its, {\em Asymptotics of solutions of the nonlinear Schr\"odinger equation and
isomonodromic deformations of systems of linear differential equations}, Soviet Math. Dokl. {\bf 24},
452--456 (1981).
\bibitem{km} S. Kamvissis, {\em On the long time behavior of the doubly infinite Toda
lattice under initial data decaying at infinity,}
Comm. Math. Phys., {\bf 153-3}, 479--519 (1993).
\bibitem{km2} S. Kamvissis, {\em On the Toda shock problem}, Physica D {\bf 65}, 242--266 (1993).
\bibitem{kt} S. Kamvissis and G. Teschl, {\em Stability of periodic soliton equations under short range perturbations},
Phys. Lett. A, {\bf 364-6}, 480--483 (2007).
\bibitem{kt2} S. Kamvissis and G. Teschl, {\em Stability of the periodic Toda lattice under short range perturbations}, \arxiv{0705.0346}.
\bibitem{kt3} S. Kamvissis and G. Teschl, {\em Stability of the periodic Toda lattice: Higher order asymptotics}, \arxiv{0805.3847}.
\bibitem{krt} H. Kr\"uger and G. Teschl, {\em Long-time asymptotics for the Toda lattice in the soliton region}, Math. Z. (to appear).
\bibitem{krt2} H. Kr\"uger and G. Teschl, {\em Long-time asymptotics for the periodic Toda lattice in the soliton region}, \arxiv{0807.0244}.
\bibitem{lax} P. D. Lax {\em Integrals of nonlinear equations of evolution and
solitary waves}, Comm. Pure and Appl. Math. {\bf 21}, 467--490 (1968).
\bibitem{ma} S. V. Manakov, {\em Nonlinear Frauenhofer diffraction}, Sov. Phys. JETP {\bf 38:4}, 693--696 (1974).
\bibitem{mnt} J. Michor, I. Nenciu, and G. Teschl, {\em Long-time asymptotics of the Toda lattice in the collisionless shock region},
in preparation.
\bibitem{mo} J. Moser, {\em Finitely many mass points on the line under the influence of an exponential potential -- an integrable system}, in
``Dynamical systems, theory and applications'', 467--497. Lecture Notes in Phys. {\bf 38}, Springer, Berlin, 1975.
\bibitem{mu} N. I. Muskhelishvili, {\em Singular Integral Equations}, P. Noordhoff Ltd.,
Groningen, 1953.
\bibitem{nh} V. Yu. Novokshenov and I. T. Habibullin, {\em Nonlinear differential-difference
schemes integrable by the method of the inverse scattering problem. Asymptotics of the
solution for $t\to\infty$}, Sov. Math. Doklady {\bf 23/2}, 304--307 (1981).
\bibitem{pal} R. S. Palais, {\em The symmetries of solitons},  Bull. Amer. Math. Soc.,
{\bf 34}, 339--403 (1997).
\bibitem{proe} S. Pr\"ossdorf, {\em Some Classes of Singular Equations}, North-Holland,
Amsterdam, 1978.
\bibitem{rus} J. S. Russel, {\em Report on waves}, 14th Mtg. of the British Assoc. for
the Advance of Science, John Murray, London, pp. 311--390 + 57 plates, (1844).
\bibitem{tist} G. Teschl, {\em Inverse scattering transform for the Toda hierarchy},
Math. Nach. {\bf 202}, 163--171 (1999).
\bibitem{tivp} G. Teschl, {\em On the initial value problem of the Toda and Kac-van Moerbeke
hierarchies}, in "Differential Equations and Mathematical Physics", R. Weikard and G. Weinstein (eds.), 375-384, AMS/IP Studies in Advanced Mathematics {\bf 16}, Amer. Math. Soc., Providence, 2000.
\bibitem{tjac} G. Teschl, {\em Jacobi Operators and Completely Integrable Nonlinear Lattices}, Math. Surv. and Mon. {\bf 72}, Amer. Math. Soc., Rhode Island, 2000.
\bibitem{taet} G. Teschl, {\em Almost everything you always wanted to know about the
Toda equation}, Jahresber. Deutsch. Math.-Verein. {\bf 103}, no. 4, 149--162 (2001).
\bibitem{ta} M. Toda, {\em Theory of Nonlinear Lattices}, 2nd enl. ed.,
Springer, Berlin, 1989.
\bibitem{vdo} S. Venakides, P. Deift, and R. Oba, {\em The Toda shock problem}, Comm. in Pure and Applied Math. {\bf 44}, 1171--1242 (1991).
\bibitem{whwa} E. T. Whittaker and G. N. Watson, {\em A Course of Modern Analysis}, 4th ed.,
Cambridge University Press, Cambridge,1927.
\bibitem{zakr} N. J. Zabusky and M. D. Kruskal,
{\em Interaction of solitons in a collisionless plasma and the recurrence of initial states},
Phys. Rev. Lett. {\bf 15}, 240--243 (1963).
\bibitem{zh} X. Zhou, {\em The Riemann--Hilbert problem and inverse scattering},
SIAM J. Math. Anal.  {\bf 20-4}, 966--986 (1989).
\end{thebibliography}
\end{document}